\newtheorem{theorem}{Theorem}[section]
\newtheorem{proposition}[theorem]{Proposition}
\newtheorem{lemma}[theorem]{Lemma}
\theoremstyle{definition}
\newtheorem{definition}[theorem]{Definition}
\newtheorem{notation}[theorem]{Notation}
\newtheorem{remark}[theorem]{Remark}
\newcommand {\Z} {\mathbb{Z}}
\newcommand {\R} {\mathbb{R}}
\newcommand {\RR} {\mathcal{R}}
\newcommand {\T} {\mathbb{T}}
\newcommand {\E} {\mathbb{E}}
\newcommand {\C} {\mathbb {C}}
\newcommand {\M} {\mathcal {M}}
\newcommand {\Scircle} {\mathcal {S}^{1}}
\newcommand {\eigspcdim} {\mathcal{N}}
\newcommand {\eigval} {E}
\newcommand{\len}{\mathcal {L}}
\newcommand{\var}{\operatorname{Var}}
\newcommand{\tr}{\operatorname{tr}}
\newcommand{\rk}{\operatorname{rk}}
\newcommand{\vol}{\operatorname{Vol}}
\newcommand{\length}{\operatorname{length}}
\newcommand{\sing}{B}
\newcommand{\nonsing}{\T\setminus\sing}
\newcommand{\meas}{\operatorname{meas}}
\def\given{\left.\vphantom{\hbox{\Large (}}\right|}
\begin{document}

\begin{abstract}
Using the spectral multiplicities of the standard torus, we
endow the Laplace eigenspaces with Gaussian probability measures.
This induces a notion of random Gaussian Laplace eigenfunctions
on the torus (``arithmetic random waves'').  We study the  %mk full stop and  ``)'' interchanged
distribution of
the nodal length of random eigenfunctions for large eigenvalues, %mk changed high to large
%It is standard to compute that the expected length is commensurable
%with the square root of the eigenvalue.
and our primary result is that the asymptotics for the variance is
{\em non-universal}, and is intimately related to the arithmetic of
lattice points lying on a circle with radius corresponding to the
energy.
\end{abstract}

\title[Nodal length fluctuations]
{Nodal length fluctuations for arithmetic random waves \\
%\today
%June 3, 2011
}

%\date{\today}
%\date{August 17, 2011}
%\date{November 11, 2011}

%mk2: put url to the end, after the address
 \author{Manjunath Krishnapur}
 \address{Department of Mathematics, Indian Institute of Science, Bangalore 560012}
 \email{manju@math.iisc.ernet.in}
\urladdr{http://math.iisc.ernet.in/\~{}manju}

\author{P\"ar Kurlberg}
\address{Department of Mathematics, Royal Institute of Technology,
 SE-100 44 Stockholm, Sweden}
 \email{kurlberg@math.kth.se}
\urladdr{www.math.kth.se/\~{ }kurlberg}

\author{Igor Wigman}
\address{Cardiff University, Wales, UK}
 \email{wigmani@cardiff.ac.uk}
\urladdr{www.cardiff.ac.uk/maths/subsites/wigman/index.html}

\thanks{M.K.  was supported in part by DST and UGC through DSA-SAP-PhaseIV. \\
P.K.  was  supported in part by grants from
the G\"oran Gustafsson Foundation,
the Knut and Alice Wallenberg foundation, and the Swedish Research
Council. \\ I.W.  was supported in part
by grant KAW 2005.0098 from the Knut and Alice Wallenberg
Foundation, and EPSRC grant under the First Grant Scheme. }

%\begin{verbatim}
%To send the preprint to Zeev Rudnick, Steve Zelditch, Stephane Nonnenmacher, Michael Berry, Mark Dennis,
%Domenico Marinucci, Dima Jakobson, Iosif Polterovich, Alexander Shnirelman, John Toth, Jonathan Taylor, Robert Adler.
%Carolyn Gordon, Webb, M Sodin, F Nazarov.
%\end{verbatim}

\maketitle

%\tableofcontents

\section{Introduction}

%\subsection{Background}

%``original'': \hrule

The purpose of this paper is to investigate the variance of the
fluctations of nodal lengths of random Laplace eigenfunctions on the standard %mk2: added `random'
$2$-torus $\T := \R^{2}/\Z^{2}$.
The nodal set of a function $f$ is simply the zero set of $f$, and
if $f : \T \to \R$ is a  Laplace eigenfunction, i.e., if $f$ is non-constant and  %mk2: rephrased, since eigenfunction is always non-constant
\begin{equation}
\label{eq:Laplace eqtn}
\Delta f + Ef = 0, \quad E >0,
\end{equation}
then the nodal set of $f$ consists of a union of smooth curves outside a finite
set of singular points (see ~\cite{Cheng}) and hence
$\length(f^{-1}(0))$, the {\em nodal length of $f$}, is well defined.

A fundamental conjecture by Yau \cite{Y1,Y2} asserts that for any
smooth compact Riemannian manifold $M$, there exist constants
$c_{2}(M)\ge c_{1}(M)>0$ such that
\begin{equation}
\label{eq:Yau conj}
c_{1}(M) \cdot \sqrt{E } \leq \vol(f^{-1}(0)) \leq c_{2}(M) \cdot \sqrt{E }
\end{equation}
for  any Laplace eigenfunction $f$ on
$M$ with eigenvalue $E$.
By the work of Donnelly \& Fefferman \cite{Donnelly-Fefferman} and  %mk removed hyphens and put \&
Br\"uning \& Gromes \cite{Bruning,Bruning-Gromes}, Yau's conjecture  %mk added ``to be true''
is known to be true for manifolds with real analytic metrics, and in particular %mk metric->metrics
for $M=\T$.
% (see Section~\ref{sec:background} for a further discussion
%of the conjecture.

%\subsection{Our model}
%For the standard $2$-dimensional torus $\T = \R^{2}/\Z^{2}$ the

For $\T= \R^{2}/\Z^{2}$ the sequence of eigenvalues, or energy levels,
are related to integers  expressible as a sum of two integer squares;
if we define
%\begin{equation*}
$
S := \{ n :\: n=a^2+b^2,\, a,b\in\Z \}, %mk removed the redundant \in \Z after n.
$
%\end{equation*}
the eigenvalues are all of the form
\begin{equation}
\label{eq:eigval=4pi^2 n}
\eigval_{n} := 4\pi^2 n, \quad n \in S.
\end{equation}
For $n \in S$, let
\begin{equation*}
\Lambda_{n} :=
%\Z^{2}\cap \partial B(0,\sqrt{n}) =
\left\{ \lambda\in\Z^{2}: \|\lambda \|^2 = n  \right\}
\end{equation*}
denote the corresponding frequency set.
Using the standard notation
$e(z):= \exp(2\pi i z)$, the $\C$-eigenspace $\mathcal{E}_{n}$
corresponding to $E_{n}$ is
spanned by the $L^{2}$-orthonormal set of functions
$\left\{ e\left(\langle \lambda,\, x \rangle\right)
\right\}_{\lambda\in\Lambda_{n}}$.
The dimension of $\mathcal{E}_{n}$, denoted by
\begin{equation*}
\eigspcdim_{n} := \dim \mathcal{E}_{n} = r_{2}(n) = |\Lambda_{n}|
\end{equation*}
is equal to the number $r_{2}(n)$ of different ways $n$ may be
expressed as a sum of two squares.

\subsection{Arithmetic random waves}
\label{sec:arithm-rand-waves}

The set $\Lambda_n$ can be identified with the set of
lattice points lying on a circle with radius $\sqrt{n}$, and its properties
are intimately
related to representations of integers by the quadratic form
$x^{2}+y^{2}$.  The frequency set is thus of arithmetic nature. A
particular consequence is that the sequence of spectral multiplicities
$\{\eigspcdim_{n}\}_{n \geq 1}$ is {\em unbounded}.
It is thus natural to consider properties of ``generic'', or
``random'', eigenfunctions $f_{n} \in \mathcal{E}_{n}$, and our
primary interest is the high energy asymptotics of the distribution of
their nodal length $\len(f_{n}) $ as $n$ tends to
infinity in such a way that $\eigspcdim_{n} \to \infty$.
More precisely,
let $f_{n}:\T\rightarrow\R$ be the random Gaussian field
of (real valued) $\mathcal{E}_{n}$-functions with eigenvalue
$\eigval_{n}$, i.e.,
%, sometimes referred to as an ``arithmetic random wave":
\begin{equation}
\label{eq:fn def}
f_{n}(x) = \frac{1}{\sqrt{2\eigspcdim_{n}}} \sum\limits_{\lambda\in\Lambda_{n}} a_{\lambda}
e\left(\langle \lambda,\, x \rangle\right),
\end{equation}
where $a_{\lambda} = b_{\lambda}+i c_{\lambda}$ are independent standard complex
Gaussian random variables, %mk changed the order of words
save for the relations $a_{-\lambda} = {\overline a_{\lambda}}$. This just means that
$b_{\lambda},c_{\lambda} \sim \mathcal{N}(0,1)$  are standard real Gaussians satisfying the relation
$b_{-\lambda}=b_{\lambda}$, $c_{-\lambda} = - c_{\lambda}$ and
otherwise independent. Our object of study is the random variable
\begin{equation*}
\len_{n} := \len (f_{n}) = \length(f_{n}^{-1}(0)),
\end{equation*}
henceforth called the {\em nodal length} of $f_{n}$. %mk: In the above paragraph changed the phrasing a little.

\subsection{Prior work on this model} In this setting, Rudnick and Wigman ~\cite{RW} computed the expected
nodal length of $f_{n}$ to be $ \E[\len_{n}] =
\frac{1}{2\sqrt{2}} \cdot \sqrt{E_{n}}$, in agreement with Yau's conjecture %mk consistence--> agreement
\eqref{eq:Yau conj}. They also gave the bound
\begin{equation}
\label{eq:var << E/sqrt(N)}
\var(\len_{n}) = O\left(\frac{\eigval_{n}}{\sqrt{\eigspcdim_{n}}}
\right)
\end{equation}
for the variance, and conjectured that the stronger bound
\begin{equation}
\label{eq:var=O(E/N) conj}
\var(\len_{n}) = O\left(\frac{\eigval_{n}}{\eigspcdim_{n}} \right)
\end{equation}
holds. A nice consequence of \eqref{eq:var << E/sqrt(N)} is that $\len(f_{n})$ concentrates around its mean. More precisely, there is a sequence $\delta_{n}\to 0$ such that
$$\mathbf P\left((1-\delta_{n})\frac{\sqrt{E_{n}}}{2\sqrt{2}}\le \len(f_{n})  \le (1+\delta_{n})\frac{\sqrt{E_{n}}}{2\sqrt{2}} \right) \to 1
\quad \text{as $\eigspcdim_{n} \to \infty$.}
$$  %mk the previously written paragraph had confused in probability with ``with probability one''. Have corrected it and made a couple of minor changes. Said ``concentrates around its mean'' instead of ``concentration of measure'', because the latter conventionally suggests exponential concentration...the previous paragraph is commented out below

%A nice consequence of \eqref{eq:var << E/sqrt(N)} is a
%``concentration of measure''-type result for $\len(f_{n})$:
%with probability one,
%$$\len(f_{n})  = \frac{\sqrt{E_{n}}}{2\sqrt{2}} \cdot
%(1+o(1))
%\quad \text{as $\eigspcdim_{n} \to \infty$.}
%$$

In this paper we shall determine the leading order asymptotic of
$\var(\len_{n})$ as $\eigspcdim_{n} \to \infty$. As consequence we
improve on the conjectured bound
\eqref{eq:var=O(E/N) conj} and obtain
the {\em sharp} bounds
\begin{equation*}
%\var(\len_{n}) =
%\Theta\left(\frac{\eigval_{n}}{\eigspcdim_{n}^{2}}\right).
\frac{\eigval_{n}}{\eigspcdim_{n}^{2}} \ll \var(\len_{n}) \ll \frac{\eigval_{n}}{\eigspcdim_{n}^{2}}.
\end{equation*}
%mk2: changed O to $\Theta$ as that is sharp
It turns out that the asymptotic behaviour of the variance is {\em
  non-universal} in the sense
that it depends on the angular distribution of the points in the
frequency set $\Lambda_{n}$. In the proof, a  leading order sum involving many terms of size
$\eigval_{n}/\eigspcdim_{n}$ surprisingly {\em cancels perfectly}, and the
variance is   therefore much smaller than expected! We may say that
$\T$ exhibits ``arithmetic Berry cancellation''  (cf.
Section~\ref{sec:berrys-canc-phen}). %mk2: removed footnote and minor changes.

\subsection{Our results} In order to describe our results we shall need
some further notation.
The set $\Lambda_{n}$ induces a discrete probability measure
$\mu_n$ on the circle $\Scircle = \{z\in\C:\: |z|=1\}$ by defining
\begin{equation}
\label{eq:mun def}
\mu_{n} := \frac{1}{\eigspcdim_{n}}
\sum\limits_{\lambda\in\Lambda_{n}} \delta_{\frac{\lambda}{\sqrt{n}}},
\end{equation}
where $\delta_{x}$ is the Dirac delta measure supported at $x$.
% \comment{Move this later to where it is used! As usual, the notation
%   $\upsilon_{i} \Rightarrow \upsilon$
% stands for weak convergence of probability measures (on $\Scircle$), i.e. that
% $\int fd\upsilon_{i}\rightarrow \int f d\upsilon$ for every continuous
% (bounded) test function $f$.
% }
%
%
%Given a probability measure $\mu$ on $\Scircle$ we define its  Fourier
%transform as
The Fourier transform of $\mu_n$ is, for any $k \in \Z$, as usual
given by
%\begin{equation*}
$
%\widehat{\mu}_{n}(k) := \int\limits_{\Scircle} z^{-k}d\mu_{n}(z).
\widehat{\mu}_{n}(k) := \int_{\Scircle} z^{-k}d\mu_{n}(z).
$
%mk Changed \widehat{\mu_{n}} to \widehat{\mu}_{n} as I felt it looks better. Easy to revert if not.
For $n \in S$, we define
\begin{equation}
\label{eq:c(mu) def}
c_{n} := \frac{1+\widehat{\mu}_{n}(4)^2}{512};
\end{equation}
it is then easy to see that $c_{n}$ is real and that $c_{n} \in
[1/512,1/256]$.  (Since $\Lambda_{n}$ is invariant under the
transformations $z \to \overline{z}$ and $z \to i\cdot z$, the same
holds for $\mu_n$, hence $\widehat{\mu}_{n}(4) \in \R$.  Further,
since $\mu_n$ is a probability measure, $|\widehat{\mu}_{n}(4)| \leq
1$, and consequently $\widehat{\mu}_{n}(4)^{2} \in [0,1]$.)
%

%  \begin{proposition}
%  With notation as above, we have
%  $$
%  c_{n} = \frac{1+\widehat{\mu}_{n}(4)^2}{512}.
%  $$
%  Further, given any $\alpha \in [0,1]$, there exists
%  sequence $(n_{i})_{i\geq 1}$ of elements of $S$ such that as $i \to \infty$,
%  $\eigspcdim_{n_i} \to \infty$, and $\widehat{\mu}_{n_{i}}(4)^2 \to \alpha$.
%
%  \end{proposition}

We can now formulate our principal result.
\begin{theorem}
\label{thm:var(L) asymp united}
If $(n_{i})_{i\geq 1}$ is any sequence of elements in $S$ such that
$\eigspcdim_{n_i} \to \infty$, then
\begin{equation}
\label{eq:var(L) asymp united}
\var\left(\len_{n_{i}}\right) = c_{n_{i}} \cdot
\frac{\eigval_{n_{i}}}{\eigspcdim_{n_{i}}^2} (1+ o(1) ).
\end{equation}
Further, given any $c \in [1/512,1/256]$, there exists a sequence
$(n_{i})_{i\geq 1}$ of elements in $S$ such that as $i \to \infty$,
we have $\eigspcdim_{n_i} \to \infty$ together with $c_{n_{i}}\to c$
so that
$$
\var\left(\len_{n_{i}}\right) = c \cdot
\frac{\eigval_{n_{i}}}{\eigspcdim_{n_{i}}^2} (1+ o(1) ).
$$
\end{theorem}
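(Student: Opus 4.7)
The plan is to combine a Kac-Rice representation with a Wiener chaos expansion. Formally writing
\[
\len_n = \int_\T \delta(f_n(x))\,\|\nabla f_n(x)\|\,dx,
\]
I would expand the density $\delta(u)\|v\|$ in Hermite polynomials in the normalized Gaussians $(f_n,\nabla f_n)$, yielding a decomposition $\len_n = \sum_{q\ge 0}\len_n[q]$ with $\len_n[q]$ in the $q$-th Wiener chaos. Orthogonality then gives $\var(\len_n) = \sum_{q\ge 1}\var(\len_n[q])$, with odd $q$ vanishing by parity of the density. Each $\var(\len_n[q])$ becomes an integral over $\T$ of a polynomial expression in the covariance $r_n(x) = \eigspcdim_n^{-1}\sum_{\lambda\in\Lambda_n}e(\langle\lambda,x\rangle)$ and its first and second derivatives at $x$.

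The central step is what I would call \emph{arithmetic Berry cancellation}: the claim is that $\len_n[2]\equiv 0$. The projection $\len_n[2]$ is a specific linear combination of $\int_\T H_2(f_n(x))\,dx$ and $\int_\T H_2(\partial_i f_n(x)/\sqrt{\eigval_n/2})\,dx$ whose coefficients come from the Hermite expansion of $\delta(u)\|v\|$ at the appropriate Gaussian law. Using (i) that $\nabla f_n$ has scalar covariance $\tfrac{\eigval_n}{2}I_2$ because $\widehat{\mu}_n(k)=0$ unless $4\mid k$ (in particular $\widehat{\mu}_n(2)=0$), and (ii) the pointwise identity $\int_\T\|\nabla f_n\|^2\,dx = \eigval_n\int_\T f_n^2\,dx$ coming from the eigenfunction equation $\Delta f_n = -\eigval_n f_n$ and integration by parts, the three chaos terms cancel exactly. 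This is the $\T$-analogue of Berry's cancellation in the Euclidean random wave model.

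Once $\len_n[2]$ vanishes, the leading contribution to $\var(\len_n)$ is $\var(\len_n[4])$, and I expect $\sum_{q\ge 6}\var(\len_n[q])=o(\eigval_n/\eigspcdim_n^2)$ by standard estimates using $\int_\T r_n(x)^2\,dx = \eigspcdim_n^{-1}$ together with elementary Gaussian inequalities on the $L^2$-norm of the $q$-th chaos kernel. Opening the square gives $\var(\len_n[4]) = \int_\T K_4(x)\,dx$ where $K_4$ is a quartic polynomial in $r_n$ and its derivatives. Fourier expanding each factor produces a sum over quadruples $(\lambda_1,\ldots,\lambda_4)\in\Lambda_n^4$ with $\lambda_1+\cdots+\lambda_4=0$; the off-diagonal contribution (where no two $\lambda_i$ are paired as opposites) should be $o(\eigval_n/\eigspcdim_n^2)$, while the diagonal pairings $\{\lambda_1=-\lambda_2,\,\lambda_3=-\lambda_4\}$ together with their permutations yield $\eigval_n/\eigspcdim_n^2$ times a sum of angular moments of $\mu_n$. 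The $D_4$-invariance eliminates all moments of weight not divisible by $4$, and careful bookkeeping should leave precisely the coefficient $(1+\widehat{\mu}_n(4)^2)/512 = c_n$.

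For the second assertion I need sequences $n_i$ with $\eigspcdim_{n_i}\to\infty$ along which $\widehat{\mu}_{n_i}(4)^2$ attains any prescribed value in $[0,1]$. Known results on the angular distribution of Gaussian integers (work of Erd\H{o}s--Hall, Katai--K\"ornyei, Cilleruelo, and Kurlberg--Wigman) provide both sequences along which $\mu_n$ tends weakly to the uniform measure on $\Scircle$, giving $\widehat{\mu}_n(4)\to 0$ and $c_n\to 1/512$, and Cilleruelo-type sparse sequences along which $\mu_n$ concentrates on the four fourth roots of unity, giving $|\widehat{\mu}_n(4)|\to 1$ and $c_n\to 1/256$; intermediate values can be obtained by taking $n_i$ to be a product of factors from the two regimes in controlled proportion, since $\Lambda_{ab}$ is obtained by pointwise multiplication of $\Lambda_a$ and $\Lambda_b$. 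Combined with the first part this yields the full theorem. The main obstacle will be the fourth-chaos computation itself: one must execute the eight-point Gaussian Wick combinatorics, invoke $D_4$-invariance at each step to eliminate moments of weight not divisible by $4$, and recognise that the only arithmetic datum surviving to leading order is $\widehat{\mu}_n(4)^2$ with numerical coefficient exactly $1/512$.
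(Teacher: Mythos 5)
Your route via the Wiener--It\^o chaos expansion is genuinely different from the one taken in the paper (which works directly with the Kac--Rice two-point correlation function, Taylor-expands it off a small ``singular set'', and integrates term by term), and it is viable in outline: the vanishing of the second chaotic component is indeed the right formalization of the arithmetic Berry cancellation, and the fourth-chaos Wick computation does reproduce $c_n=(1+\widehat{\mu}_n(4)^2)/512$. One pleasant simplification you will find: since two circles meet in at most two points, every $(\lambda_1,\dots,\lambda_4)\in\Lambda_n^4$ with $\lambda_1+\dots+\lambda_4=0$ already splits into two opposite pairs, so there is no off-diagonal contribution to the fourth chaos at all, only an $O(\eigspcdim_n^{-1})$ relative error from degenerate tuples.

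There is, however, a genuine gap at the step where you discard the tail $\sum_{q\ge 6}\var(\len_n[q])$ ``by standard estimates using $\int_\T r_n^2\,dx=\eigspcdim_n^{-1}$ and elementary Gaussian inequalities''. The $L^2$-norm of the degree-six chaos kernel is controlled by sixth-order spectral correlations such as $\RR_6(n)=\int_\T |r_n|^6\,dx=\eigspcdim_n^{-6}|S_6(n)|$, and the elementary bound (fixing $\lambda_1,\dots,\lambda_4$ leaves at most four choices of $(\lambda_5,\lambda_6)$) gives only $|S_6(n)|=O(\eigspcdim_n^{4})$, i.e.\ $\RR_6(n)=O(\eigspcdim_n^{-2})$ --- exactly the order of magnitude of the fourth-chaos main term, not smaller. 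The second moment of $r_n$ alone cannot improve this: the diagonal solutions $\lambda_4=-\lambda_1$, $\lambda_5=-\lambda_2$, $\lambda_6=-\lambda_3$ already contribute $\eigspcdim_n^{3}$ to $|S_6(n)|$, and nothing elementary rules out $\gg\eigspcdim_n^{4}$ further solutions. Closing this gap is where essentially all of the arithmetic difficulty of the theorem lives: one needs $|S_6(n)|=o(\eigspcdim_n^{4})$ (Theorem~\ref{thm:arith corr S6}, due to Bourgain), whose proof runs through the Balog--Szemer\'edi--Gowers theorem, Freiman's structure theorem, and Chang's bound on multiplicative representations inside a generalized arithmetic progression; your proposal contains no substitute for this input, and without it the chaos expansion does not isolate the fourth chaos as the leading term. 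A smaller issue concerns the second assertion: since the Fourier coefficients multiply, $\widehat{\mu}_n(4)=\prod_{p}\widehat{\tilde\mu}_{p^{e_p}}(4)$ with each factor of modulus at most one, so naively mixing a ``Cilleruelo factor'' with an ``equidistributed factor'' drives $\widehat{\mu}_n(4)$ to $0$ rather than to an intermediate value; one instead needs a single controlled factor carrying the target value (e.g.\ a high prime power $p^{e}$ with $e\theta_p\to a$, whose angular measure is a Riemann sum for the uniform measure on an arc, giving $\widehat{\nu}_a(4)=\sin(4a)/(4a)$), as is done in Section~\ref{sec:prob-meas-scircle}.
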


%mk2: Brought in the stuff from what was Section 2.3 to here. General idea is to state as many results as possible early on and relegate the proofs to the end.
\subsection{Attainable measures} The second part of the theorem, in light of the
first one, amounts to the following: given any $\alpha \in [0,1]$,
there exists a sequence $(n_{i})_{i\geq 1}$ such that
$\widehat{\mu}_{n_{i}}(4)^{2} \to \alpha$.
We briefly describe the measures $\mu_n$ giving rise to the extremal
points as well intermediate values attainable by $c_{n}$
(cf. Section~\ref{sec:prob-meas-scircle} for full details, in
particular the precise notions of generic and thin used below.)

It is well known that the lattice points $\Lambda_{n}$ are
equidistributed
on $\Scircle$ along generic subsequences of energy
levels, see e.g. \cite{FKW}, Proposition 6.  Thus, for
$(n_{i})_{i\geq 1}$ a {\em generic} sequence of elements in $S$ the
variance is {\em minimal} in the limit since
$\widehat{\mu}_{n_{i}}(4) \to 0$, and thus $c_{n_{i}}\to 1/512$.
It is also worthwhile mentioning that the nodal length variance
of $f_{n}$ for such generic sequences differs by an order of magnitude
from the corresponding quantity for superposition of random
planar waves with same wavelength
and directions chosen uniformly on the unit circle
(especially noteworthy in light of the fact that both have the same {\em scaling}
properties.  %mk2: added 'noteworthy' in the above sentence
%Section \ref{sec:background}, in particular
 See the last paragraph of Section \ref{sec:RWM vs torus vs S2} for a more detailed
explanation).

As for the maximum, Cilleruelo ~\cite{Cil} has shown that there
are thin sequences  $(n_{i})_{i\geq 1}$, with
$\eigspcdim_{n_{i}}\rightarrow\infty$, such that $\mu_{n_{i}}$
converges weakly to an atomic probability measure supported at the $4$
symmetric points $\pm 1$, $\pm i$; hence $\widehat{\mu}_{n_{i}}(4)
\to 1$ and $c_{n_{i}} \to 1/256$.
For the intermediate values
we construct thin sequences $(n_{i})_{i\geq1}$ of elements in $S$
such that $\mu_{n_i}$ converges weakly to the uniform probability
measure supported on a union of four arcs.

More precisely, for $a\in [0,\frac{\pi}{4}]$ define a probability measure  $\nu_{a}$ on $\Scircle$ by
\begin{equation}
\label{eq:mua def}
\nu_{a} := \left(\frac{1}{4}\sum\limits_{k=0}^{3}\delta_{i^{k}} \right) \star
\tilde{\nu}_{a}
\end{equation}
where $\star$ stands for convolutions of measures, and
$\tilde{\nu}_{a}$ is the uniform measure on $[-a,a]$ (identifying
$\Scircle\cong \R/2\pi\Z$).  More explicitly,
%\begin{equation}
%\label{eq:tilde(nu) a def}
%\int_{S^1} f\left(z\right) d\tilde{\nu}_{a}(z) =
%\frac{1}{2a}\int_{-a}^{a} f\left(e^{i\theta}\right) d\theta,
%\end{equation}
\begin{equation*} %mk2: merged two displays into one
\tilde{\nu}_{a}(f) = \frac{1}{2a}\int_{-a}^{a} f\left(e^{i\theta}\right) d\theta, \qquad
%\end{equation*}
%and
%\begin{equation*}
\nu_{a}(f) = \frac{1}{8a}\sum\limits_{k=0}^{3}\int\limits_{-a+k\frac{\pi}{2}}^{a+k\frac{\pi}{2}} f\left(e^{i\theta}\right)d\theta.
\end{equation*}
For $a=0$, we shall use the notational convention that $\nu_{0} =
\frac{1}{4}\sum_{k=0}^{3}\delta_{i^{k}}$.
\begin{proposition}
\label{prop:intermd meas exist}
For every $a\in [0,\frac{\pi}{4}]$ there exists a sequence $\eigval_{n_{i}}$ of energy levels, such that
$\mu_{n_{i}} \Rightarrow \nu_{a}$ with $\nu_{a}$ as in \eqref{eq:mua def}.
In particular,
for every $b\in [0,1]$, there exists a  sequence $\eigval_{n_{i}}$ of energy levels such that
$c_{n_{i}} \rightarrow \frac{1+b}{512}.$
%mk: changed the second statement of the proposition to what is really relevant to us. Rest is added in the remark below
\end{proposition}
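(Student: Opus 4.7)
The strategy is to exhibit for each $a \in (0, \pi/4]$ an explicit sequence $n_i \in S$ whose lattice points, viewed as Gaussian integers, concentrate angularly on the four arcs $\bigcup_m [-a + m\pi/2, a + m\pi/2]$. The case $a = 0$ is exactly Cilleruelo's result \cite{Cil} cited above. The key observation is the classical identification $\Lambda_n \cong \{z \in \Z[i] : |z|^2 = n\}$, which converts questions about angles of lattice points into questions about arguments of Gaussian integer divisors, and hence about arguments of Gaussian primes.

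\textbf{Construction of the sequence for $a \in (0,\pi/4]$.} By Hecke's equidistribution theorem for angles of Gaussian primes, there exist rational primes $p \equiv 1 \pmod 4$ with Gaussian factorization $p = \varpi \bar\varpi$ such that $\theta_p := \arg(\varpi) \in (0, \pi/4)$ is arbitrarily small. For each $i \geq 1$ select such a prime $p_i$ with $\theta_{p_i} < 1/i$, set $k_i := \lfloor a/\theta_{p_i}\rfloor$, and define $n_i := p_i^{k_i}$; then $k_i \to \infty$ and $k_i \theta_{p_i} \to a$. The Gaussian integers of norm $n_i$ are precisely $u \cdot \varpi_i^j \bar\varpi_i^{k_i - j}$ for $0 \le j \le k_i$ and $u \in \{1, i, -1, -i\}$, so $\eigspcdim_{n_i} = 4(k_i+1) \to \infty$ and $\mu_{n_i}$ is the uniform probability measure on the $4(k_i+1)$ points of $\Scircle$ with arguments $(2j - k_i)\theta_{p_i} + m\pi/2$, $0 \le j \le k_i$, $0 \le m \le 3$.

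\textbf{Weak convergence and computation of $\widehat{\mu}_{n_i}(4)$.} Write $\mu_{n_i} = \nu_0 \star \rho_i$, where $\rho_i$ is the empirical measure of the $k_i+1$ equispaced angles $(2j - k_i)\theta_{p_i}$ on the arc $[-k_i\theta_{p_i}, k_i\theta_{p_i}]$. As $k_i \to \infty$ and $k_i\theta_{p_i} \to a$, a routine Riemann sum argument gives $\rho_i \Rightarrow \tilde\nu_a$, and hence by continuity of convolution $\mu_{n_i} \Rightarrow \nu_0 \star \tilde\nu_a = \nu_a$, proving the first assertion. Using \eqref{eq:mua def} and the convolution identity for Fourier coefficients, $\widehat{\nu_a}(4) = \widehat{\nu_0}(4) \cdot \widehat{\tilde\nu_a}(4) = 1 \cdot \frac{\sin(4a)}{4a}$. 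Since $x \mapsto \sin(x)/x$ is continuous and strictly decreasing from $1$ to $0$ on $[0, \pi]$, the map $a \mapsto \widehat{\nu_a}(4)^2$ takes every value in $[0,1]$ as $a$ ranges over $[0, \pi/4]$; combined with \eqref{eq:c(mu) def} this yields $c_{n_i} \to (1+b)/512$ for any prescribed $b \in [0,1]$.

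\textbf{Main obstacle.} Once the Gaussian integer viewpoint is adopted, the construction is essentially explicit, and the chief technical input is the existence of Gaussian primes of arbitrarily small argument (which is classical, and far weaker than full Hecke equidistribution). The only genuine care needed is to choose $k_i$ so that $k_i\theta_{p_i}$ actually tends to $a$ without overshooting $\pi/4$; the floor construction above handles this cleanly, and forces the four arcs of support of $\mu_{n_i}$ to remain pairwise disjoint so that no unexpected angular coincidences arise.
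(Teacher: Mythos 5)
Your proposal is correct and takes essentially the same approach as the paper: both select primes $p\equiv 1\pmod 4$ with $\theta_p\to 0$ via Hecke, set $n_i=p_i^{[a/\theta_{p_i}]}$, identify $\mu_{n_i}$ as the convolution of $\nu_0$ with the empirical measure of the equispaced angles $(2j-k_i)\theta_{p_i}$, and pass to $\nu_a$ by a Riemann-sum argument; the second assertion then follows from $\widehat{\nu}_a(4)=\sin(4a)/(4a)$ (the paper invokes continuity and the endpoint values via the intermediate value theorem, while you additionally note strict monotonicity, an inessential difference).
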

Note that the second statement follows from the first because the values of $\widehat{\nu}_{a}(4)$
ranges
over the whole of $[0,1]$ as $a$ ranges over $[0,\frac{\pi}{4}]$
(in fact, an easy computation shows that
$\widehat{\nu}_{a}(4)=\frac{\sin(4a)}{4a}$.)
Further, the extremal values $b=0$ and $b=1$ are attained by $\nu=\nu_{\frac{\pi}{4}}$, the uniform measure on $\Scircle$, and
$\nu=\nu_{0}$, the atomic symmetrized measure. The proof of Proposition~\ref{prop:intermd meas exist} will be given in Section~\ref{sec:prob-meas-scircle}.

%IW:HERE above corrected \widehat{\nu}_{a} into \widehat{\nu}_{a}(4)

%IW:HERE above added the explicit computation for the Fourier transform of \nu_{a} at 4

\subsection{Independence of eigenbasis choice and the covariance
  function} %mk2: moved the covariance discussion
                                %to right here
The random field \eqref{eq:fn def} is centered, Gaussian and stationary in the sense that for any $x_{1},\ldots, x_{k}\in\T$
and $y\in\T$, the random vector $$\left(f_{n}(x_{1}+y),\ldots ,f_{n}(x_{k}+y)
\right)\in\R^{k}$$ is a mean zero multivariate Gaussian, whose
distribution does not depend on $y$.
%As an artifact of the stationarity,
The covariance
function\footnote{The covariance function is widely
referred to as the $2$-point function in the physics literature.} $
r(x)=r_{n}(x) := \E[f_{n}(y)f_{n}(x+y)]
$ %mk2: removed a separate display for this
thus depends only on $x$, and we may express it explicitly as
\begin{equation}
\label{eq:r def}
r_{n}(x) = \frac{1}{\eigspcdim_{n}}\sum\limits_{\lambda\in\Lambda}
e\left(\langle \lambda,\, x \rangle  \right) = \frac{1}{\eigspcdim_{n}}\sum\limits_{\lambda\in\Lambda}\cos\left( 2\pi \langle \lambda,\, x \rangle  \right).
\end{equation}

Though the normalizing factor
in the definition \eqref{eq:fn def} of $f_{n}$
has no bearing on the nodal length, it is convenient to work with, and
we have chosen to have  %mk2: minor rephrasing
$r_{n}(0)=1$, or, equivalently, for every
$x\in\T$,  $\E[f_{n}(x)^2] = 1$. %mk2: removed a separate display for this.

The  covariance function {\em determines} the distribution of a centered Gaussian random
field, and, in principle,
one may express any aspect of the geometry of $f_{n}$
in terms of $r_{n}$ only (cf. Kolmogorov's Theorem,
~\cite{CL} Chapter 3.3). %mk: changed to Kolmogorv's theorem, perhaps more appropriate
This important fact also shows that we would get the same  random field in \eqref{eq:fn def} had we chosen a different orthonormal basis of $\mathcal{E}_{n}$ in the Gaussian linear combination.  %mk2: added a note on unitary invariance

\subsection{Background and results in related models} %mk2: changed title a little
\label{sec:background}

The question of distribution of various local quantities such as the
nodal length, or the total curvature of nodal lines in different settings,
has been  extensively studied.
It is widely believed ~\cite{Berry 1977} that for {\em generic chaotic} billiards, one can model the nodal lines for
eigenfunctions of eigenvalue of order $\approx E$ with
nodal lines of planar monochromatic random waves of wavenumber $\sqrt{E}$
(this is called Berry's Random Wave Model or RWM, see \eqref{eq:RWMcovariance} for the definition). %mk2: added a line pointing to the definition of RWM
 Hence the importance of the
(random) nodal length and other properties of the RWM.
Berry \cite{Berry 2002} found
that the expected nodal length (per unit area) for the RWM is of size
approximately $\sqrt{E}$, and argued that the variance should be of order $\log E$.

The $2$-dimensional unit sphere $\mathcal{S}^{2}$ is another manifold
with degenerate Laplace spectrum. Here
the Laplace eigenvalues are all the numbers
$E=m(m+1)$ with $m \geq 0$ an integer, and the corresponding
eigenspace is the space of degree $m$ spherical harmonics; its
dimension is $2m+1$.
One may define the random field of degree $m$ spherical harmonics
similarly to the torus
\eqref{eq:fn def} with the plane waves (exponentials) replaced by any %mk: plain waves -> plane waves
$L^{2}$-orthonormal basis $\{\eta_{m;1},\ldots \eta_{m;2m+1}\}$ of real valued spherical harmonics
of degree $m$: %mk2: permuted words in the sentence
\begin{equation*}
f_{m}^{\mathcal{S}^{2}}(x) = \frac{1}{\sqrt{2m+1}}\sum\limits_{k=1}^{2m+1}a_{k}\eta_{m;k}(x)
\end{equation*}
with $a_{k}$ i.i.d standard Gaussian.  %mk: intechanged standard Gaussian and i.i.d
Setting $\len \left(f_{m}^{\mathcal{S}^{2}}\right)$
to be the nodal length of $f_{m}^{\mathcal{S}^{2}}$, Berard ~\cite{Berard}
computed the expected nodal length
\begin{equation*}
\E \left[\len \left(f_{m}^{\mathcal{S}^{2}}\right) \right] = \sqrt{2}\pi \cdot \sqrt{E}.
\end{equation*}
Wigman ~\cite{W sphere} found that the nodal length variance is asymptotic to
\begin{equation*}
\var \left(\len \left(f_{m}^{\mathcal{S}^{2}}\right)\right) \sim c\log{m}, %mk2: changed \length to \len. Also it said f_{n}^{m} - there is no n here?
\end{equation*}
which is consistent with Berry's prediction for the RWM.

\subsubsection{Comparing the random wave model to the torus and the sphere}
\label{sec:RWM vs torus vs S2}

The logarithmic variance is much smaller than one would expect: taking
into account that the wavelength for either the sphere or the RWM scales as
$\frac{1}{\sqrt{E}}$,
one may rescale them to unit wavelength to argue that the nodal
length variance should be proportional to $\sqrt{E}$.
However, a computation reveals that
the coefficient in front of the expected leading term $\sqrt{E}$
%unexpectedly or
surprisingly {\em vanishes} due to ``Berry's
Cancellation Phenomenon" --- the leading term for nodal length
variance is in fact logarithmic.  A similar
cancellation phenomenon is responsible for the variance
\eqref{eq:var(L) asymp united} in our situation being
of
%the smaller
order of magnitude $\frac{\eigval_{n}}{\eigspcdim_{n}^{2}}$, rather
than of order
$\frac{\eigval_{n}}{\eigspcdim_{n}}$ as was originally conjectured in %mk: added ``as was''
~\cite{RW}.

As already remarked, in general, defining a centered
%\comment{PK: it's centered in US english (and inventiones is a us journal...}
(or mean zero) Gaussian random field $f$ on an arbitrary domain %mk2: added 'an'
$T$ is equivalent to specifying
its covariance function $r_{f}(x,y):=\E[f(x)f(y)]$ on $T\times T$. %mk2: removed from display
For the planar random waves (RWM) the covariance
function is
\begin{equation}\label{eq:RWMcovariance}
r_{\text{RWM}}(x,y)=J_{0}(\sqrt{E}\|x-y\|) %mk2: gave this a number to refer to
\end{equation}
with $J_{0}$ the standard Bessel function,
and $$r_{\mathcal{S}^{2}}(x,y)=P_{m}(\cos {d(x,y)})$$
for the degree $m$ spherical harmonics, where $P_{m}$ are the usual Legendre polynomials, and $d$ is the
(spherical) distance. The latter scales as
\begin{equation*}
P_{m}(\cos (\psi/m))\approx J_{0}(d),
\end{equation*}
uniformly for $\psi\in \left[0,m\cdot \frac{\pi}{2}\right]$. As
the corresponding eigenvalues are $m(m+1)$,
this is consistent to the RWM scaling.
The covariance function  %mk2: removed \footnotemark
$r_{n}(x)$ for our ensemble $f_{n}$ of random toral eigenfunctions
given by \eqref{eq:fn def} is of arithmetic flavour: it is given by the summation \eqref{eq:r def}
over lattice points $\Lambda_{n}$ lying on a circle.

The equidistribution of $\Lambda_{n}$ along generic sequences of energy levels on the torus
mentioned earlier implies that for any {\em fixed} $y\in\R^{2}$ one may approximate
\begin{equation*}
r_{n}(y/(2\pi\sqrt{n_{i}})) \approx \int\limits_{\Scircle}\cos(\langle y,z \rangle)dz = J_{0}(\|y\|)
\end{equation*}
for a generic sequence $\{ n_{i}\}\subseteq S$. Although it is the same scaling limit as before,
the latter holds for $y$ of fixed size only, and %mk2: minor changes to the sentence
by no means uniformly for $y\in [0,n]^{2}$. In particular, as opposed to the other cases, no ``intermediate
range asymptotic" for $r_{n}(x)$ is known, i.e. for $x\cdot \sqrt{n} \rightarrow\infty$.
It is remarkable that even in this case, in spite of the fact that
the covariance function for $f_{n}$ has the same scaling limit as
the RWM and random spherical harmonics random fields,
the nodal length variance \eqref{eq:var(L) asymp united} of random arithmetic waves
is of different order of magnitude compared to
the other cases.

\subsubsection{Berry's cancellation phenomenon and the $2$-point correlation function}

\label{sec:berrys-canc-phen} %mk: no need to capitalize cancellation and phenomenon

In order to evaluate moments of the nodal length of a random field $f$ we exploit a suitable
Kac-Rice type formula (see Section~\ref{sec:kacrice}). For the variance it means that we need to understand the
fluctuations of the so called $2$-point correlation function (defined on the
domain of $f$) around its scaled asymptotic
value at infinity.
For both $\R^{2}$ (RWM) and $\mathcal{S}^{2}$ the main contribution for the variance comes
from the ``intermediate range'' (i.e. a few wavelengths away from the origin), where the asymptotic behaviour
of the covariance function and its first two derivatives translates %mk: couple of -> first two as it is more precise
to asymptotics for the $2$-point correlation function. No analogous asymptotics is known for the torus.
The cancellation phenomenon amounts to the fact that the leading term in the
intermediate range asymptotics for the $2$-point correlation function is purely oscilliatory
and its contribution to the integral is negligible. Then, the main contribution comes
from the second term in the asymptotics. %mk2: added the after Then. Also removed semi-colons as in many other places

As a substitute for pointwise
asymptotics for the $2$-point correlation function, we use %mk: asymptotic-> asymptotics
an arithmetic formula (see \eqref{eq:K2 asymp r,X,Y}) valid outside a suitably defined
``singular set" (arithmetic in nature; its
analogue for $\R^{2}$ and sphere is a neighbourhood of the origin, with radius
of order wavelength). Although the arithmetic formula does not give the pointwise
behaviour of the $2$-point correlation function, its arithmetic structure is exploited
for averaging over the torus and is essential for  evaluating the variance. %mk2: changed ``variance'' evaluating to ``evaluating the variance''
The ``arithmetic
Berry's cancellation'' amounts to the Fourier
expansion of the highest magnitude term of the $2$-point correlation function
vanishing at the origin, an artifact of the seemingly unrelated
trigonometric identity $$ 4\cos(\theta/2)^{4} = 1+2\cos\theta+\cos(\theta)^{2}  $$
(see Section \ref{sec:discussion Berry's cancellation} for more details).

\subsection{Some other related results}

For a generic compact manifold $\M$ with no spectral degeneracies,
one can also consider random Gaussian linear
combinations of eigenfunctions
with different eigenvalues (sometimes referred to as ``random wave on $\M$'').
Berard ~\cite{Berard} and Zelditch ~\cite{Z} found that, given a spectral
parameter $E$, the expected nodal length for random Gaussian superpositions of eigenfunctions
with eigenvalues lying either about $E$ or below\footnote{Called the short or long energy window random combinations respectively.}
$E$ is of order $\sqrt{E}$, consistent
with Berry's RWM. The subtle question of the nodal length variance in this generic setup is
to be addressed in  ~\cite{PW}.

Some other generic
results concerning random waves with spectral parameter $E$:
Toth and Wigman ~\cite{TW1}
found that the expected number of open nodal lines, i.e., the connected
component of the zero set that intersect the boundary, of
the random wave with spectral parameter $E$ on a generic surface with boundary
%(usually referred to as billiard)
is again of order $\sqrt{E}$.
Moreover, Nicolaescu ~\cite{NL} evaluated the expected number of critical points to be of order
$E$; the latter is also an upper bound for the number of nodal domains.

For other related or relevant result we refer the interested reader to
the recent survey ~\cite{W Dartmouth}. Also, for recent very
interesting results and conjectures on non-local quantities, such as
{\em nodal domains} (i.e. the connected components of the complement
$\M\setminus f^{-1}(0)$ of the nodal line) see  ~\cite{BS,NS,BR1,BR2,BR3}.

\subsection{Outline of the paper}

The paper is organized as follows.
The proof of Theorem \ref{thm:var(L) asymp united} assuming certain preparatory results
is given in Section \ref{sec:proof outline}; this proof
relies on an arithmetic formula \eqref{eq:var as hat(mu)(4)} in Proposition \ref{prop:var as hat(mu)(4)},
whose proof is commenced in Section \ref{sec:proof-arithmetic-prop}.
The proof of the formula \eqref{eq:var as hat(mu)(4)} is based on
studying the behaviour of the so-called $2$-point correlation function introduced in
Section \ref{sec:2-point correlation}; its subtle asymptotic analysis is given
in sections \ref{sec:proof-arithmetic-prop}, \ref{sec:asymptotics2ptcorelation}
and \ref{apx:proof asymp r,D,H}, with the latter containing a certain technical
computation essential for understanding the asymptotic properties.

Section \ref{sec:proof of S6=o(N^4)} is dedicated to the proof of Theorem
\ref{thm:arith corr S6}, an arithmetic bound, due to Bourgain, needed for the admissability of
the error term in \eqref{eq:var as hat(mu)(4)}.
In Section \ref{sec:prob-meas-scircle} sequences of energy
levels with corresponding discrete probability measures \eqref{eq:mun def} converging to the measures $\nu_{a}$
as in \eqref{eq:mua def} are constructed to prove the attainability of the latter.

\subsection{Acknowledgements}

The authors would like to thank Ze\'{e}v Rudnick for suggesting the
problem as well as having many inspiring and fruitful discussions. In
addition, we would like to thank Mikhail Sodin, Peter Sarnak and Jean
Bourgain for stimulating and fruitful discussions on various aspects of the problem
solved in this manuscript.  %A substantial part of the research
Part of the research
presented was conducted during or as a result of the visit of I.W. to
the Institute for Advanced Study, Princeton, and he would like to
acknowledge the friendly environment, and the generous financial
support.  We are especially grateful to Jean Bourgain for his kind
permission to include Theorem~\ref{thm:arith corr S6} and its
proof.

\section{Proof of Theorem \ref{thm:var(L) asymp united}}
\label{sec:proof outline}
\subsection{Kac-Rice formulas}\label{sec:kacrice} Moments of the nodal length for smooth random fields can be computed using the Kac-Rice formulas ~\cite{CL}. To state them, we need some notation. For $f=f_{n}$, we define its first and second correlations as follows:
\begin{align}
K_{1}&=\E\left[\|\nabla f(y)\| \given  f(y)=0\right],\nonumber\\
\tilde{K}_{2}(x)& =\E\left[\|\nabla f(y)\|\cdot \|\nabla f(y+x)\| \given
f(y)=f(y+x)=0 \right],\label{eq:kacricemoments} \\
K_{2}(x)&= \frac{2}{E_{n}}\tilde{K}_{2}(x).  \nonumber
\end{align}
%\begin{align}
%K_{1}&=\E[\|\nabla f(y)\| \; \vert \;  f(y)=0], \nonumber \\
%\tilde{K}_{2}(x)& =\E[\|\nabla f(y)\|\cdot \|\nabla f(y+x)\| \; \vert \;  f(y)=f(y+x)=0 ],\label{eq:kacricemoments} \\
%K_{2}(x)&= \frac{2}{E_{n}}\tilde{K}_{2}(x).  \nonumber
%\end{align}
Observe that $K_{1}$ and $K_{2}$ are independent of $y$ because
$f_{n}$ is stationary (for general smooth Gaussian fields, they become
$K_{1}(y)$ and $\tilde{K}_{2}(x,y)$). They are called the first and
second correlations of the nodal set $f_{n}^{-1}(0)$. $K_{2}$ is just
a scaled version of the second correlation. As we are dealing with
Gaussians, it is possible to write analytical expressions for these as
Gaussian integrals in terms of $r_{n}$ and its derivatives (see
\eqref{eq:K2 as E[|W1| |W2|]}, \eqref{eq:Omega def} and
\eqref{eq:tilde{Omega} def}).

Then, the Kac-Rice formulas say that
\begin{equation*}
\E[\len_{n}] = \int_{\T}K_{1}dy= K_{1},  \qquad
\E[\len_{n}^{2}] = \int\limits_{\T}  \tilde{K}_{2}(x) dx.
\end{equation*}
The first of these formulas gives $ \E[\len_{n}] =
\frac{1}{2\sqrt{2}} \cdot \sqrt{E_{n}}$ as was quoted earlier. Using this and the second gives
\begin{align}
\var(\len_{n}) &= \frac{\eigval_{n}}{2} \int\limits_{\T} \left( K_{2}(x) - \frac{1}{4} \right) dx. \label{eq:kacricevariance}
\end{align}
Full justification of their validity in our context may be found in~\cite{RW}. For the Kac-Rice formulas in general, consult~\cite{CL}. %mk2: some minor changes

%IW: HERE and everywhere else replaced sphere by the torus

It is instructive to intuitively understand the function $\tilde{K}_{2}(x)$ in the following way. %mk: the $K_{2}(x)$-> he function $K_{2}(x)$
Let $x\in\T$, and take a small positive number $\epsilon >0$.
We define the random variables $$\len_{n}^{x,\epsilon} = \length \left( f_{n}^{-1}(0)
\cap B(x,\epsilon)  \right),$$ where $B(x,\epsilon)$ is the disk of radius $\epsilon$ centered at $x$; $\len_{n}^{x,\epsilon}$ measures the nodal length of $f_{n}$ inside %mk: here it said spherical cap - probably a residual error from cut and paste
the corresponding disk.
Then we have
\begin{equation*}
\tilde{K}_{2}(x) = \frac{2}{\eigval_{n}}\lim\limits_{\epsilon \rightarrow 0}\frac{1}{\pi^2\epsilon^4}
\E \left[ \len_{n}^{x,\epsilon} \len_{n}^{0,\epsilon}   \right].
\end{equation*}

\subsection{Computing the variance - the Gaussian integral} To understand $\var(\len_{n})$
we need to understand the integral in \eqref{eq:kacricevariance}.
The function $K_{2}$ may be implicitly expressed in terms of the  covariance function
$r_{n}$ of $f_{n}$ as a Gaussian expectation of a $4$-variate
centered Gaussian $(\nabla f_{n}(0), \nabla f_{n}(x))$ conditioned on %mk2: removed the display
$f_{n}(0)=f_{n}(x)=0,$ with $4\times 4$ covariance matrix $\Omega_{n}(x)$ depending on $r$ and its derivatives
(see \eqref{eq:K2 as E[|W1| |W2|]}, \eqref{eq:Omega def} and \eqref{eq:tilde{Omega} def}).
In our case the covariance function is the arithmetic function \eqref{eq:r def}.

%mk2: removed the footnote as it is now redundant\footnotetext{The covariance function defines a random Gaussian field; for the stationary centered case like ours, it is given by $r_{n}(x) := \E[f_{n}(0)f_{n}(x)]$.}

In order to study the asymptotic behaviour of the integral above, we
use some ideas from ~\cite{ORW,RW}
and divide the torus into a
singular set $\sing$ and the nonsingular complement $\nonsing$; only the
latter is convenient to work with, so it is essential
make the former as small as possible.
We improve the analysis of the earlier paper ~\cite{RW} on both
$\sing$ and $\nonsing$.

A better upper bound for the measure of $\sing$ is proved using the $6$th moment of $r$ rather than the
$4$th one.
%(though proving a sharp upper bound for the $6$th moment turned to be a major
%issue, as explained below).
As an artifact of the definition of $\sing$,
one has a lower bound for the values $|r(x)|$ on
$\sing$; using a Chebyshev-like inequality %mk: changed Chebychev to Chebyshev, as used elsewhere
on the $6$-th moment of $r(x)$, we will bound the measure of $\sing$,
so that its contribution to the variance is negligible.

On $\nonsing$, where the main contribution comes from, we establish a precise asymptotic expression for the
$2$-point correlation function compared to a partial upper bound as in ~\cite{RW}
(see Proposition \ref{prop:$2$-pnt using covar nonsing}).
Here the (scaled) covariance matrix $\Omega_{n}(x)$, defined by
\eqref{eq:Omega def},
is a perturbation
of the identity matrix. In order to understand its contribution to the integral, we expand $K_{2}(x)$ as a function
of $\Omega_{n}$ into a $4$-variate Taylor polynomial around $\Omega_{n}=I_{4}$, the identity matrix.
In principle, this could be performed using brute force; we choose to work with Berry's elegant method
~\cite{Berry 2002}.

The computation above culminates in the ``arithmetic formula'' given
in the following proposition. It is of arithmetic essence and at the
heart of the variance being non-universal, but the derivation itself
involves no arithmetic.
%
%The error term will be bounded in terms of moments of
%$r_{n}$; for notational convenience we let
Before stating the proposition, we define
\begin{equation}
\label{eq:R def}
\mathcal{R}_{k}(n) := \int\limits_{\T}|r_{n}(x)|^{k}dx.
\end{equation}
%
%\comment{Where to put that remark..?}
%(see Remark \ref{rem:arithmetic essence}).
\begin{proposition}
\label{prop:var as hat(mu)(4)}
The nodal length variance is given by the asymptotic formula
\begin{equation}
\label{eq:var as hat(mu)(4)}
\var(\len_{n})   =   c_{n} \cdot \frac{\eigval_{n}}{\eigspcdim_{n}^2}
+ O\left( \eigval_{n} \cdot \mathcal{R}_{5}(n) \right),
\end{equation}
where we used the notation \eqref{eq:c(mu) def}
%and \eqref{eq:R def}
for $c_{n}$. % and  $\mathcal{R}_{5}$ respectively.
\end{proposition}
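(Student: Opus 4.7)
The plan is to begin from the Kac--Rice identity \eqref{eq:kacricevariance},
$$\var(\len_{n}) = \frac{\eigval_{n}}{2}\int_{\T}\left(K_{2}(x)-\frac{1}{4}\right)dx,$$
and to split $\T = \sing \sqcup \nonsing$, where the ``singular'' set $\sing$ is the locus at which $|r_n(x)|$ and the scaled derivatives $\|\nabla r_n(x)\|/\sqrt{\eigval_n}$ or $\|\nabla^2 r_n(x)\|/\eigval_n$ are not small enough to treat $\Omega_n(x)$ perturbatively. On $\sing$ the integrand admits only a pointwise $O(1)$ bound, so the task is to show $\meas(\sing)$ is small; using the arithmetic sixth-moment bound of Theorem~\ref{thm:arith corr S6} together with a Chebyshev-type inequality, I would obtain $\meas(\sing) \ll \mathcal{R}_6(n) \ll \mathcal{R}_5(n)$, so the $\sing$-contribution is absorbed by the error term $O(\eigval_n\mathcal{R}_5(n))$.

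On $\nonsing$ the conditional covariance matrix $\Omega_n(x)$ of $(\nabla f_n(0),\nabla f_n(x))$ given $f_n(0)=f_n(x)=0$ is a small perturbation of a reference ``identity-like'' matrix, and I would compute $K_2(x)-1/4$ following Berry's method \cite{Berry 2002}, writing the Gaussian integral in \eqref{eq:kacricemoments} as a Taylor expansion in the entries of $\Omega_n(x)-I_4$. These entries are explicit rational functions of $r_n(x)$ and its first two partial derivatives, and the Taylor remainder is of order $|r_n(x)|^5$ plus terms of equal or smaller size in the derivatives, which integrates to $O(\mathcal{R}_5(n))$ and again joins the error term. The resulting main term is a finite sum of monomials in $r_n$ and its derivatives, each of which, after integration over $\T$, becomes a sum over tuples $(\lambda_1,\ldots,\lambda_k)\in\Lambda_n^k$ with $\sum \epsilon_j \lambda_j = 0$, weighted by polynomials in the unit vectors $\lambda_j/\sqrt{n}$ arising from differentiating the exponentials in \eqref{eq:r def}.

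The heart of the argument is the ``arithmetic Berry cancellation'': the Taylor terms of size $O(\eigval_n/\eigspcdim_n)$ (the order one expects naively from second moments of $r_n$) vanish identically, reflecting the algebraic identity $4\cos^4(\theta/2) = 1 + 2\cos\theta + \cos^2\theta$, so that the surviving second-order contribution is genuinely dominant at size $\eigval_n/\eigspcdim_n^2$. Using the invariance of $\Lambda_n$ under $\lambda \mapsto i\lambda$ and $\lambda \mapsto \overline{\lambda}$, the remaining fourth-moment-type correlation sums collapse to a multiple of $\eigspcdim_n^2 + \bigl|\sum_{\lambda\in\Lambda_n}(\lambda/\sqrt{n})^4\bigr|^2 = \eigspcdim_n^2\bigl(1 + \hat{\mu}_n(4)^2\bigr)$, producing the constant $c_n = (1+\hat{\mu}_n(4)^2)/512$ and completing \eqref{eq:var as hat(mu)(4)}. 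I expect the main obstacle to be executing the Taylor expansion with enough bookkeeping precision to both \emph{certify} the first-order cancellation and to extract the exact constant $1/512$ in front of the second-order contribution: several matrix-entry products must be tracked in tandem, and losing any cross-term would either destroy the cancellation or alter the attainable range of constants in Theorem~\ref{thm:var(L) asymp united}.
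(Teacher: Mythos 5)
Your outline follows the paper's route closely: Kac--Rice, the singular/nonsingular decomposition with a Chebyshev-type bound $\meas(\sing)\ll\mathcal{R}_6(n)$, Berry's Taylor expansion of the Gaussian integral on $\nonsing$, the cancellation of the first-order ($\eigval_n/\eigspcdim_n$) terms, and the fourth-moment correlation sums producing the factor $1+\widehat{\mu}_n(4)^2$. One step, however, is genuinely flawed as stated: the claim that on $\sing$ ``the integrand admits only a pointwise $O(1)$ bound.'' It does not. By Lemma~\ref{lem:K2<<1/sqrt(1-r^2)} one only has $K_2(x)\ll(1-r_n(x)^2)^{-1/2}$, which blows up wherever $r_n\to\pm1$ (for instance near $x=0$), and that is exactly where singular points live. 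Hence ``$\meas(\sing)$ small $\Rightarrow$ contribution small'' does not follow from any pointwise bound. The paper circumvents this by defining $\sing$ as a union of squares of side $1/M$ with $M\approx\sqrt{\eigval_n}$ and importing from \cite{ORW} the bound $\int_{I}|K_2|\ll 1/(M\sqrt{\eigval_n})$ for each singular square $I$ (a local integrability statement, not a pointwise one); combined with the count $\ll M^2\meas(\sing)$ of singular squares this yields $\int_\sing|K_2|\ll\mathcal{R}_6(n)$. Without this, or an equivalent argument controlling the integral of $K_2$ near its singularities, your treatment of $\sing$ is incomplete.

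Two smaller remarks. First, the source of the $\mathcal{R}_5$ error is not, as you suggest, the Taylor remainder: in the paper the remainder is $O\left(r^6+\tr(X^3)+\tr(Y^6)\right)$ and integrates to $O(\mathcal{R}_6)$; the $\mathcal{R}_5$ arises from the evaluation of one of the retained main terms, namely $\int_\T\tr(XY^2)\,dx$ (part~\ref{it:tr(XY^2)} of Lemma~\ref{lem:asymp X,Y}), which involves an odd correlation sum. Since $\mathcal{R}_6\le\mathcal{R}_5$ this does not affect the final statement. Second, Theorem~\ref{thm:arith corr S6} is not needed for this Proposition at all: the bound $\meas(\sing)\ll\mathcal{R}_6(n)$ is unconditional, and Bourgain's theorem enters only afterwards, in the deduction of Theorem~\ref{thm:var(L) asymp united}, to show that the error term $O(\eigval_n\mathcal{R}_5(n))$ is $o\left(\eigval_n/\eigspcdim_n^2\right)$.
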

The reason we refer to \eqref{eq:var as hat(mu)(4)} as ``arithmetic" is that
both the main term and the error term in \eqref{eq:var as hat(mu)(4)}
are of arithmetic nature: $c_{n}$ is related
to the distribution of lattice points $\Lambda_{n}$ on the circle (see Section
\ref{apx:proof asymp r,D,H}), and $\mathcal{R}_{5}$ is controlled in terms of arithmetics of spectrum
correlation.

The proof of Proposition~\ref{prop:var as hat(mu)(4)} will commence in Section~\ref{sec:proof-arithmetic-prop}.
It is lengthy and quite technical, so it may be omitted on a first reading of the paper.

%mk2: Please check if the following remark is correct.
In case of random spherical harmonics or the random wave model, one arrives at analogous propositions for the variance. The proof essentially ends there as the error term may be checked to be smaller that the main term.

\subsection{Computing the variance - the arithmetic part}
In our setting however, the main obstacle  is in proving the admissability of
the error term.
%mk2: brought in the section on Arithmetics of spectrum correlations here itself.
%\subsection{Arithmetics of spectrum correlations}
%\label{sec:spec corr arithm}
%
We will control various error terms in terms of the moments
%of
% $r_{n}$, and we define
% \begin{equation}
% \label{eq:R def}
% \mathcal{R}_{k}(n) := \int\limits_{\T}|r_{n}(x)|^{k}dx.
% \end{equation}
$\mathcal{R}_{k}(n) := \int_{\T}|r_{n}(x)|^{k}dx$ (cf. \eqref{eq:R def}.)
The even moments are naturally related to the spectral correlations;
for example, it is straightforward to check that
\begin{equation}
\label{eq:int r^6=1/N^6 S6(n)}
\RR_{6} (n) = \frac{1}{\eigspcdim_{n}^6}| S_{6}(n)|,
\end{equation}
where $S_{6}$ is the $6$-correlation set of frequencies
\begin{equation}
\label{eq:S6 def}
S_{6}(n) = \left\{(\lambda_{1},\ldots \lambda_{6})\in \Lambda_{n} : \: \sum\limits_{i=1}^{6}\lambda_{i} = 0 \right\}.
\end{equation}

Since for any choice of $\lambda_{1},\ldots,\lambda_{4}\in\Lambda_{n}$
there are at most $4$ possible
choices for $\lambda_{5},\, \lambda_{6} \in \Lambda_{n}$, it follows that
$|S_{6}(n)| = O\left( \eigspcdim_{n}^4  \right),$ or, equivalently
\begin{equation}
\label{eq:R6=O(1/N^2)}
\RR_{6} (n) = O\left(  \frac{1}{\eigspcdim_{n}^2}  \right).
\end{equation}
The latter bound is not quite sufficient for our
purposes, but the following result, due to J. Bourgain
is sufficiently strong for our purposes.
\begin{theorem}
\label{thm:arith corr S6}
As $\eigspcdim_{n}\rightarrow\infty$, we have the following estimate:
\begin{equation}
\label{eq:S6-estimate}
|S_{6}(n)| = o\left(  \eigspcdim_{n}^4  \right).
\end{equation}
Consequently, $\RR_{6} (n) = o\left(  \frac{1}{\eigspcdim_{n}^2}  \right)$.
\end{theorem}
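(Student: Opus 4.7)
I would organize the proof around the elementary Sidon-like fact that for any $v\neq 0$, the representation function
$$r_2(v) := \#\{(\lambda,\mu)\in\Lambda_n^2 : \lambda+\mu=v\}$$
satisfies $r_2(v)\leq 2$: two lattice points of norm $\sqrt{n}$ whose sum is a prescribed nonzero vector $v$ are uniquely determined as the intersections of the perpendicular bisector of $v$ (through the origin) with the circle $|x|=\sqrt{n}$. The trivial bound $|S_6(n)|=O(\eigspcdim_n^4)$ follows because this implies $r_3(v) := \#\{(\lambda_1,\lambda_2,\lambda_3)\in\Lambda_n^3 : \sum\lambda_i = v\} = O(\eigspcdim_n)$, combined with the identity $|S_6(n)|=\sum_v r_3(v)^2$ and $\sum_v r_3(v) = \eigspcdim_n^3$. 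The task is to extract an extra $o(1)$ saving over this trivial estimate.

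The proposed strategy is a contradiction argument via additive combinatorics. Suppose there exist $\epsilon>0$ and a subsequence $n_k$ with $\eigspcdim_{n_k}\to\infty$ and $|S_6(n_k)|\geq\epsilon\eigspcdim_{n_k}^4$. Apply a higher-order Balog--Szemer\'edi--Gowers-type theorem (in the spirit of the results of Sanders, Schoen, and Shkredov for third additive energies) followed by the Pl\"unnecke--Ruzsa inequality to extract a subset $A_k \subseteq \Lambda_{n_k}$ of size $|A_k|\gtrsim_\epsilon \eigspcdim_{n_k}$ with bounded doubling $|A_k+A_k|\leq C(\epsilon)|A_k|$. On the other hand, the bound $r_2(v)\leq 2$ applied within $A_k$ gives $|A_k+A_k|\geq |A_k|^2/2$: the $\binom{|A_k|}{2}$ off-diagonal unordered pairs produce distinct sums (only pairs summing to $0$ may collide, and these contribute an acceptable $O(|A_k|)$ loss), and the $|A_k|$ diagonal sums $\{2\lambda:\lambda\in A_k\}$ are also distinct from each other and from the off-diagonal sums since $r_2(2\lambda)=1$. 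Combining the two bounds forces $|A_k|\leq 2C(\epsilon)+O(1)$, contradicting $|A_k|\to\infty$ and so establishing $|S_6(n)|=o(\eigspcdim_n^4)$.

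The main obstacle is the additive-combinatorial step. The classical Balog--Szemer\'edi--Gowers theorem produces a small-doubling subset from control of the \emph{second} additive energy $E_2(A)=\#\{a+b=c+d\}$, while we start only with control of the \emph{third} energy $E_3(A)=\sum_v r_3(v)^2=|S_6(n)|$. Two natural routes present themselves. The first is combinatorial: dyadically decompose by the level sets $\{v:r_3(v)\sim t\}$, use the Sidon-like bound $r_2\leq 2$ to pass from $r_3$ back to $r_2$ on a popular level set, and then invoke standard BSG. The second, likely closer to Bourgain's actual argument, is analytic: write
$$|S_6(n)|=\int_{\T}\Bigl(\sum_{\lambda\in\Lambda_n}e(\langle\lambda,x\rangle)\Bigr)^6 dx$$
and establish a discrete $L^6$ restriction estimate for exponential sums over lattice points on the circle $|x|^2=n$, which at this critical endpoint yields the desired $o(\eigspcdim_n^4)$ bound directly by exploiting the specific arithmetic of the Gaussian-integer factorization of $n$ rather than just the cardinality $\eigspcdim_n$.
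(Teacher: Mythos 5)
There is a genuine gap at the additive-combinatorial step, and it is not merely technical: the conclusion you want to extract there is false. You propose to convert the hypothesis $|S_{6}(n)|\gg\eigspcdim_{n}^{4}$, which is a statement about the \emph{third} energy $\sum_{v}r_{3}(v)^{2}$ of $\Lambda_{n}$, into a subset $A_{k}\subseteq\Lambda_{n}$ with $|A_{k}|\gtrsim\eigspcdim_{n}$ and bounded doubling, and then contradict the Sidon-type bound $r_{2}(v)\le 2$. But by that very Sidon bound, \emph{no} subset of $\Lambda_{n}$ of unbounded size can have bounded doubling, so any ``higher-order BSG'' producing one from a true hypothesis would itself be false. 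Moreover, the hypothesis of maximal third energy is perfectly consistent with the Sidon property at the purely combinatorial level: a dense Sidon set $B\subseteq[1,M]$ with $|B|\sim\sqrt{M}$ has $|B+B|\sim M$, hence $|B+B+B|=O(M)=O(|B+B|)$ and $\sum_{v}r_{3}(v)^{2}\ge\left(\sum_{v}r_{3}(v)\right)^{2}/|B+B+B|\gg|B|^{4}$, i.e.\ maximal third energy together with minimal second energy and no large structured subset. So neither your dyadic-decomposition route nor any argument using only $r_{2}\le 2$ can close the proof; genuinely arithmetic input about $\Lambda_{n}$ is required.

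The paper's proof (due to Bourgain) circumvents this by applying BSG not to $\Lambda_{n}$ but to the sumset $A=(\Lambda_{n}+\Lambda_{n})\setminus\{0\}$, which, under the hypothesis $|S_{6}(n)|\gg\eigspcdim_{n}^{4}$ and one application of Cauchy--Schwarz, does have near-maximal \emph{second} additive energy $E(A,A)\gg|A|^{3}$. BSG then yields $A_{1}\subseteq A$ with $|A_{1}|\gg\eigspcdim_{n}^{2}$ and bounded doubling, Freiman's theorem places $A_{1}$ inside a proper GAP $P$ of bounded rank with $|P|\lesssim|A_{1}|$, and pigeonholing the inequality $|A_{1}|\le\sum_{x\in\Lambda_{n}}|(P-x)\cap\Lambda_{n}|$ produces a translate of $P$ containing $\gg\eigspcdim_{n}$ points of $\Lambda_{n}$. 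The contradiction then comes from arithmetic rather than from the Sidon property: each such point $z$ satisfies $z\bar{z}=n$, so $n$ has $\gg\eigspcdim_{n}$ factorizations within a bounded-rank GAP $P'\supseteq P\cup\overline{P}$ of size $\eigspcdim_{n}^{O(1)}$, while Chang's theorem bounds this count by $\exp\left(O(\log \eigspcdim_{n}/\log\log \eigspcdim_{n})\right)=\eigspcdim_{n}^{o(1)}$. Your closing speculation about an $L^{6}$ restriction estimate is not what is done here either (it is closer in spirit to the later Bombieri--Bourgain power saving), and as stated it is a reformulation of the problem rather than a proof.
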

Theorem \ref{thm:arith corr S6} will be proven in Section
\ref{sec:proof of S6=o(N^4)}.

We give a brief preview of the proof
%(due to J. Bourgain)
of Theorem~\ref{thm:arith corr S6}. To refute the possibility that $|S_{6}(n)| \gg \eigspcdim_{n}^{4}$,
%(see Theorem \ref{thm:arith corr S6}), we
%use an argument due to J. Bourgain;
we invoke some
techniques from additive combinatorics (see Section \ref{sec:proof of S6=o(N^4)}).
Utilizing a notion of ``additive energy" defined in Section \ref{sec:proof
  of S6=o(N^4)},
a certain set $A$ related to the sum set of $\Lambda_{n}$ is shown to contain
a large subset $A_{1}$ with ``bounded doubling".  Using a suitable version
of Freiman's Theorem, this implies that $A_{1}$ is essentially a {\em generalized arithmetic progression}
(GAP - see Theorem \ref{thm:simple-freiman}), i.e., is contained inside a slightly larger GAP.
This then leads to a contradiction via an application of Chang's result
~\cite{chang03-factorization-erdos-szemeredi} on the number of representations of a complex
number as a product of elements inside a GAP.

We note that the bound $|S_{6}(n)| =o(\eigspcdim_{n}^{4} )$ is
quite far from the truth; Bombieri and Bourgain ~\cite{Bombieri-Bourgain} have recently
obtained an exponent savings.
We further remark that a trivial lower bound is that $|S_{6}(n)| \gg
\eigspcdim_{n}^3$; taking any $\lambda_{1},\lambda_{2}, \lambda_{3}
\in \Lambda_n$ and letting
$\lambda_{4}=-\lambda_{1},\lambda_{5}=-\lambda_{2}$, and
$\lambda_{6}=-\lambda_{3}$ yields
$|\Lambda_n|^{3}=\eigspcdim_{n}^{3}$ solutions of
``diagonal type''.
We believe that essentially all solutions arise like this, and
conjecture
that for every $\epsilon > 0$,
$$
|S_{6}(n)| = O_{\epsilon} \left(  \eigspcdim_{n}^{3+\epsilon}
\right).
$$
(Possibly the even stronger bound
$
|S_{6}(n)| = O\left(  \eigspcdim_{n}^{3}
\right)
$
holds.)

% \subsection{Proof of the main result}
% We begin by collecting some preparatory results (whose proofs will be
% given later), and use these to deduce Theorem
% ~\ref{thm:var(L)  asymp united}.

\subsection{Proof of Theorem \ref{thm:var(L) asymp united} assuming the preparatory results}

\label{sec:proof thm var united}
Given Proposition~\ref{prop:intermd meas exist},
Proposition~\ref{prop:var as hat(mu)(4)}, and Theorem~\ref{thm:arith
  corr S6}, it is now straightforward to deduce
Theorem~\ref{thm:var(L) asymp united}, our main result.
%
%\begin{proof}
Recall that $\RR_{k}$ are the moments \eqref{eq:R def} of $r_{n}$.
% As mentioned in Section \ref{thm:arith corr S6}, Theorem \ref{thm:arith corr S6} is equivalent to the estimate
% \begin{equation*}
% \RR_{6} (n) = o\left(  \frac{1}{\eigspcdim_{n}^2}  \right),
% \end{equation*}
Using the Cauchy-Schwarz inequality on $|r(x)|^{5} = r^2(x)\cdot |r(x)|^3$
%mk: here and elsewhere changed Cauchy-Schwartz to Cauchy-Schwarz
together with the bound $\RR_{4} = O\left( \frac{1}{\eigspcdim^{2}}
\right)$ (which follows from the same argument that yielded
\eqref{eq:R6=O(1/N^2)}), and the bound $\RR_{6} (n) = o\left(  \frac{1}{\eigspcdim_{n}^2}  \right)$ from Theorem~\ref{thm:arith corr S6}, we obtain
\begin{equation}
\label{eq:R5=o(1/N^2)}
\RR_{5}(n) = o\left(  \frac{1}{\eigspcdim_{n}^2}  \right).
\end{equation}
Now, using \eqref{eq:var as hat(mu)(4)} together with
\eqref{eq:R5=o(1/N^2)} we obtain (\ref{eq:var(L) asymp united}).

Finally, using the second part of
Proposition~\ref{prop:intermd meas exist} and the definition of
$c_{n}$ (see \eqref{eq:mun def}) we find that any $c \in [1/512,1/256]$
is attainable as a limit.
%\end{proof}

\section{The $2$-point correlation function of $f_{n}$}

\label{sec:2-point correlation}

In this section we use the Kac-Rice formula~\eqref{eq:kacricevariance} that expresses
$\var(\len_{n})$ as an integral of the (scaled)
$2$-point correlation $K_{2}$ defined in \eqref{eq:kacricemoments}. For this
%(Proposition \ref{prop:var as K2}).
%this precise formula (called the Kac-Rice formula ~\cite{CL}) was already used
%in ~\cite{RW}.
%We now express $K_{2}$ as a Gaussian integral. To study the $2$-point correlation function
 we will need to study some aspects of the random field $f_{n}$ first.
From this point on we fix $n$ and will usually suppress the $n$-dependency with no further
note.

\subsection{Joint distribution of values and gradients}
\label{sec:valuesandgradients}
In order to study the variance, we  shall need to study the random vector %mk: changed the sentence a little
\begin{equation*} %mk2: Typ0: it said (u_{1},u_{1},...)-->(u_{1},u_{2},...)
W = W_{n;x} =(u_{1},u_{2},v_{1},v_{2}) = (f_{n}(0),f_{n}(x),\nabla f_{n}(0),\nabla f_{n}(x)) \in \R^{6}.
\end{equation*}
Since $W$ is a linear transformation of the standard Gaussian $a=(a_{\lambda})\in\R^{\eigspcdim_{n}}$,
its distribution is also centered (or mean zero) Gaussian, and by
the stationarity, $0$ and $x$ may be replaced by any $y$ and $y+x$.

%IW: IN THE FOLLOWING HERE - DELETED THE MINUS SIGN IN THE DEFINITION OF B

Let
\begin{equation}
\label{eq:D as sum}
D=D_{1\times 2}(x) = \nabla r(x) =
\frac{2\pi i}{\eigspcdim_{n}}\sum\limits_{\lambda\in\Lambda}
 e\left(  \langle \lambda,\, x \rangle  \right) \cdot \lambda
\end{equation}
(cf. \eqref{eq:r def}).
The vector $W$ is centered Gaussian with covariance
matrix (cf. ~\cite{RW}, Section 5.1)
\begin{equation*}
\Sigma = \left( \begin{matrix}
A &B \\ B^{t} &C
\end{matrix}   \right),
\end{equation*}
where
\begin{equation*}
A(x) = \left( \begin{matrix} 1 &r(x) \\ r(x) &1  \end{matrix} \right),
\end{equation*}
\begin{equation*}
B(x) = \left( \begin{matrix}
  0 &D(x) \\ -D(x) & 0
\end{matrix} \right)
\end{equation*}
and
\begin{equation*}
C(x) = \left( \begin{matrix} \frac{E}{2} I_{2} &-H(x)
\\-H(x) &\frac{E}{2}I_{2} \end{matrix} \right),
\end{equation*}
where $H_{2\times 2}(x)$ is the Hessian
\begin{equation}
\label{eq:H summation}
H(x) = \left( \frac{\partial ^{2} r}{\partial x_{i}\partial x_{j}}   \right)
= -\frac{4\pi^2}{\eigspcdim_{n}}\sum\limits_{\lambda\in\Lambda}
e\left(  \langle \lambda,\, x \rangle  \right) \cdot (\lambda^{t}\lambda),
\end{equation}
by \eqref{eq:r def}
(note that $\lambda$ is a row vector so that $\lambda^{t}\lambda$ is a $2\times 2$ matrix).

The covariance matrix of $(\nabla f(0),\nabla f(x))$, conditioned on $f(0)=f(x) = 0$ is
\begin{equation}
\label{eq:tilde{Omega} def}
\tilde{\Omega}_{4\times 4} = C - B^{t}A^{-1}B =
\left( \begin{matrix} \frac{E}{2} I_{2} &-H
\\-H &\frac{E}{2} I_{2} \end{matrix} \right) - \frac{1}{1-r^2}
\left(\begin{matrix}
D^{t}D &rD^{t}D \\ rD^{t}D &D^{t}D
\end{matrix}   \right),
\end{equation}
where we write $r=r(x)$ for brevity. Thus, $K_{2}(x)=\E[\|\tilde{V}_{1}\|\cdot \|\tilde{V}_{2}\|]$ where $\tilde{V}_{i}$ are 2-dimensional random vectors with $(\tilde{V}_{1},\tilde{V}_{2})$ having Gaussian distribution with zero mean and covariance matrix $\tilde{\Omega}(x)$.

\subsection{The scaled $2$-point correlation function}

It is more convenient to work with the {\em scaled} covariance matrix
\begin{equation}
\label{eq:Omega def}
\Omega(x) = \Omega_{n}(x)= \frac{2}{\eigval_{n}} \tilde{\Omega}_{n}(x).
\end{equation}
Then, the scaled $2$-point correlation function defined in \eqref{eq:kacricemoments} may be written as
\begin{equation*}
K_{2}(x) =
\frac{1}{2\pi\sqrt{1-r_{n}(x)^2}} \E[\|V_{1} \| \cdot \| V_{2} \|],
\end{equation*}
where $V_{1},V_{2}$ are centered Gaussians with covariance matrix
$\Omega(x)$.

At the origin $x=0$ the matrix $\Omega(x)$ is singular and hence
corresponds to a covariance matrix of a degenerate Gaussian.  However
for almost all $x\in\T$, $\Omega(x)$ is nonsingular, see
~\cite[Proposition A.1]{RW}.  We claim that $\Omega(x)$ is a small
perturbation of the $4\times 4$ identity matrix $I_{4}$, at least, for
``generic" $x$.  To quantify the latter statement, write
\begin{equation}
\label{eq:Omega=I+(X Y Y X)}
\Omega(x) = I + \left(  \begin{matrix}
X &Y \\ Y &X
\end{matrix} \right),
\end{equation}
where
\begin{equation}
\label{eq:X and Y def}
X = -\frac{2}{\eigval_{n}(1-r^2)}D^{t}D ,\;
Y = -\frac{2}{\eigval_{n}}\left(  H + \frac{r}{1-r^2}D^{t}D \right),
\end{equation}
and both $X=X_{n}(x)$ and $Y=Y_{n}(x)$ are small for ``typical" $x$.

With these computations, we may rewrite the Kac-Rice formula~\eqref{eq:kacricevariance} as follows.
%(bearing in mind the scaling, and subtracting the squared expected length).
\begin{proposition}[Cf. ~\cite{RW}, Proposition 5.2]
\label{prop:var as K2}
The nodal length variance is given by
\begin{equation}
\label{eq:var as K2}
\var(\len_{n}) = \frac{\eigval_{n}}{2} \int\limits_{\T} \left( K_{2}(x) - \frac{1}{4} \right) dx,
\end{equation}
where $K_{2}$ is the scaled $2$-point correlation function given by
\begin{equation}
\label{eq:K2 as E[|W1| |W2|]}
K_{2}(x) = K_{2;n}(x) = \frac{1}{2\pi\sqrt{1-r_{n}(x)^2}} \E[\|V_{1} \| \cdot \| V_{2} \|];
\end{equation}
here $V_{1},V_{2} \in\R^{2}$ are centered Gaussians with covariance matrix
given by \eqref{eq:Omega=I+(X Y Y X)}, with $X$ and $Y$ as in \eqref{eq:X and Y def}.
\end{proposition}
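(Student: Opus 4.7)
The plan is to combine the Kac-Rice formulas \eqref{eq:kacricemoments} already recorded in Section~\ref{sec:kacrice} with explicit Gaussian conditioning. The outer identity \eqref{eq:var as K2} is purely algebraic: since $\E[\len_n^2] = \int_\T \tilde K_2(x)\,dx$ and $\E[\len_n]^2 = E_n/8 = (E_n/2)\cdot(1/4)$, and $|\T|=1$, subtracting and pulling out the factor $E_n/2$ via the rescaling $K_2 = (2/E_n)\tilde K_2$ recovers \eqref{eq:var as K2} at once.

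The substantive content is the Gaussian-integral expression \eqref{eq:K2 as E[|W1| |W2|]} for $K_2$. First I would record the covariance structure of the centered Gaussian vector $W = (f(0), f(x), \nabla f(0), \nabla f(x)) \in \R^6$. Using \eqref{eq:r def} and differentiating under the finite sum, one obtains $\E[f(y)f(y+x)] = r(x)$ (block $A$); $\E[f(y)\partial_i f(y+x)] = \partial_i r(x)$ together with the stationarity identity $\E[\partial_i f(y) f(y+x)] = -\partial_i r(x)$, obtained by differentiating $\E[f(y)f(y+x)] = r(x)$ with respect to $y$ and noting $y$-independence of the left side (block $B$, with its characteristic $\pm D$ pattern); and $\E[\partial_i f(y) \partial_j f(y+x)] = -\partial_i\partial_j r(x)$ (block $C$, whose diagonal $(E_n/2)I_2$ follows from $-\partial_i\partial_j r(0) = (E_n/2)\delta_{ij}$, a direct consequence of the $\lambda\mapsto -\lambda$ and axis symmetries of $\Lambda_n$).

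Next, the standard Gaussian conditioning formula gives that, wherever $A$ is nonsingular (i.e.\ $|r(x)|<1$, which holds almost everywhere on $\T$), the conditional distribution of $(\nabla f(0), \nabla f(x))$ given $(f(0), f(x)) = (0,0)$ is centered Gaussian with covariance $\tilde\Omega = C - B^t A^{-1} B$; a direct block computation using the explicit inverse of $A$ then produces the form \eqref{eq:tilde{Omega} def}. The Kac-Rice identity $\E[\len_n^2] = \int_\T \tilde K_2(x)\,dx$, read with $\tilde K_2$ absorbing the joint density of $(f(0), f(x))$ at the origin, gives $\tilde K_2(x) = p_{(f(0), f(x))}(0,0) \cdot \E[\|\tilde V_1\|\cdot\|\tilde V_2\|]$ with $(\tilde V_1, \tilde V_2) \sim \mathcal{N}(0, \tilde\Omega)$, and $p_{(f(0), f(x))}(0,0) = 1/(2\pi\sqrt{\det A}) = 1/(2\pi\sqrt{1-r^2})$.

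Finally, the rescaling $V_i := \sqrt{2/E_n}\,\tilde V_i$ produces $(V_1, V_2) \sim \mathcal{N}(0, \Omega)$ with $\Omega = (2/E_n)\tilde\Omega$ as in \eqref{eq:Omega def}; then $\E[\|V_1\|\cdot\|V_2\|] = (2/E_n)\,\E[\|\tilde V_1\|\cdot\|\tilde V_2\|]$, so $K_2(x) = (2/E_n)\tilde K_2(x)$ acquires the claimed form \eqref{eq:K2 as E[|W1| |W2|]}. The decomposition \eqref{eq:Omega=I+(X Y Y X)}--\eqref{eq:X and Y def} is then just a rearrangement of the block expression for $\tilde\Omega$ that separates out the identity part. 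No serious mathematical obstacle arises; the proof is routine Gaussian covariance bookkeeping, with care needed only in tracking the sign convention in block $B$ and the normalizations when passing between $\tilde V_i$ and $V_i$.
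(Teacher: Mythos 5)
Your proposal is correct and follows essentially the same route as the paper, which obtains \eqref{eq:var as K2} by combining the Kac--Rice formulas of Section~\ref{sec:kacrice} with $\E[\len_{n}]^{2}=E_{n}/8$, and obtains \eqref{eq:K2 as E[|W1| |W2|]} from exactly the covariance computation of Section~\ref{sec:valuesandgradients} (the blocks $A,B,C$, the conditional covariance $\tilde\Omega=C-B^{t}A^{-1}B$, the density factor $1/(2\pi\sqrt{1-r^{2}})$, and the rescaling $\Omega=(2/E_{n})\tilde\Omega$), deferring the full justification of the Kac--Rice identities to \cite{RW}. Your sign bookkeeping for the block $B$ and the normalization when passing from $\tilde V_{i}$ to $V_{i}$ both match the paper.
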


We shall need the following lemma later.
\begin{lemma}
\label{lem:K2<<1/sqrt(1-r^2)}
The matrices $X_{n}$ and $Y_{n}$ are uniformly bounded (entry-wise), i.e.
\begin{equation}
\label{eq:X,Y=O(1)}
X_{n}(x) , Y_{n}(x) = O(1),
\end{equation}
where the constant involved in the $'O'$-notation is universal.
In particular,
\begin{equation}
\label{eq:K2<<1/sqrt(1-r^2)}
K_{2;n}(x) \ll \frac{1}{\sqrt{1-r_{n}(x)^2}}.
\end{equation}
\end{lemma}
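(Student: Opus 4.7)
My plan is to exploit the fact that $\Omega(x)$ is, up to the scalar factor $2/\eigval_{n}$, a \emph{conditional} covariance matrix and therefore positive semi-definite and dominated in the Loewner order by the unconditional covariance $C$ from Section~\ref{sec:valuesandgradients}. Writing $\tilde{\Omega} = C - B^{t}A^{-1}B$ and using that $A \succeq 0$ forces $B^{t}A^{-1}B \succeq 0$, one obtains $0 \preceq \tilde{\Omega} \preceq C$. Since every diagonal entry of $C$ equals $\eigval_{n}/2$, rescaling gives that the diagonal of $\Omega = (2/\eigval_{n})\tilde{\Omega}$ lies entirely in $[0,1]$.

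The next step is to combine this with the standard PSD inequality $|M_{ij}|^{2} \leq M_{ii}\,M_{jj}$, which immediately yields $|\Omega_{ij}| \leq 1$ for all $i,j$. Comparing with the block decomposition \eqref{eq:Omega=I+(X Y Y X)}, every entry of $X_{n}$ and $Y_{n}$ is at most $1$ in absolute value, which is \eqref{eq:X,Y=O(1)} with a universal constant.

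For \eqref{eq:K2<<1/sqrt(1-r^2)}, I would apply Cauchy-Schwarz to the expectation appearing in \eqref{eq:K2 as E[|W1| |W2|]}, obtaining $\E[\|V_{1}\|\cdot\|V_{2}\|] \leq \sqrt{\E[\|V_{1}\|^{2}]\,\E[\|V_{2}\|^{2}]}$. Each factor is the trace of a $2 \times 2$ diagonal block of $\Omega$, hence equals $2 + \tr(X_{n})$, which lies in $[0,2]$ by the previous step. Thus the right-hand side is bounded by $2$, and substituting into \eqref{eq:K2 as E[|W1| |W2|]} produces the claimed pointwise bound.

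The argument is essentially linear-algebraic and I do not anticipate any serious obstacle; the mildest subtlety is that $\Omega(x)$ degenerates on a measure-zero subset of $\T$ (as noted just before \eqref{eq:Omega=I+(X Y Y X)}), but on its nonsingular complement, which is the entire domain on which $K_{2;n}$ is defined, the inequalities above all hold without modification.
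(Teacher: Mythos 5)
Your proposal is correct and follows essentially the same route as the paper: bound the diagonal of $\Omega$ in $[0,1]$ using that the conditional covariance is PSD and dominated by the unconditional one (the paper phrases this as the diagonal of $\Omega$ being nonnegative while the diagonal of $X=-\tfrac{2}{\eigval_n(1-r^2)}D^tD$ is nonpositive), then control all entries via $|\Omega_{ij}|\le\sqrt{\Omega_{ii}\Omega_{jj}}$, and finish with Cauchy--Schwarz on $\E[\|V_1\|\,\|V_2\|]$. No gaps.
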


\begin{proof}
%To prove the first statement \eqref{eq:X,Y=O(1)} it is sufficient to
To prove that \eqref{eq:X,Y=O(1)} holds it is sufficient to
show that
the diagonal entries of $X$ are uniformly bounded, by \eqref{eq:Omega=I+(X Y Y X)}
(the non-diagonal entries of a covariance matrix are dominated
by the diagonal ones, by the  Cauchy-Schwarz inequality).
For the latter, it is sufficient to notice that the diagonal entries of $\Omega$ are positive,
and the diagonal entries of $X$ are $\le 0$ (recall \eqref{eq:X and Y def}).

%To see the second statement \eqref{eq:K2<<1/sqrt(1-r^2)}, use
To prove that the bound in \eqref{eq:K2<<1/sqrt(1-r^2)} holds, we use
\eqref{eq:K2 as E[|W1| |W2|]}, the Cauchy-Schwarz inequality
and \eqref{eq:X,Y=O(1)}.
\end{proof}
%mk: here and elsewhere, I removed blank lines between the last line of the proof and \end{proof} as that pushes the qed sign to the next line. I also change \qed to a filled box as it appears more prominent, but this can be easily reverted.

\section{Proof of Proposition \ref{prop:var as hat(mu)(4)}}
\label{sec:proof-arithmetic-prop}

To find the asymptotics of the integral \eqref{eq:var as K2}, we will study
the pointwise asymptotic behaviour of $K_{2}$. Even though we will only be able to determine the precise
asymptotics outside the so-called {\em singular set},
already used in ~\cite{ORW} and ~\cite{RW}, we will prove that the exceptional singular set is {\em small},
so that its contribution is negligible (see Lemma \ref{lem:sing set o(E/N^2)}). To quantify
the last statement, we will control the contribution using a Chebyshev-like inequality, so that
the corresponding error term will naturally involve the moments \eqref{eq:R def} of the covariance function
\eqref{eq:r def}. We improve upon the analysis of ~\cite{ORW} by working with the $6$th moment $\mathcal{R}_{6}(n)$
rather than $\mathcal{R}_{4}(n)$.

\subsection{The singular set}

For $r(x)$ bounded away from $1$ we may expand the $\frac{1}{\sqrt{1-r^2}}$
factor in \eqref{eq:K2 as E[|W1| |W2|]} and
related expressions into the Taylor series around $r=0$.
Since the moments of $r$ are ``small" (by Theorem \ref{thm:arith corr S6}, say),
a Chebyshev-like inequality, implies that
$r(x)$ is small outside a small set. %mk: for $r$-> $r(x)$ (removed for)
This is the main idea behind the notion of the singular set to follow.
We use a slightly stronger definition in order to endow the exceptional set
with a structure as a union of squares, necessary in order to find its contribution
to the integral \eqref{eq:var as K2}.
The following definitions are borrowed directly from ~\cite{ORW}, Section 6.1.

\begin{definition}
A point $x\in \T$ is a positive singular point if there
is a set of frequencies $\Lambda_x\subseteq \Lambda$ with density
$$\frac{|\Lambda_x|}{|\Lambda|}>\frac{7}{8}$$ for which $\cos (2\pi
\langle \lambda, x \rangle) >3/4$ for all $\lambda\in \Lambda_x$.
Similarly we define a negative singular point to be a point $x$
where there is a set $\tilde\Lambda_x\subseteq \Lambda$ of density
$>\frac{7}{8}$ for which $\cos (2\pi\langle \lambda, x\rangle) <-3/4$ for
all $\lambda\in \tilde\Lambda_x$.
\end{definition}

Let $M\approx \sqrt{\eigval_{n}}$ be a large integer.
We decompose the torus $\T$ as a union  of $M^2$
closed squares  $I_{\vec k}$ of side length
$1/M$ centered at $\vec k/M$, $\vec k\in \Z^2$. The squares have disjoint interiors. %mk: rephrased as the phrase `save to boundary overlaps' was jarring

\begin{definition} A square $I_{\vec k}$ is a positive (resp. negative)
singular square if it contains a positive (resp. negative) singular
point.
\end{definition}

\begin{definition}
The singular set $\sing=\sing_{n}$ is the union of all singular squares.
\end{definition}

Note that, by the definition, each singular square contains a singular point; however, points
in $\sing$ are not necessarily all singular. Let $y\in\sing$ be a point lying in a positive singular
cube, $x$ be the corresponding positive singular point lying in the same singular cube and
$\Lambda_{x}\subseteq \Lambda$ the frequency set prescribed by the definition of a positive
singular point. It is easy to see that
\begin{equation*}
\left| \cos (2\pi\langle \lambda, y\rangle) - \cos (2\pi\langle \lambda, x\rangle)\right|
\ll \frac{\sqrt{\eigval_{n}}}{M},
\end{equation*}
where the implied constant is absolute, so that one may choose $M\approx \sqrt{\eigval_{n}}$
for which the latter expression is $\le\frac{1}{4}$; it will then imply
$$\cos (2\pi\langle \lambda, y\rangle) \ge \frac{1}{2}$$
for every $\lambda\in\Lambda_{x}$. We then conclude that
\begin{equation*}
\begin{split}
r(y) &= \frac{1}{|\Lambda|}\sum\limits_{\lambda\in\Lambda_{x}}\cos (2\pi\langle \lambda, y\rangle)
+\frac{1}{|\Lambda|}\sum\limits_{\lambda\in\Lambda\setminus\Lambda_{x}}\cos (2\pi\langle \lambda, y\rangle)
\\&\ge \frac{1}{|\Lambda|}\sum\limits_{\lambda\in\Lambda_{x}}\frac{1}{2} -
\frac{1}{|\Lambda|}\sum\limits_{\lambda\in\Lambda\setminus\Lambda_{x}}1 \ge \frac{7}{16} - \frac{1}{8}
=\frac{5}{16}
\end{split}
\end{equation*}
and, similarly, if $y$ is lying in a negative square then $r(y)\le - \frac{5}{16}$. Hence we
have $|r(y)| \ge \frac{5}{16}$ on all of $\sing$.
We then write
\begin{equation*}
\mathcal{R}_{6}(n) \ge \meas(\sing)\cdot \left(\frac{5}{16} \right)^{4}
\end{equation*}
to obtain the Chebyshev-type inequality
\begin{equation}
\label{eq:meas(sing)=O(R6)}
\meas(\sing) \ll \mathcal{R}_{6}(n).
\end{equation}

It was proven (\cite{ORW}, Section 6.5) that if $S$ is any singular square then
its contribution to the integral \eqref{eq:var as K2} is
\begin{equation*}
\ll \int\limits_{S}|K_{2}(x)| dx \ll \frac{1}{M\sqrt{\eigval_{n}}}.
\end{equation*}
Since the number of the singular cubes is $$\ll M^{2}\meas(\sing),$$ the
total contribution of $\sing$ to \eqref{eq:var as K2} is bounded by
\begin{equation*}
\int\limits_{\sing}|K_{2}(x)| dx \ll M^{2}\meas(\sing) \cdot \frac{1}{M\sqrt{\eigval_{n}}}
=\meas(\sing)\frac{M}{\sqrt{\eigval_{n}}} \ll \mathcal{R}_{6}(n)
\end{equation*}
by \eqref{eq:meas(sing)=O(R6)} and $M\approx \sqrt{\eigval_{n}}$.
The latter is summarized in the following lemma:

\begin{lemma}[Cf. ~\cite{ORW}, Section 6.3]
\label{lem:sing set o(E/N^2)}
The contribution of the singular set to \eqref{eq:var as K2} is bounded by
\begin{equation*}
\int\limits_{\sing} |K_{2}(x)|dx = O\left(\mathcal{R}_{6}(n)\right).
\end{equation*}
\end{lemma}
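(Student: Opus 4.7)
The plan is to reduce the integral of $|K_{2}|$ over the singular set to the product of three pieces: (i) an upper bound for $\meas(\sing)$ obtained via a Chebyshev-type argument on $\mathcal{R}_{6}(n)$; (ii) a uniform bound for the number of singular squares in terms of $\meas(\sing)$; and (iii) an integrated bound for $|K_{2}|$ on a single square. The key arithmetic ingredient is that on $\sing$ the covariance $|r_{n}|$ is bounded away from $0$, so moment bounds on $r_{n}$ convert directly into measure bounds on $\sing$.

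First I would establish the pointwise lower bound $|r_{n}(y)|\geq \tfrac{5}{16}$ for every $y\in\sing$. If $y$ lies in a positive singular square and $x$ is the positive singular point guaranteed by the definition, then for each $\lambda\in\Lambda_{x}$ the difference $|\cos(2\pi\langle\lambda,y\rangle)-\cos(2\pi\langle\lambda,x\rangle)|$ is controlled by $\|\lambda\|/M\ll \sqrt{\eigval_{n}}/M$; choosing $M\approx\sqrt{\eigval_{n}}$ with a sufficiently large constant makes this at most $\tfrac{1}{4}$, so $\cos(2\pi\langle\lambda,y\rangle)\geq\tfrac{1}{2}$ on $\Lambda_{x}$. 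Splitting the defining sum for $r_{n}(y)$ in \eqref{eq:r def} between $\Lambda_{x}$ (density $>\tfrac{7}{8}$, each term $\geq\tfrac{1}{2}$) and its complement (density $<\tfrac{1}{8}$, each term $\geq -1$) gives $r_{n}(y)\geq \tfrac{7}{16}-\tfrac{1}{8}=\tfrac{5}{16}$; the negative case is symmetric. Given this lower bound, Chebyshev applied to $\mathcal{R}_{6}(n)=\int_{\T}|r_{n}|^{6}\,dx$ immediately yields $\meas(\sing)\ll \mathcal{R}_{6}(n)$.

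Next I would combine this with a per-square estimate $\int_{S}|K_{2}(x)|\,dx \ll 1/(M\sqrt{\eigval_{n}})$ for any singular square $S$. This is the technically hardest step, and is precisely the input borrowed from \cite{ORW}; the difficulty is that $K_{2}$ has a $(1-r_{n}^{2})^{-1/2}$ singularity (see Lemma \ref{lem:K2<<1/sqrt(1-r^2)}), and although this singularity is integrable, care is required to control it uniformly on a square of side $1/M$. Granting this per-square bound, the number of singular squares is at most $M^{2}\meas(\sing)$ since each has area $1/M^{2}$. Summing, the total contribution is
\[
\int_{\sing}|K_{2}(x)|\,dx \ll M^{2}\meas(\sing)\cdot\frac{1}{M\sqrt{\eigval_{n}}}=\frac{M}{\sqrt{\eigval_{n}}}\,\meas(\sing)\ll \mathcal{R}_{6}(n),
\]
using $M\approx\sqrt{\eigval_{n}}$ in the last step, which completes the proof. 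The main obstacle is the per-square estimate, essentially a local Kac-Rice calculation near a possibly degenerate point, so this is the step where the reference to \cite{ORW} is doing real work rather than bookkeeping.
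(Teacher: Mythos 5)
Your proposal is correct and follows essentially the same route as the paper: the pointwise lower bound $|r_n(y)|\ge 5/16$ on $\sing$ via the cosine-perturbation argument, the Chebyshev-type bound $\meas(\sing)\ll\mathcal{R}_6(n)$, the per-square estimate $\int_S|K_2|\,dx\ll 1/(M\sqrt{\eigval_n})$ imported from \cite{ORW}, and the final count of at most $M^2\meas(\sing)$ singular squares. The paper likewise treats the per-square estimate as a black box from \cite{ORW} (Section 6.5), so your identification of that step as the one doing the real work is accurate.
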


%\subsection{Proof of Proposition \ref{prop:var as hat(mu)(4)}}

Lemma \ref{lem:sing set o(E/N^2)} bounds the contribution of the singular set
to the integral in \eqref{eq:var as K2}. The main contribution comes from the nonsingular
set; in order to evaluate it we will need a precise point-wise estimate for
$K_{2}(x)$ in this range; this is given by the following proposition, up to admissible error
terms. (To verify the admissibility, see Lemmas \ref{lem:asymp X,Y} and \ref{lem:asymp r,D,H}).
%mk: admissability->admissibility

\begin{proposition}[``Intermediate range" asymptotics for $K_{2}$]
\label{prop:$2$-pnt using covar nonsing}
For $x\in \T\setminus\sing$ we have
\begin{equation}
\label{eq:K2 asymp r,X,Y}
K_{2}(x) = \frac{1}{4}+ L_{2}(x) + \epsilon(x),
\end{equation}
where the main term $L_{2}(x)$ is given by
\begin{equation}
\label{eq:L2 def}
\begin{split}
L_{2}(x) &= \frac{1}{4}\bigg(\frac{r^2}{2}+\frac{\tr{X}}{2}+
\frac{\tr(Y^{2})}{8}+\frac{3}{8}r^{4}-\frac{\tr(XY^2)}{16} -\frac{\tr(X^{2})}{32}
\\&+\frac{\tr(Y^{4})}{256}+\frac{\tr(Y^2)^2}{512}
-\frac{\tr{X}\tr(Y^{2})}{32}+\frac{1}{4}r^2\tr{X}+ \frac{1}{16}r^2\tr(Y^2)
 \bigg)
\end{split}
\end{equation}
with $X=X_{n}(x)$, $Y=Y_{n}(x)$ and $r=r_{n}(x)$,
and the error term $\epsilon(x)$ is bounded by
\begin{equation}
\label{eq:delta error bound}
|\epsilon(x)| =  O\left(r(x)^6+\tr(X^3)+\tr(Y^6)\right)
\end{equation}
\end{proposition}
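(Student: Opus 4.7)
The plan is to obtain \eqref{eq:K2 asymp r,X,Y} by carrying out two Taylor expansions inside the representation \eqref{eq:K2 as E[|W1| |W2|]}: one for the scalar prefactor $1/\sqrt{1-r^{2}}$, and one for the Gaussian expectation $F(X,Y) := \E[\|V_{1}\|\|V_{2}\|]$ viewed as a function of the small perturbation matrix $\Delta = \left(\begin{smallmatrix} X & Y \\ Y & X \end{smallmatrix}\right)$ around $\Delta = 0$. The argument leading to \eqref{eq:meas(sing)=O(R6)} shows that on $\T\setminus \sing$ one has the useful bound $|r(x)| \le 5/16$, so the prefactor has the convergent Taylor expansion
\[
\frac{1}{\sqrt{1-r^{2}}} = 1 + \frac{r^{2}}{2} + \frac{3r^{4}}{8} + O(r^{6})
\]
with an absolute implied constant. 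All of the real work is in expanding $F$.

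To expand $F(X,Y)$ I would represent $(V_{1},V_{2}) = (I+\Delta)^{1/2}(W_{1},W_{2})$ with $(W_{1},W_{2})$ a standard Gaussian on $\R^{4}$, and expand the matrix square root as $I + \tfrac{1}{2}\Delta - \tfrac{1}{8}\Delta^{2}+\cdots$, using Lemma~\ref{lem:K2<<1/sqrt(1-r^2)} (which gives $\|X\|,\|Y\|=O(1)$) to keep track of the convergence range. The elegant way to organize the resulting multivariate expansion is Berry's technique from \cite{Berry 2002}: expand each norm $\|V_{i}\|$ in an orthogonal basis for the standard Gaussian measure on $\R^{2}$, so that $F$ becomes an explicit power series in the entries of $\Delta$ with coefficients computed from elementary Gaussian moments (equivalently, via Isserlis's rule). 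Before computing coefficients, the symmetry $V_{2} \to -V_{2}$ (which preserves $\|V_{2}\|$ while flipping the sign of the off-diagonal blocks $Y$) shows that $F$ is an \emph{even} function of $Y$, immediately eliminating all odd-in-$Y$ contributions and agreeing with the structure of \eqref{eq:L2 def}. I would grade the expansion by weighting $r$ and $Y$ as order one and $X$ as order two --- the natural grading suggested by the ambient sizes $r, Y \sim 1/\sqrt{\eigspcdim}$ and $X \sim 1/\eigspcdim$ on the torus --- and keep all terms up to graded order four.

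The main obstacle is the bookkeeping at graded orders three and four: the eleven summands in $L_{2}(x)$ arise from many distinct sources --- the quartic-in-$Y$ term in $F$, the quadratic-in-$X$ term, the mixed $XY^{2}$, $r^{2}X$ and $r^{2}Y^{2}$ cross terms, and the multiplication of the zeroth-order expectation $\pi/2$ by the $r^{2}/2$ and $3r^{4}/8$ terms of the prefactor --- and each must be reduced to a rotation-invariant trace polynomial via $\E[W_{i}W_{j}] = \delta_{ij}$ and the Isserlis moment formula, then assembled into $\tr(X)$, $\tr(X^{2})$, $\tr(Y^{2})$, $\tr(Y^{4})$, $\tr(Y^{2})^{2}$, $\tr(XY^{2})$ and their products with $r^{2}$. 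Finally, for the error bound \eqref{eq:delta error bound}, every Taylor remainder is of graded order $\ge 6$, hence (using $\|X\|,\|Y\|=O(1)$ from Lemma~\ref{lem:K2<<1/sqrt(1-r^2)}, Cauchy-Schwarz for trace inner products, and the elementary bound $|\tr(X^{a}Y^{b})| \le \tr(X^{2a})^{1/2}\tr(Y^{2b})^{1/2}$) is dominated by a linear combination of $r^{6}$, $\tr(X^{3})$ and $\tr(Y^{6})$, yielding \eqref{eq:delta error bound}.
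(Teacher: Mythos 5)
Your high-level architecture matches the paper's: expand the prefactor $1/\sqrt{1-r^{2}}$ on the nonsingular set, separately Taylor-expand $F(X,Y)=\E[\|V_{1}\|\,\|V_{2}\|]$ about the identity covariance, and multiply. (One small correction: the singular-set argument shows $|r|\ge 5/16$ \emph{on} $\sing$, not $|r|\le 5/16$ off it; what you actually get on $\T\setminus\sing$, directly from the definition of a singular point, is a bound such as $|r|\le 31/32$, which is all that is needed for the prefactor expansion.) The genuine gap is in the engine that expands $F$. You propose to write $(V_{1},V_{2})=(I+\Delta)^{1/2}(W_{1},W_{2})$, expand the matrix square root, and then expand each norm in an orthogonal basis for the Gaussian measure, reducing everything to Isserlis-type moments. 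But $\|\cdot\|$ is not a polynomial, so Isserlis does not apply directly, and the Taylor expansion of $w\mapsto\|w+\varepsilon u\|$ produces negative powers of $\|w\|$: the $k$-th derivative of the norm is of size $\|w\|^{1-k}$, and in dimension $2$ the expectation $\E\left[\|W\|^{1-k}\right]$ diverges already for $k\ge 3$. Since the proposition requires the expansion to fourth graded order, the term-by-term expectations in your scheme are not all finite, and the proposal as written does not go through.

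This is precisely the difficulty that Berry's method --- which is what the paper actually uses, and which is not the Hermite-type expansion you describe --- is designed to circumvent. One writes $\sqrt{\alpha}=\frac{1}{\sqrt{2\pi}}\int_{0}^{\infty}(1-e^{-\alpha t/2})\,t^{-3/2}\,dt$ for each norm, so that $\E[\|V_{1}\|\,\|V_{2}\|]$ becomes a double integral over $(t,s)$ of the Laplace transform $f(t,s)=\E\left[\exp\left(-\tfrac12(t\|V_{1}\|^{2}+s\|V_{2}\|^{2})\right)\right]$, an explicit determinant that \emph{can} be Taylor-expanded in $X,Y$ by elementary block-determinant algebra. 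Even then there is a further subtlety you would need to address: the naive remainder $O(\tr(X^{3})+\tr(Y^{6}))$ is independent of $(t,s)$, and its integral against $(ts)^{-3/2}$ diverges at the origin; the paper repairs this by working with $g(t,s)=f(0,0)-f(t,0)-f(0,s)+f(t,s)$, which vanishes on the axes and upgrades the remainder to $O\left(\min\{t,1\}\min\{s,1\}(\tr(X^{3})+\tr(Y^{6}))\right)$. Your observations that $F$ is even in $Y$ and that $\rk X=1$ (whence $\tr(X^{2})=(\tr X)^{2}$ and $\tr(YXY)=\tr(XY^{2})$) are correct and do simplify the bookkeeping, but they do not substitute for a convergent expansion scheme.
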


Proposition~\ref{prop:$2$-pnt using covar nonsing} will be proved in Section~\ref{sec:asymptotics2ptcorelation}.
Assuming it, we arrive at the proof of Proposition~\ref{prop:var as hat(mu)(4)}.
\begin{proof}[Proof of Proposition \ref{prop:var as hat(mu)(4)}
assuming Proposition \ref{prop:$2$-pnt using covar nonsing}]

We invoke Proposition \ref{prop:var as K2} to express the nodal length variance.
Since the contribution of $K_{2}(x)$ to the integral \eqref{eq:var as K2} on $\sing$ is $$O\left(\eigval_{n}\cdot\mathcal{R}_{6}(n) \right)$$
by Lemma \ref{lem:sing set o(E/N^2)}, we have
\begin{equation}
\label{eq:var nonsing + error}
\begin{split}
\var(\len_{n}) &= \frac{\eigval_{n}}{2} \left[ \int\limits_{\nonsing} \left( K_{2}(x) - \frac{1}{4} \right) dx \right]
+ O\left(\eigval_{n}\cdot\mathcal{R}_{6}(n) \right) \\&=
\frac{\eigval_{n}}{2} \int\limits_{\nonsing} L_{2}(x)  dx
+ O\left(\eigval_{n}\cdot \int\limits_{\nonsing} |\epsilon(x)| dx\right) + O\left(\eigval_{n}\cdot\mathcal{R}_{6}(n) \right),
\end{split}
\end{equation}
by Proposition \ref{prop:$2$-pnt using covar nonsing}. Note that
\begin{equation*}
\int_{\nonsing}|\epsilon(x)| dx \le \int_{\T}|\epsilon(x)| dx = O\left(\eigval_{n}\cdot\mathcal{R}_{6}(n) \right),
\end{equation*}
by \eqref{eq:delta error bound} and Lemma \ref{lem:asymp X,Y} to follow
(see parts \ref{it:int(tr(X^3))=o(1/N^2)}-\ref{it:int(tr(Y^6))=o(1/N^2)}),
so that \eqref{eq:var nonsing + error} is
\begin{equation}
\label{eq:var nonsing + error R6}
\var(\len_{n}) =
\frac{\eigval_{n}}{2} \int\limits_{\nonsing} L_{2}(x)  dx
+ O\left(\eigval_{n}\cdot\mathcal{R}_{6}(n) \right).
\end{equation}
We may further note that, since $L_{2}(x)$ is uniformly bounded thanks to Lemma \ref{lem:K2<<1/sqrt(1-r^2)},
\begin{equation*}
\frac{\eigval_{n}}{2} \int\limits_{\sing} L_{2}(x)  dx = O\left( \eigval_{n}\cdot \meas(\sing)  \right) =
O\left(\eigval_{n} \cdot\mathcal{R}_{6}(n)  \right),
\end{equation*}
so that we may rewrite \eqref{eq:var nonsing + error R6} as
\begin{equation}
\label{eq:var as L2 + error R6}
\var(\len_{n}) =
\frac{\eigval_{n}}{2} \int\limits_{\T} L_{2}(x)  dx
+ O\left(\eigval_{n}\cdot\mathcal{R}_{6}(n) \right),
\end{equation}
the upshot being that we are now able to use the definition
\eqref{eq:L2 def} of $L_{2}$ and integrate the RHS of \eqref{eq:L2 def} term by term, as
in Lemma \ref{lem:asymp X,Y} (where the domain of integration is the whole of torus $\T$
rather than $\nonsing$).
We then perform the term-wise integration of \eqref{eq:L2 def} to obtain (with Lemma \ref{lem:asymp X,Y})
\begin{equation*}
\begin{split}
&4\cdot\int\limits_{\T}L_{2}(x)   dx =
\frac{1}{\eigspcdim_{n}} \left(  \frac{1}{2} - \frac{1}{2}\cdot 2 +\frac{1}{8} \cdot 4  \right)
\\&+\frac{1}{\eigspcdim_{n}^2}\bigg( -\frac{1}{2}\cdot 2 - \frac{1}{8} \cdot 4 +
\frac{3}{8}\cdot 3+\frac{1}{16}\cdot 4  - \frac{1}{32}\cdot 8 +
\frac{1}{256}\cdot 2(11+\widehat{\mu}_{n}(4)^{2}) \\&+\frac{1}{512}\cdot 4(7+\widehat{\mu}_{n}(4)^2)
+\frac{1}{32}\cdot 8 -\frac{1}{4}\cdot 2+ \frac{1}{16}\cdot 8
 \bigg) +O\left(\mathcal{R}_{5}(n) \right) \\&= \frac{1}{\eigspcdim_{n}^2} \cdot \frac{1+\widehat{\mu}_{n}(4)^2}{64}+O\left(\mathcal{R}_{5}(n) \right).
\end{split}
\end{equation*}
Collecting all the constants encountered and bearing in mind \eqref{eq:var as L2 + error R6}
yields \eqref{eq:var as hat(mu)(4)}, which is the statement of the present proposition.
\end{proof}

\subsection{Some remarks on arithmetic Berry cancellation}%no need to capitalize arithmetic

\label{sec:discussion Berry's cancellation}

%IW: HERE BELOW CORRECTED THE FACTOR IN FRONT OF H^2 - 1/E_{n}^2, not 1/E_{n}, as well as in the
% next formula

While the constant term $\frac{1}{4}$ cancels out with the expectation squared,
the leading nonconstant term of the scaled $2$-point correlation function
(i.e. the leading term of $K_{2}(x)-\frac{1}{4}$) is
\begin{equation*}
\frac{1}{8}\left( r^2+\tr{X} + \frac{\tr(Y^2)}{4} \right) \approx
\frac{1}{8} \left( r^2 - \frac{2}{\eigval_{n}} DD^{t} + \frac{1}{\eigval_{n}^{2}}\tr{H^2}   \right),
\end{equation*}
where we neglected some lower-order terms.
Denote the expression in parenthesis
\begin{equation}
\label{eq:v def}
v(x) := r^2 - \frac{2}{\eigval_{n}} DD^{t} + \frac{1}{\eigval_{n}^{2}}\tr{H^2}.
\end{equation}
%\marginpar{\tiny{mk: I would put a footnote to make a point that would be digression or otherwise not good to put in the main text. Most of the footnotes here appear to be relevant enough that they should appear in the main text itself. \\ iw: I moved a couple of footnotes to the text body (appears in fixmes). I leave the rest to decide
%(I think they are ok the way they are, aren't they?)}}
% \fixme{Moved the remark on the cancellation of $\frac{1}{4}$ to the beginning of the section.
% Is it ok? pk: yes}

We may substitute \eqref{eq:r def}, \eqref{eq:D as sum} and \eqref{eq:H summation} into
\eqref{eq:v def} to rewrite $v(x)$ as
\begin{equation*}
\begin{split}
v(x) = &\frac{1}{\eigspcdim^2}\sum\limits_{\lambda_{1},\lambda_{2}\in \Lambda_{n}}
e\left(  \langle \lambda_{1}+\lambda_{2} , x\rangle  \right)
+ \frac{2}{\eigspcdim^2}\sum\limits_{\lambda_{1},\lambda_{2}\in \Lambda_{n}}
\frac{\lambda_{1} \lambda_{2} ^{t}}{\eigval_{n}/4\pi^2}
e\left( \langle \lambda_{1}+\lambda_{2} , x\rangle \right)
\\&+ \frac{1}{\eigspcdim^2}\sum\limits_{\lambda_{1},\lambda_{2}\in \Lambda_{n}}
\frac{(\lambda_{1} \lambda_{2} ^{t})^2}{(\eigval_{n}/4\pi^2)^2}
e\left(  \langle \lambda_{1}+\lambda_{2} , x\rangle \right)
\\&= \frac{1}{\eigspcdim^2}\sum\limits_{\lambda_{1},\lambda_{2}\in \Lambda_{n}}
\left( 1 + 2\frac{\lambda_{1} \lambda_{2} ^{t}}{n} + \frac{(\lambda_{1} \lambda_{2} ^{t})^2}{n^2}  \right) e\left(  \langle \lambda_{1}+\lambda_{2} , x\rangle \right),
\end{split}
\end{equation*}
on recalling \eqref{eq:eigval=4pi^2 n}.

Note that
\begin{equation*}
 \frac{\lambda_{1} \lambda_{2} ^{t}}{n} = \cos\theta(\lambda_{1},\lambda_{2}),
\end{equation*}
where $\theta(\cdot, \cdot)$ is the angle between two vectors in $\R^{2}$.
Thus we may write, up to lower order terms,
\begin{equation*}
\begin{split}
v(x) &= \frac{1}{\eigspcdim^2}\sum\limits_{\lambda_{1},\lambda_{2}\in \Lambda_{n}}
\left( 1 + 2\cos\theta(\lambda_{1},\lambda_{2}) + \cos\left(\theta(\lambda_{1},\lambda_{2})\right)^2  \right)
e\left(  \langle \lambda_{1}+\lambda_{2} , x\rangle \right)
\\&= \frac{4}{\eigspcdim^2}\sum\limits_{\lambda_{1},\lambda_{2}\in \Lambda_{n}}
\cos\left(\frac{\theta(\lambda_{1},\lambda_{2})}{2}\right)^4
e\left(  \langle \lambda_{1}+\lambda_{2} , x\rangle \right),
\end{split}
\end{equation*}
by the usual trigonometric identities.
Upon integrating \eqref{eq:var as K2}, all the summands vanish except
for $\lambda_{1}+\lambda_{2} = 0$; the corresponding angle $\theta$ is given by
$$\theta=\theta(\lambda_{1},\lambda_{2})=\pi,$$
so that $\cos(\theta/2) = 0$.
Thus the arithmetic cancellation phenomenon in the length variance amounts to
$\cos(\theta/2)^4$ vanishing at $\theta = \pi$.

\subsection{Integrating matrix elements}

We may obtain an asymptotic expression
for the nodal length variance upon using \eqref{eq:var as K2} with Proposition \ref{prop:$2$-pnt using covar nonsing},
provided that we are able to integrate the expressions
on the RHS of \eqref{eq:K2 asymp r,X,Y}, term-wise. This is done in Lemma \ref{lem:asymp X,Y} to follow
immediately. We choose to control the various error terms encountered in terms of the moments of $r$,
$\mathcal{R}_{k}$ (recall the notation \eqref{eq:R def}). It will turn out that we will be able to control the error terms in terms of $\mathcal{R}_{5}$ (and $\mathcal{R}_{6}
\le \mathcal{R}_{5}$), admissible thanks to Theorem \ref{thm:arith corr S6}
via a simple Cauchy-Schwarz argument (see the proof of Theorem \ref{thm:var(L) asymp united}
in Section \ref{sec:proof thm var united}).
The proof of Lemma \ref{lem:asymp X,Y} is left to Section \ref{sec:proof of int mat elem}.

%mk: Changed the formatting in the lemma so that the equations are aligned better
\begin{lemma} As $\eigspcdim_{n}\rightarrow\infty$ we have the
  following estimates.

\label{lem:asymp X,Y}

\vspace{2pt}

\begin{center}
\begin{enumerate}[$1.$]\setlength{\itemsep}{4pt}\setlength{\itemindent}{35pt}

\item
\label{it:tr(X)}
$
\int\limits_{\T} \tr{X(x)}dx = -\frac{2}{\eigspcdim_{n}}-\frac{2}{\eigspcdim_{n}^2}+O\left( \mathcal{R}_{6}(n)  \right).
$

\item
\label{it:tr(Y^2)}
$
\int\limits_{\T} \tr(Y(x)^{2})dx =\frac{4}{\eigspcdim_{n}}-\frac{4}{\eigspcdim_{n}^2} + O\left( \mathcal{R}_{6}(n)  \right).
$

\item
\label{it:tr(XY^2)}
$
\int\limits_{\T} \tr(X(x)Y(x)^2)dx = -\frac{4}{\eigspcdim_{n}^2}+O\left(\mathcal{R}_{5}(n) \right)
$

\item
\label{it:tr(X^2)}
$
\int\limits_{\T} \tr(X(x)^2)dx = \frac{8}{\eigspcdim_{n}^2}+O\left( \mathcal{R}_{6}(n)  \right)
$

\item
\label{it:tr(Y^4)}
$
\int\limits_{\T} \tr(Y(x)^4)dx = \frac{2}{\eigspcdim_{n}^2}(11+\widehat{\mu}_{n}(4)^{2}) +
O\left( \mathcal{R}_{6}(n)  \right).
$

\item
\label{it:tr(Y^2)^2}
$
\int\limits_{\T} \tr(Y(x)^2)^2dx = \frac{4}{\eigspcdim_{n}^2}(3+\widehat{\mu}_{n}(4)^2)+O\left( \mathcal{R}_{6}(n)  \right)
$

\item
\label{it:tr(X)tr(Y^2)}
$
\int\limits_{\T} \tr{X(x)}\tr(Y(x)^2)dx = -\frac{8}{\eigspcdim_{n}^2} + O\left( \mathcal{R}_{6}(n)  \right)
$

\item
\label{it:r^2tr(X)}
$
\int\limits_{\T} r(x)^2\tr{X(x)}dx = -\frac{2}{\eigspcdim_{n}^2}+O\left( \mathcal{R}_{6}(n)  \right)
$

\item
\label{it:r^2tr(Y^2)}
$
\int\limits_{\T} r(x)^2\tr(Y(x)^2)dx = \frac{8}{\eigspcdim_{n}^2} +O\left( \mathcal{R}_{6}(n)  \right).
$

\item
\label{it:int(tr(X^3))=o(1/N^2)}

$
\int\limits_{\T} \tr(X(x)^3)dx =  O(\mathcal{R}_{6}(n)).
$

\item
\label{it:int(tr(Y^6))=o(1/N^2)}

$
\int\limits_{\T} \tr(Y(x)^6)dx =  O\left( \mathcal{R}_{6}(n)     \right).
$

\end{enumerate}
\end{center}

\end{lemma}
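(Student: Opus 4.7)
The plan is to reduce every one of the eleven integrals to explicit sums over the spectral correlation sets of $\Lambda_n$ by Fourier analysis on $\T$. The key observation is that $r(x)$, $D(x)$, and $H(x)$ all have explicit Fourier expansions (see \eqref{eq:r def}, \eqref{eq:D as sum}, \eqref{eq:H summation}) supported on $\Lambda_n$; consequently any product of these quantities expands as a sum of exponentials $e(\langle \lambda_1 + \cdots + \lambda_k, x\rangle)$, and integration over $\T$ selects exactly those configurations with $\lambda_1 + \cdots + \lambda_k = 0$. To handle the factor $(1 - r^2)^{-1}$ appearing in $X$ and $Y$, I would use the geometric series $(1-r^2)^{-1} = \sum_{j \ge 0} r^{2j}$; the leading summands $X_0 := -\frac{2}{\eigval_n} D^t D$ of $X$ and $Y_0 := -\frac{2}{\eigval_n} H$ of $Y$ will produce the main terms on each right-hand side, while the remaining summands contribute only to the error.

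For each leading trace I would apply the rank-one identity $(\lambda^t \lambda)(\mu^t \mu) = \langle \lambda, \mu\rangle\, \lambda^t \mu$, which reduces cyclic traces to products of scalar inner products $\langle \lambda_i, \lambda_{i+1}\rangle$. After integrating and imposing $\sum \lambda_i = 0$, the leading contribution comes from the \emph{quasi-diagonal} solutions, where the indices pair up into opposite pairs $\lambda_i = -\lambda_j$. For items 1, 2, 4, 7, 8, 9, this reduces to length-$2$ correlations and produces the $O(1/\eigspcdim_n)$ and $O(1/\eigspcdim_n^2)$ terms via $\langle \lambda, -\lambda\rangle = -n$ and $\eigval_n = 4\pi^2 n$. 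For items 5 and 6 a length-$4$ correlation arises; the three pairings of $\{1,2,3,4\}$ each produce a double sum of the form $\sum_{\lambda, \mu} \langle \lambda, \mu\rangle^{2j}$, and applying the trigonometric identity $\cos^4\theta = \tfrac{3}{8} + \tfrac{1}{2}\cos 2\theta + \tfrac{1}{8}\cos 4\theta$ together with $\widehat{\mu}_n(2) = 0$ (from the $4$-fold rotational symmetry of $\Lambda_n$) isolates precisely the $\widehat{\mu}_n(4)^2$ coefficient stated in the lemma. Item 3 is handled by the same cyclic-product computation over the three pairings of $\{1,2,3,4\}$, yielding the leading constant $-4/\eigspcdim_n^2$.

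The main obstacle I anticipate is the uniform control of the error terms, which come from two sources: non-quasi-diagonal spectral correlations, and higher-order tail terms in the expansion of $(1-r^2)^{-1}$. Non-quasi-diagonal contributions to length-$4$ and length-$6$ correlations are bounded by $|S_6(n)|$ (using Cauchy-Schwarz on shorter correlations when needed), which gives $O(\mathcal{R}_6(n))$ thanks to \eqref{eq:int r^6=1/N^6 S6(n)} and Theorem \ref{thm:arith corr S6}. The Taylor tail of $(1-r^2)^{-1}$ carries additional even powers of $r$, again controlled by $\mathcal{R}_6(n)$ together with the uniform bounds on $X$ and $Y$ from Lemma \ref{lem:K2<<1/sqrt(1-r^2)}. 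The error in item 3 degrades to $\mathcal{R}_5(n)$ because the subleading term of $Y^2$ carries an explicit single factor of $r$, from the mixed product $H \cdot \frac{r}{1-r^2} D^tD$ in $Y^2$, producing a $5$-point rather than $6$-point correlation error. Finally, items 10 and 11 require only upper bounds: expanding $\tr(X^3)$ and $\tr(Y^6)$ as sums over $6$-tuples with $\sum_{i=1}^{6} \lambda_i = 0$ and bounding the inner products by powers of $n$, the count is controlled directly by $|S_6(n)|$, and Theorem \ref{thm:arith corr S6} supplies $O(\mathcal{R}_6(n))$.
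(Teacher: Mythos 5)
Your proposal follows essentially the same route as the paper: Fourier--expanding $r$, $D$ and $H$ to reduce each integral to spectral correlation sums (this is the content of the paper's Lemma~\ref{lem:asymp r,D,H}), exploiting the diagonal pairing structure of the $4$-correlation set together with the identity $\cos^{4}\theta=\tfrac38+\tfrac12\cos 2\theta+\tfrac18\cos 4\theta$ and $\widehat{\mu}_{n}(2)=0$ to extract the $\widehat{\mu}_{n}(4)^{2}$ coefficients (Lemma~\ref{lem:B4=3/4+1/8hat(mu)(4)2}), and controlling the tails of $(1-r^{2})^{-1}$ and the six-factor error terms via $|S_{6}(n)|$ and the uniform boundedness of $X$ and $Y$. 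The one step to make explicit is that the series expansion of $(1-r^{2})^{-1}$ is valid only on the nonsingular set, where $|r|$ is bounded away from $1$; the singular set is then disposed of by $\meas(\sing)\ll\mathcal{R}_{6}(n)$ combined with $X,Y=O(1)$, which are exactly the ingredients you cite.
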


\section{Asymptotics for the $2$-point correlation function}
\label{sec:asymptotics2ptcorelation}
The ultimate goal of this section is to prove Proposition \ref{prop:$2$-pnt using covar nonsing}.
To establish the desired asymptotics for \eqref{eq:K2 as E[|W1| |W2|]}, one needs
to understand the behaviour of $\E[\| V_{1} \| \cdot \| V_{2} \|]$ where $(V_{1},V_{2})$ is
a centered Gaussian with covariance $\Omega_{n}$, the latter being a small perturbation of
the indentity matrix, given by \eqref{eq:Omega=I+(X Y Y X)}, where both $X$ and $Y$ are small.
That is, we expand $F(X,Y)=\E[\| V_{1} \| \cdot \| V_{2} \|]$ into a Taylor polynomial of
the entries of $X,Y$, about $X=Y=0$.

The degree of the required Taylor polynomial in each
of the variables is determined according to its (average) order of magnitude and
the admissible error term. In principle, one may compute the polynomial by brute force, computing each
derivative separately, but this approach results in a long and tedious
computation. %, prone to mistakes).
In this manuscript we employ Berry's method
%that was used in
~\cite{Berry 2002} in order
to compute the nodal length fluctuations for the random monochromatic
planar waves.
The following lemma provides the Taylor approximation of $F(X,Y)$
for perturbed standard Gaussian.

\begin{lemma}
\label{lem:E[|W1||W2|] Taylor}

Let $\Delta \in M_{4}(\R)$ be a positive definite matrix such that
\begin{equation*}
\Delta = I+\left(  \begin{matrix}
X &Y \\ Y &X
\end{matrix} \right),
\end{equation*}
where $X,Y\in M_{2}(\R)$ are symmetric, $\mbox{rank}(X)=1$. %mk: it said rk(X), changed to rank.
Define
\begin{equation*}
F(X,Y) = \E\left[\|W_{1}\| \cdot \| W_{2} \| \right],
\end{equation*}
where $(W_{1},W_{2})\in\R^{2}\times\R^{2}$ is centered Gaussian with covariance $\Delta$.
Then
\begin{equation*}
\begin{split}
F(X,Y) &=\frac{\pi}{2} \bigg( 1 +\frac{\tr{X}}{2}+
\frac{\tr(Y^{2})}{8}-\frac{\tr(XY^2)}{16} -\frac{\tr(X^{2})}{32}
+\frac{\tr(Y^{4})}{256}\\&+\frac{\tr(Y^2)^2}{512}
-\frac{\tr{X}\tr(Y^{2})}{32}
 \bigg) + O\left(\tr(X^{3})+\tr(Y^{6})\right).
\end{split}
\end{equation*}

\end{lemma}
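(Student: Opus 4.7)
The plan is to apply Berry's method, which converts $F(X,Y)$ into an explicit double integral that can be expanded to the required order in the entries of $X$ and $Y$. The starting device is the elementary identity
\[
|a| = \frac{1}{2\sqrt{\pi}} \int_0^\infty t^{-3/2}\bigl(1 - e^{-ta^2}\bigr)\, dt,
\]
applied separately to $\|W_1\|$ and $\|W_2\|$. Interchanging expectation and integration via Fubini and using the Gaussian moment-generating function
\[
\E\bigl[e^{-t\|W_1\|^2 - s\|W_2\|^2}\bigr] = \det\bigl(I_4 + 2A\Delta\bigr)^{-1/2}, \qquad A = \begin{pmatrix} tI_2 & 0 \\ 0 & sI_2 \end{pmatrix},
\]
and setting $G(t,s) := \det(I_4 + 2A\Delta)^{-1/2}$, one obtains
\[
F(X,Y) = \frac{1}{4\pi}\int_0^\infty\!\!\int_0^\infty \frac{G(0,0) - G(t,0) - G(0,s) + G(t,s)}{(ts)^{3/2}}\, dt\, ds.
\]
The combination of the four $G$-values in the numerator renders the integrand absolutely integrable at both endpoints of each variable.

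Next, one expands $G(t,s)$ as a Taylor polynomial in the entries of $X$ and $Y$ about $X = Y = 0$, using $\log\det = \tr\log$ together with $\tr\log(I+M) = \tr M - \tfrac12\tr(M^2) + \tfrac13\tr(M^3) - \cdots$. Inserting the block form of $I_4 + 2A\Delta$ produces a formal series in which each monomial in the entries of $X, Y$ carries a coefficient that is an explicit rational function of $t$ and $s$. Writing $G(t,0)$ and $G(0,s)$ by the same device isolates the ``pure $X$'' contributions, so that the regularised numerator picks out precisely the monomials involving a mixed $X\cdot Y$ product or higher powers. Term-wise integration of each such monomial against $(ts)^{-3/2}\,dt\,ds$ then collapses to a rational multiple of $\pi$ via elementary Gamma/Beta evaluations. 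Finally, collecting surviving terms by their trace structure -- using the symmetry of $X, Y$ and trace cyclicity -- one identifies the invariants $\tr(X), \tr(X^2), \tr(Y^2), \tr(XY^2), \tr(Y^4), \tr(Y^2)^2, \tr(X)\tr(Y^2)$ with their precise rational coefficients as displayed in the statement.

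The main obstacle is the combinatorial bookkeeping: one must keep track of all contributing monomials up to third order in $X$ and sixth order in $Y$, and verify that the many $t,s$-integrals assemble into exactly the stated rational combinations -- in particular that no higher-degree invariants survive at the announced order. The rank-one hypothesis $\mbox{rank}(X) = 1$ is exploited via $X^k = (\tr X)^{k-1}\, X$, which collapses potentially independent $X$-invariants and streamlines the treatment of mixed terms such as $\tr(X^2 Y^2)$. The error bound $O(\tr(X^3) + \tr(Y^6))$ then arises from the Taylor remainder: every omitted monomial of sufficiently high order in $X$ or $Y$ is controlled, via the Cauchy--Schwarz inequality and the uniform entry-wise bound on $X,Y$ supplied by Lemma \ref{lem:K2<<1/sqrt(1-r^2)}, by $\tr(X^3) + \tr(Y^6)$.
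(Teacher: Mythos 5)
Your proposal follows essentially the same route as the paper: Berry's integral identity for $\sqrt{\alpha}$, the Gaussian moment generating function written as $\det(\cdot)^{-1/2}$, the four-term regularisation $G(0,0)-G(t,0)-G(0,s)+G(t,s)$, Taylor expansion of the determinant in the entries of $X,Y$ (the paper uses a block-determinant factorisation rather than $\log\det=\tr\log$, but this is cosmetic), the rank-one identity $\tr(X^2)=(\tr X)^2$, and term-wise evaluation of the elementary $t,s$-integrals. The one point a full write-up must make explicit is that the Taylor \emph{remainder} $O(\tr(X^3)+\tr(Y^6))$ is independent of $t,s$ and hence not integrable against $(ts)^{-3/2}$ near the origin; your observation that the four-term combination vanishes when $t=0$ or $s=0$ is exactly what upgrades the remainder to $O\left(\min(t,1)\min(s,1)\,(\tr(X^3)+\tr(Y^6))\right)$, which is how the paper closes this step.
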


\begin{proof}[Proof of Proposition
\ref{prop:$2$-pnt using covar nonsing} assuming Lemma \ref{lem:E[|W1||W2|] Taylor}]

Note that since $D^{t}D$ is a rank $1$ matrix, it satisfies
$$\tr(D^{t}D)=DD^{t}.$$
A straightforward application of Lemma \ref{lem:E[|W1||W2|] Taylor} with
$X$ and $Y$ given by \eqref{eq:X and Y def}
yields (for $x\in\nonsing$, $r$ is bounded away from $\pm 1$, so we may write
$\frac{1}{\sqrt{1-r^2}}=1+\frac{1}{2}r^2+\frac{3}{8}r^4+O(r^6)$)
\begin{equation*}
\begin{split}
K_{2}(x) &= \frac{1}{2\pi \sqrt{1-r^2}} \cdot F(X,Y) \\&=
\frac{1}{4\sqrt{1-r^2}} \bigg( 1 +\frac{\tr{X}}{2}+
\frac{\tr(Y^{2})}{8}-\frac{\tr(XY^2)}{16} -\frac{\tr(X^{2})}{32}
\\&+\frac{\tr(Y^{4})}{256}+\frac{\tr(Y^2)^2}{512}
-\frac{\tr{X}\tr(Y^{2})}{32}
 \bigg) +O\left(\tr(X^3)+\tr(Y^6)\right)
\\&= \frac{1}{4}\bigg( 1 +\frac{r^2}{2}+\frac{\tr{X}}{2}+
\frac{\tr(Y^{2})}{8}+\frac{3}{8}r^{4}-\frac{\tr(XY^2)}{16} -\frac{\tr(X^{2})}{32}
\\&+\frac{\tr(Y^{4})}{256}+\frac{\tr(Y^2)^2}{512}
-\frac{\tr{X}\tr(Y^{2})}{32}+\frac{1}{4}r^2\tr{X}+ \frac{1}{16}r^2\tr(Y^2)
 \bigg) \\&+ O\left(r^6+\tr(X^3)+\tr(Y^6)\right).
\end{split}
\end{equation*}
\end{proof}

To present the proof of Lemma \ref{lem:E[|W1||W2|] Taylor} we
need some notation.

\begin{notation}
For a matrix $A$ and a number $a$ we write $A=O(a)$ if the corresponding inequality
holds entry-wise.
\end{notation}

\begin{notation}
\label{not:m(t,s) def}
For $t \in 0$ we denote $m(t):=\min\{ t,1\} $, and for $t,s\in\R$, $$m(t,s):=m(t)\cdot m(s).$$
\end{notation}

\begin{proof}[Proof of Lemma \ref{lem:E[|W1||W2|] Taylor}]

Following Berry, see ~\cite[Eq. (24)]{Berry 2002},
\begin{equation*}
\sqrt{\alpha} = \frac{1}{\sqrt{2\pi}}\int\limits_{0}^{\infty} (1-e^{-\frac{\alpha t}{2}})\frac{dt}{t^{3/2}},
\end{equation*}
we have
\begin{equation}
\label{eq:E[W1E2]=int f}
\E[\|W_{1}\|\cdot \|W_{2}\|] = \frac{1}{2\pi} \iint\limits_{\R_{+}^{2}}
\left[ f(0,0)-f(t,0)-f(0,s)+f(t,s)  \right] \frac{dtds}{(ts)^{3/2}},
\end{equation}
where
\begin{equation}
\label{eq:f(t,s)=1/det(I+M)}
 f^{X,Y}(t,s) =f(x,y):= \E\left[\exp \left( -\frac{1}{2}\left(\|W_{1}\|^{2} + \|W_{2}\|^2\right) \right) \right]
= \frac{1}{\det(I+M)},
\end{equation}
with
\begin{equation*}
M = \left( \begin{matrix}  \sqrt{t}I  & 0\\0 &\sqrt{s}I  \end{matrix}  \right)
\Delta \left( \begin{matrix}  \sqrt{t}I & 0 \\ 0 &\sqrt{s}I  \end{matrix}  \right).
\end{equation*} %mk: added zeros in the offdiagonal entries
Now by the well-known formula for the determinant of
a block matrix (see e.g. ~\cite{CL}, p. 210), we have
\begin{equation*}
\det(I+M) = \det\left( (1+t)I+tX  \right) \cdot
\det \left((1+s)I+sX-stY((1+t)I+tX)^{-1}Y   \right),
\end{equation*}
so that
\begin{equation}
\begin{split}
\label{eq:block det 1/sqrt(I+M)}
\det(I+M)^{-1/2} &= \det\left( (1+t)I+tX  \right)^{-1/2} \times \\&\times
\det \left((1+s)I+sX-stY((1+t)I+tX)^{-1}Y   \right)^{-1/2}.
\end{split}
\end{equation}

Now we compute each of the two factors of the RHS of \eqref{eq:block det 1/sqrt(I+M)},
up to the admissible error terms $X^{3}$ and $Y^{6}$, as in the formulation of
Lemma \ref{lem:E[|W1||W2|] Taylor}:
\begin{equation}
\label{eq:1/sqrt(det(I+A)) exp}
\det(I+A)^{-1/2} = 1-\frac{1}{2}\tr{A}+\frac{1}{4}\tr({A}^2)+\frac{1}{8}(\tr{A})^2+O(A^3),
\end{equation}
so that the first factor in the RHS of \eqref{eq:block det 1/sqrt(I+M)} is
\begin{equation}
\label{eq:1st factor approx}
\begin{split}
&\det\left( (1+t)I+tX  \right)^{-1/2} = \frac{1}{1+t}  \det \left( I+\frac{t}{1+t}X  \right)^{-1/2}
\\&= \frac{1}{1+t} \cdot \left(  1-\frac{t}{2(1+t)}\tr{X} + \frac{t^2}{4(1+t)^2}\tr(X^2) +
\frac{t^2}{8(1+t)^2}(\tr{X})^2+O(X^3) \right).
\end{split}
\end{equation}
To compute the second factor in the RHS of \eqref{eq:block det 1/sqrt(I+M)} we write
\begin{equation*}
(I+A)^{-1} = I-A+O(A^2),
\end{equation*}
and we then have
\begin{equation}
\label{eq:2nd factor approx}
\begin{split}
&\frac{1}{1+s}\det\left( I+\frac{s}{1+s}X-\frac{st}{(1+s)(1+t)} Y \left(I+\frac{t}{1+t}X\right)^{-1}Y   \right)^{-1/2}
\\&=  \frac{1}{1+s}\det\left( I+\frac{s}{1+s}X-\frac{st}{(1+s)(1+t)}Y^{2}+\frac{st^2}{(1+s)(1+t)^{2}}YXY +
O(YX^2Y)\right)^{-1/2}
\\&= \frac{1}{1+s} \bigg(  1 - \frac{1}{2}\frac{s}{1+s}\tr{X}+\frac{1}{2}\frac{st}{(1+s)(1+t)}\tr(Y^{2})-
\frac{1}{2}\frac{st^2}{(1+s)(1+t)^{2}}\tr(YXY)  \\&+
\frac{1}{4}\frac{s^2}{(1+s)^2}\tr(X^2)+\frac{1}{4}\frac{s^2t^2}{(1+s)^2(1+t)^2}\tr(Y^4)
-\frac{1}{2}\frac{s^2t}{(1+s)^2(1+t)}\tr(XY^2) \\&+
\frac{1}{8}\frac{s^2}{(1+s)^2}\tr(X)^2 + \frac{1}{8}\frac{s^2t^2}{(1+s)^2(1+t)^2} \tr(Y^2)^2
- \frac{1}{4}\frac{s^2t}{(1+s)^2(1+t)}\tr{X}\tr(Y^2) \\&+ O(\tr(X^3)+\tr(Y^6)) \bigg)
\end{split}
\end{equation}
upon using \eqref{eq:1/sqrt(det(I+A)) exp} with $$A=\frac{s}{1+s}X-\frac{st}{(1+s)(1+t)}Y^{2}+\frac{st^2}{(1+s)(1+t)^{2}}YXY +
O(YX^2Y).$$

Cross multiplying \eqref{eq:1st factor approx} and \eqref{eq:2nd factor approx}
and substituting into \eqref{eq:block det 1/sqrt(I+M)} and finally into \eqref{eq:f(t,s)=1/det(I+M)}, we obtain
an asymptotic expression for $f^{X,Y}(t,s)$ of the form
\begin{equation}
\label{eq:f(t,s) approx}
\begin{split}
&f^{X,Y}(t,s) = \frac{1}{(1+t)(1+s)}
\bigg(  1 - \frac{1}{2}\frac{s}{1+s}\tr{X}+\frac{1}{2}\frac{st}{(1+s)(1+t)}\tr(Y^{2})-
\\& \frac{1}{2}\frac{st^2}{(1+s)(1+t)^{2}}\tr(YXY)  +
\frac{1}{4}\frac{s^2}{(1+s)^2}\tr(X^2)+\frac{1}{4}\frac{s^2t^2}{(1+s)^2(1+t)^2}\tr(Y^4)
\\&-\frac{1}{2}\frac{s^2t}{(1+s)^2(1+t)}\tr(XY^2) +
\frac{1}{8}\frac{s^2}{(1+s)^2}(\tr{X})^2 + \frac{1}{8}\frac{s^2t^2}{(1+s)^2(1+t)^2} \tr(Y^2)^2
\\&- \frac{1}{4}\frac{s^2t}{(1+s)^2(1+t)}\tr{X}\tr(Y^2)
-\frac{t}{2(1+t)}\tr{X} + \frac{t^2}{4(1+t)^2}\tr(X^2) +
\frac{t^2}{8(1+t)^2}(\tr{X})^2
\\&+\frac{1}{4} \frac{ts}{(1+t)(1+s)}(\tr{X})^2-\frac{1}{4}\frac{st^2}{(1+s)(1+t)^2}\tr{X}\tr(Y^2)
+ O((\tr(X^3)+\tr(Y^6)) \bigg)
\\ &=
\frac{1}{(1+t)(1+s)}
\bigg(
1 - \frac{1}{2}\left( \frac{s}{1+s}+\frac{t}{1+t} \right)\tr{X}+\frac{1}{2}\frac{st}{(1+s)(1+t)}\tr(Y^{2})\\&-
\frac{1}{2}\left( \frac{st^2}{(1+s)(1+t)^{2}} + \frac{s^{2}t}{(1+s)^2(1+t)}  \right)\tr(XY^2)  \\&+
\left(\frac{3}{8}\frac{s^2}{(1+s)^2} + \frac{3}{8}\frac{t^2}{(1+t)^2} +\frac{1}{4} \frac{ts}{(1+t)(1+s)}\right) \tr(X^2)+\frac{1}{4}\frac{s^2t^2}{(1+s)^2(1+t)^2}\tr(Y^4)
\\& + \frac{1}{8}\frac{s^2t^2}{(1+s)^2(1+t)^2} \tr(Y^2)^2
- \frac{1}{4}\left(\frac{s^2t}{(1+s)^2(1+t)}+\frac{st^2}{(1+s)(1+t)^2} \right) \tr{X}\tr(Y^2)
\\&+ O(\tr(X^3)+\tr(Y^6)) \bigg)
\end{split}
\end{equation}
where we used
\begin{equation}
\label{eq:tr(YXY)=tr(XY^2),tr(X^2)=tr(X)^2}
\tr(YXY)=\tr(XY^2), \;
\tr(X^2)=(\tr{X})^2,
\end{equation}
the latter thanks to $\rk{X}=1$.

It is important to identify \eqref{eq:f(t,s) approx} as the Taylor expansion of $f^{X,Y}(t,s)$ for
fixed $t,s$, as a function of $X,Y$ around $X=Y=0$
(i.e. a Taylor polynomial in terms of the entries of $X$ and $Y$).
In the next step we perform the integration in \eqref{eq:E[W1E2]=int f} by integrating
the various terms in \eqref{eq:f(t,s) approx}. The main problem is that the error term does not depend
on $t,s$, so that its integral against $\frac{1}{t^{3/2}s^{3/2}}$ is divergent at the origin.
We improve the error term in the following way:
define
\begin{equation}
\label{eq:gX,Y def}
g^{X,Y}(t,s):=f(0,0)-f(t,0)-f(0,s)+f(t,s) ,
\end{equation}
so that under the new notation \eqref{eq:E[W1E2]=int f} is
\begin{equation}
\label{eq:E[W1E2]=int g}
\E[\|W_{1}\|\cdot \|W_{2}\|] = \frac{1}{2\pi} \iint\limits_{\R_{+}^{2}} g^{X,Y}(t,s) \frac{dtds}{(ts)^{3/2}}.
\end{equation}
It is evident that for every $X,Y$ the function $g^{X,Y}$ vanishes if either $t=0$ or $s=0$,
so that for $t,s \ge 0$, $$g^{X,Y}(t,s)=O_{X,Y}(ts).$$ We now substitute \eqref{eq:f(t,s) approx} into \eqref{eq:gX,Y def} in order to expand $g^{X,Y}(t,s)$ into a Taylor polynomial around $X=Y=0$; by the latter observation the remainder term may be improved from $O\left(\tr(X^3)+\tr(Y^6)\right)$ to $$O\left(m(t,s)(\tr(X^3)+\tr(Y^6))\right)$$ (recall Notation
\ref{not:m(t,s) def}).
To compute the contribution of each of the summands in \eqref{eq:f(t,s) approx}, we notice that
each of summand splits into a product
$\phi(t)\psi(s)$ for some $\phi$ and $\psi$ (that are read off directly, for example, for the constant term $\frac{1}{(1+t)(1+s)}$,
$\phi(t)=\frac{1}{1+t}$ and $\psi(s) = \frac{1}{1+s}$),
so that the corresponding term of $g^{X,Y}(t,s)$ in \eqref{eq:gX,Y def} is
\begin{equation*}
\phi(t)\psi(s)-\phi(t)\psi(0)-\phi(0)\psi(s)+\phi(0)\psi(0) = (\phi(t)-\phi(0))(\psi(s)-\psi(0)).
\end{equation*}
Therefore, the corresponding term in the integral \eqref{eq:E[W1E2]=int g} splits as well.
We then finally obtain
\begin{equation}
\label{eq: g asymp}
\begin{split}
&g^{X,Y}(t,s) = \frac{ts}{(1+t)(1+s)}
+ \frac{1}{2}\left(\frac{t}{1+t}\frac{s}{(1+s)^2}+ \frac{t}{(1+t)^2}\frac{s}{1+s} \right)\tr{X}\\&+\frac{1}{2}\frac{t}{(1+t)^2}\frac{s}{(1+s)^2}\tr(Y^2)
-\frac{1}{2}\left( \frac{t^2}{(1+t)^3}\frac{s}{(1+s)^2}+
\frac{t}{(1+t)^2}\frac{s^2}{(1+s)^3} \right)\tr(XY^2)
\\&-\left( \frac{3}{8}\frac{t}{1+t}\frac{s^2}{(1+s)^3}
+\frac{3}{8}\frac{t^2}{(1+t)^3}\frac{s}{1+s}-\frac{1}{4}\frac{t}{(1+t)^2}\frac{s}{(1+s)^2} \right)\tr(X^2)
\\&+\frac{1}{4}\frac{t^2}{(1+t)^3}\frac{s^2}{(1+s)^3}\tr(Y^4)+
\frac{1}{8}\frac{t^2}{(1+t)^3}\frac{s^2}{(1+s)^3}\tr(Y^2)^2 \\&-
\frac{1}{4}\left(\frac{t^2}{(1+t)^3}\frac{s}{(1+s)^2}+\frac{t}{(1+t)^2}\frac{s^2}{(1+s)^3} \right)\tr{X}\tr(Y^2)
\\&+ O\left(m(t,s)(\tr(X^3)+\tr(Y^6))\right).
\end{split}
\end{equation}
Note that rather than improving the error term in the last step we may incorporate the
improvement into more precise versions of \eqref{eq:1st factor approx} and \eqref{eq:2nd factor approx}
and then carry the improved error term along;
it would result in the same formula \eqref{eq: g asymp}.

Inserting \eqref{eq: g asymp} into \eqref{eq:E[W1E2]=int g} yields
\begin{equation*}
\begin{split}
\E[\|W_{1}\|\cdot \|W_{2}\|] &= \frac{\pi}{2} \bigg( 1 +\frac{\tr{X}}{2}+
\frac{\tr(Y^{2})}{8}-\frac{\tr(XY^2)}{16} -\frac{\tr(X^{2})}{32}
\\&+\frac{\tr(Y^{4})}{256}+\frac{\tr(Y^2)^2}{512}
-\frac{\tr{X}\tr(Y^{2})}{32}
 \bigg) + O\left(\tr(X^{3})+\tr(Y^{6})\right),
\end{split}
\end{equation*}
using the elementary integrals
\begin{equation*}
\int\limits_{0}^{\infty} \left( \frac{t}{1+t}  \right)\frac{dt}{t^{3/2}} = \pi ;\;
\int\limits_{0}^{\infty} \frac{dt}{(1+t)^2\sqrt{t}} = \frac{\pi}{2};\; \int\limits_{0}^{\infty} \frac{\sqrt{t} dt}{(1+t)^3} = \frac{\pi}{8}.
\end{equation*}
which is the statement of the present lemma.
\end{proof}

\subsection{Proof of Lemma \ref{lem:asymp X,Y}}

\label{sec:proof of int mat elem}

To prove Lemma \ref{lem:asymp X,Y} we will need the following lemma
(which establishes the asymptotics for some expressions involved in $X$ and $Y$  - see \eqref{eq:X and Y def}),
whose proof is relegated to Section \ref{apx:proof asymp r,D,H}.

\begin{lemma}
\label{lem:asymp r,D,H}

We have the following estimates:
\begin{enumerate}[$1.$]\setlength{\itemsep}{10pt}\setlength{\itemindent}{35pt}
\item
\label{it:int r^2,r^4}
$
\int\limits_{\T} r(x)^2 dx = \frac{1}{\eigspcdim_{n}}. \qquad
\int\limits_{\T} r(x)^4 dx = \frac{3}{\eigspcdim_{n}^2}\left(1+O\left(\frac{1}{\eigspcdim_{n}} \right)\right).
$

\item
\label{it:int DD^t, (DD^t)^2}
$
\int\limits_{\T} D(x)D(x)^{t} dx = \frac{\eigval_{n}}{\eigspcdim_{n}}. \qquad
\int\limits_{\T} \left(D(x)D(x)^{t}\right)^2 dx = 2\cdot \frac{\eigval_{n}^2}{\eigspcdim_{n}^2}\left(1+O\left(\frac{1}{\eigspcdim_{n}} \right)\right).
$

\item
\label{it:int r^2DD^t}
$
\int\limits_{\T} r(x)^2D(x)D(x)^{t}dx =  \frac{\eigval_{n}}{\eigspcdim_{n}^2}\left(1+O\left(\frac{1}{\eigspcdim_{n}} \right)\right).
$

\item
\label{it:int tr(H^2), r^2tr(H^2)}
$
\int\limits_{\T} \tr(H(x)^2) dx = \frac{\eigval_{n}^2}{\eigspcdim_{n}}. \qquad \int\limits_{\T} r(x)^2\tr(H(x)^2) dx
= 2\cdot \frac{\eigval_{n}^2}{\eigspcdim_{n}^2}\left(1+O\left(\frac{1}{\eigspcdim_{n}} \right)\right).
$

\item
\label{it:int tr(H^4), tr(H^2)^2}
%\begin{split}
$
\int\limits_{\T} \tr(H(x)^4) dx = \frac{\eigval_{n}^4}{8\eigspcdim_{n}^2}(11+\widehat{\mu}_{n}(4)^2)
 +O\left(  \frac{\eigval_{n}^4}{\eigspcdim_{n}^3} \right).
 $

$ \qquad \int\limits_{\T} \tr(H(x)^2)^2 dx = \frac{\eigval_{n}^4}{4\eigspcdim_{n}^2} (7+\widehat{\mu}_{n}(4)^2) + O\left( \frac{\eigval_{n}^4}{\eigspcdim_{n}^3}  \right).
%\end{split}
$

\item
\label{it:int DD^t tr(H^2)}
$
\int\limits_{\T} D(x)D(x)^t\tr(H(x)^2) dx = \frac{\eigval_{n}^3}{\eigspcdim_{n}^2}
 \left(1+O\left(\frac{1}{\eigspcdim_{n}} \right)\right).
$

\item
$
\int\limits_{\T} r(x)D(x)H(x)D(x)^t dx
= -\frac{1}{2}\cdot \frac{\eigval_{n}^2}{\eigspcdim_{n}^2}\left(1+O\left(\frac{1}{\eigspcdim_{n}} \right)\right).
$

\item
\label{it:int DH^2D^t}
$
\int\limits_{\T} D(x)H(x)^2D(x)^t dx
= \frac{1}{2}\cdot \frac{\eigval_{n}^3}{\eigspcdim_{n}^2}\left(1+O\left(\frac{1}{\eigspcdim_{n}} \right)\right).
$

\item
\label{it:(DD^t)^3 << E^3R6}

$
\int\limits_{\T} (D(x) D(x)^{t})^3 dx = O\left(  \eigval_{n}^{3}\mathcal{R}_{6}(n)   \right).
$

\item
\label{it:r^4DD^t << ER6}

$
\int\limits_{\T} r(x)^{4}D(x)D(x)^{t}   dx = O\left(  \eigval_{n}\mathcal{R}_{6}(n)   \right).
$

\item
\label{it:tr (H^6)^3 << E^6R6}

$
\int\limits_{\T} \tr(H^{6})   dx = O\left(  \eigval_{n}^{6}\mathcal{R}_{6}(n)   \right).
$

\end{enumerate}

\end{lemma}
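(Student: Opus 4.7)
My plan is to Fourier-expand every integrand via \eqref{eq:r def}, \eqref{eq:D as sum}, \eqref{eq:H summation} and exploit the orthogonality of characters on $\T$. Each monomial in $r,D,H$ involving a total of $k$ factors of $\lambda$ becomes a sum over $(\lambda_1,\ldots,\lambda_k)\in\Lambda_n^k$ whose coefficient is a polynomial in the inner products $\langle\lambda_i,\lambda_j\rangle$, multiplied by the character $e(\langle\sum_i\lambda_i,x\rangle)$; integrating over $\T$ restricts the sum to the correlation set $S_k(n)$. For items 1--8, where $k\in\{2,4\}$, I would split $S_k(n)$ into ``diagonal'' tuples — those in which the $\lambda_i$'s can be grouped into pairs summing to zero — and the remaining ``non-diagonal'' tuples. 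The classical bound on the number of inscribed lattice rectangles on a circle (see \cite{RW}) gives $|S_4(n)\setminus(\text{diagonal})|=O(\eigspcdim_n)$, which translates into a relative error of size $O(1/\eigspcdim_n)$ after the $\eigspcdim_n^2$ scaling.

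The diagonal pieces supply every main term. For each of the $(k-1)!!$ perfect matchings $\sigma$ of $\{1,\ldots,k\}$, the substitution $\lambda_{\sigma(2j)}=-\lambda_{\sigma(2j-1)}$ turns every inner product between paired indices into $-n$ and every inner product between unpaired indices into $n\cos(\theta_\lambda-\theta_\mu)$, where $\theta_\lambda$ denotes the angle of $\lambda\in\Lambda_n$. Using
\[
\tfrac{1}{\eigspcdim_n^2}\sum_{\lambda,\mu\in\Lambda_n}\cos\bigl(m(\theta_\lambda-\theta_\mu)\bigr)=\widehat{\mu}_n(m)^2,
\]
each diagonal contribution reduces to a rational linear combination of terms $\widehat{\mu}_n(m)^2$. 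The crucial input is the $\pi/2$-rotational invariance of $\Lambda_n$, which forces $\widehat{\mu}_n(m)=0$ whenever $4\nmid m$ and in particular $\widehat{\mu}_n(2)=0$; hence for items 1--4 and 6--8 only the constant harmonic survives. For item 5, which involves four vectors, two of the three pairings generate a factor $(\lambda_i\cdot\lambda_j)^4$, and the identity $\cos^4\theta=\tfrac{3}{8}+\tfrac{1}{2}\cos(2\theta)+\tfrac{1}{8}\cos(4\theta)$ produces exactly the $m=4$ harmonic responsible for the $\widehat{\mu}_n(4)^2$ contribution.

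For items 9--11 no angular identity is needed. Bounding each coefficient polynomial trivially by $|\langle\lambda_i,\lambda_j\rangle|\le n$ yields
\[
\left|\int_\T(DD^t)^3\,dx\right|\;\le\;\frac{(2\pi)^6 n^3}{\eigspcdim_n^6}|S_6(n)|,
\]
and analogously for $\int_\T r^4 DD^t\,dx$ and $\int_\T\tr(H^6)\,dx$, with powers of $n$ chosen to match the number of inner-product factors. Combining with the identity \eqref{eq:int r^6=1/N^6 S6(n)}, namely $|S_6(n)|=\eigspcdim_n^6\mathcal{R}_6(n)$, gives the claimed $O(\eigval_n^j\mathcal{R}_6(n))$ bounds immediately, without requiring Bourgain's sharper estimate \eqref{eq:S6-estimate}.

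The main obstacle I anticipate is the sign-and-coefficient bookkeeping in item 5: one must correctly assemble the three diagonal pairings of a $4$-tuple (each producing a different pattern of $-n$ factors), apply the $\cos^4$ expansion, invoke $\widehat{\mu}_n(2)=0$ to eliminate intermediate harmonics, and reassemble the result into the stated constants $11$, $7$ and $1$. A secondary subtlety is verifying that the non-diagonal $S_4$ contributions are genuinely $O(\eigspcdim_n)$ rather than the naive $O(\eigspcdim_n^2)$; this is where the classical rectangle-counting bound on $\Lambda_n$ is essential.
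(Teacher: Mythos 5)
Your proposal is correct and follows essentially the same route as the paper: Fourier expansion plus orthogonality reduces every integral to a sum over the correlation sets $S_4(n)$ and $S_6(n)$, the diagonal (paired) structure of $S_4(n)$ supplies the main terms, the symmetries of $\mu_n$ kill the $m=2$ harmonic while the $\cos^4$ identity produces the $\widehat{\mu}_{n}(4)^2$ contribution (the paper packages this as Lemma \ref{lem:B4=3/4+1/8hat(mu)(4)2}), and items 9--11 follow from the trivial coefficient bound together with $|S_{6}(n)|=\eigspcdim_{n}^{6}\mathcal{R}_{6}(n)$. The only cosmetic difference is that the paper notes $S_4(n)$ contains \emph{no} non-diagonal tuples at all (two circles meet in at most two points), so the relative error $O(1/\eigspcdim_{n})$ comes solely from the overlap of the three diagonal families rather than from a non-diagonal remainder as you suggest.
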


\begin{proof}[Proof of Lemma \ref{lem:asymp X,Y} assuming Lemma
  \ref{lem:asymp r,D,H}]

In this proof we will suppress the dependence on $x$ (and $n$),
i.e. use the shortcuts $r=r_{n}(x)$, $X=X_{n}(x)$, $Y=Y_{n}(x)$, $D=D_{n}(x)$, $H=H_{n}(x)$.
We have
\begin{equation*}
\int\limits_{\T} \tr{X} dx = \int\limits_{\nonsing}\tr{X} dx + O(\meas(\sing))
\end{equation*}
by the uniform boundedness \eqref{eq:X,Y=O(1)} of $X$. Since $\meas(\sing)$ is
small \eqref{eq:meas(sing)=O(R6)} and on $\nonsing$ we may write
\begin{equation*}
\frac{1}{1-r^{2}} = 1+r^{2}+O(r^{4}),
\end{equation*}
we have
\begin{equation*}
\begin{split}
\int\limits_{\T} \tr{X} dx &= -\frac{2}{\eigval_{n}}\left( \int\limits_{\T} DD^t dx +
\int\limits_{\T} r^2DD^t dx  \right) + O\left( \mathcal{R}_{6}(n)  \right) \\&=
-\frac{2}{\eigspcdim_{n}} - \frac{2}{\eigspcdim_{n}^2} + O\left( \mathcal{R}_{6}(n)  \right)
\end{split}
\end{equation*}
by parts \ref{it:r^4DD^t << ER6}, \ref{it:int DD^t, (DD^t)^2}
and \ref{it:int r^2DD^t} of Lemma \ref{lem:asymp r,D,H}. Arguing in a similar fashion,
we obtain
\begin{equation*}
 \int\limits_{\T} \tr(Y^{2})dx \sim\frac{4}{\eigval_{n}^2}\int\limits_{\T}\left[ \tr(H^2)+ 2rDHD^{t}  \right] dx
= \frac{4}{\eigspcdim_{n}}-\frac{4}{\eigspcdim_{n}^2} + O\left( \mathcal{R}_{6}(n)  \right),
\end{equation*}
\begin{equation*}
\int\limits_{\T} \tr(XY^2)dx \sim -\frac{8}{\eigval_{n}^3} \int\limits_{\T}DH^2D^{t}dx  = -\frac{4}{\eigspcdim_{n}^2}+
O(\mathcal{R}_{5}(n)),
\end{equation*}
\begin{equation*}
\int\limits_{\T} \tr(Y^4)dx \sim \frac{16}{\eigval_{n}^4}\int\limits_{\T} \tr(H^4) + O\left( \mathcal{R}_{6}(n)  \right),
\end{equation*}
\begin{equation*}
\int\limits_{\T} \tr(Y^2)^2dx = \frac{16}{\eigval_{n}^4} \int\limits_{\T}\tr(H^2)^2dx +O\left( \mathcal{R}_{6}(n)  \right)
= \frac{4}{\eigspcdim_{n}^2}(7+\widehat{\mu}_{n}(4)^2)+O\left( \mathcal{R}_{6}(n)  \right).
\end{equation*}
This shows parts \ref{it:tr(X)}, \ref{it:tr(Y^2)},
\ref{it:tr(XY^2)}, \ref{it:tr(Y^4)} and \ref{it:tr(Y^2)^2}, parts \ref{it:tr(X^2)}, \ref{it:tr(X)tr(Y^2)}, \ref{it:r^2tr(X)} and \ref{it:r^2tr(Y^2)} being similar.

To see part \ref{it:int(tr(X^3))=o(1/N^2)}, we notice that, as $X$ is uniformly
bound \eqref{eq:X,Y=O(1)}, and $\meas(\sing)$ is small \eqref{eq:meas(sing)=O(R6)},
it is sufficient to bound
the contribution on $\nonsing$ only, so that
we may assume that $r$ is bounded away from $\pm 1$:
\begin{equation}
\label{eq:int trX^3 << ||D||^3}
\int\limits_{\T} \tr(X^3)dx \ll \frac{1}{E^{3}}\int\limits_{\T}(DD^{t})^3dx +
\RR_{6}(n).
\end{equation}
Part \ref{it:int(tr(X^3))=o(1/N^2)} of Lemma \ref{lem:asymp X,Y} then follows upon applying
part \ref{it:(DD^t)^3 << E^3R6} of Lemma \ref{lem:asymp r,D,H} with \eqref{eq:int trX^3 << ||D||^3}.
The proof for part \ref{it:int(tr(Y^6))=o(1/N^2)} is very similar, using
part \ref{it:tr (H^6)^3 << E^6R6} of Lemma \ref{lem:asymp r,D,H}, and we omit it here.
\end{proof}

\section{Proof of Theorem \ref{thm:arith corr S6}}
\label{sec:proof of S6=o(N^4)}

%IW: HERE IN WHAT FOLLOWS ADDED A DOT.

We begin by recalling some needed results from additive combinatorics.
An additive set is a finite and
non-empty subset of an ambient (additive) abelian group $Z$.
Given an additive set $A$, we define $E(A,A)$, the
{\em additive energy of $A$} by
\begin{equation*}
E(A,A) :=
\left|\left\{ (y_{1},y_{2},y_{3}, y_{4}) \in A^{4} : y_{1}+y_{2} =
y_{3}+y_{4}\right\}\right|.
\end{equation*}
We shall use the following ``large energy version'' of the
Balog-Szemeredi-Gowers theorem (see \cite[Ch.~2.4--5]{Tao Vu}):
\begin{theorem}[BSG]
\label{thm:energy-bsg}
Let $A$ be an additive set,
%\fixme{Following Manju's remark on
% footnotes moved the definition of an additive set to
% the second sentence of the section; use this in the very next second. Par,
% is it ok? pk: yes.}
and let $K \geq 1$. There exists an absolute constant $C$ with the
following property: if
$E(A,A) \geq |A|^{3}/K$, then there exists a subset $A_{1} \subseteq
A$ satisfying
\begin{equation}
  \label{eq:|A1|>K^-C|A|}
|A_{1}| > K^{-C} |A|
\end{equation}
and
\begin{equation}
  \label{eq:|A1+A1|<K^C|A|}
|A_{1}+A_{1}| < K^{C} |A|.
\end{equation}
\end{theorem}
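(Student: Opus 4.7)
The plan is to pass from the large-energy hypothesis to a combinatorial ``popular pairs'' statement, apply a graph regularization to extract $A_{1}$, and finally invoke Plünnecke--Ruzsa to obtain the doubling bound.

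First I would rewrite
\[
E(A,A) = \sum_{s} r(s)^{2}, \qquad r(s) := |\{(a,b) \in A \times A : a+b = s\}|,
\]
and use $\sum_{s} r(s) = |A|^{2}$ together with Cauchy--Schwarz to deduce from $E(A,A) \geq |A|^{3}/K$ that the set of ordered pairs $(a,b) \in A \times A$ with $r(a+b) \geq |A|/(2K)$ has size at least $c|A|^{2}/K$ for an absolute $c > 0$. I would then form the bipartite graph $G$ on two copies of $A$ whose edges are exactly these ``popular pairs'', so that $G$ has edge density $\gtrsim 1/K$.

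Next I would invoke the Balog--Szemerédi graph lemma: by dyadic pigeonholing on degree followed by a second-moment argument on pairs of neighborhoods, one locates a subset $A_{1} \subseteq A$ with $|A_{1}| \geq K^{-O(1)}|A|$ such that for at least $(1-o(1))|A_{1}|^{2}$ pairs $(a,a') \in A_{1} \times A_{1}$, the vertices $a$ and $a'$ share at least $K^{-O(1)}|A|$ common neighbors in $G$. For such a typical pair $(a,a')$ and each common neighbor $b$, both $a+b$ and $a'+b$ are popular sums, so each admits at least $|A|/(2K)$ representations as $a_{i}+a_{j}$ with $a_{i},a_{j} \in A$. Expanding $a - a' = (a+b) - (a'+b)$ over all choices of $b$ and all such representations yields at least $K^{-O(1)}|A|^{3}$ quadruples $(a_{1},a_{2},a_{3},a_{4}) \in A^{4}$ with $a - a' = a_{1}+a_{2}-a_{3}-a_{4}$.

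Since the total number of such quadruples in $A^{4}$ is only $|A|^{4}$, this forces the difference-set bound $|A_{1} - A_{1}| \leq K^{O(1)}|A|$, and the Plünnecke--Ruzsa inequality converts it into the asserted $|A_{1} + A_{1}| \leq K^{O(1)}|A|$ of \eqref{eq:|A1+A1|<K^C|A|}. The principal obstacle is the graph lemma itself: producing $A_{1}$ satisfying \emph{both} the size lower bound \eqref{eq:|A1|>K^-C|A|} \emph{and} the ``almost all pairs richly connected'' property simultaneously requires a careful two-step cleansing (first discarding vertices of low degree, then using a variance computation to locate a good common neighborhood), and the absolute exponent $C$ emerges only after tracking constants through the successive Cauchy--Schwarz applications.
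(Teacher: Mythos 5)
Your overall architecture (popular sums $\to$ bipartite graph of density $\gtrsim 1/K$ $\to$ graph-theoretic Balog--Szemer\'edi lemma $\to$ Pl\"unnecke--Ruzsa) is the standard proof of this theorem, and your first reduction is correct: writing $E(A,A)=\sum_s r(s)^2$ and discarding unpopular sums does give $\gtrsim |A|^2/K$ popular pairs. Note, though, that the paper does not prove the theorem this way at all: it simply deduces the statement from Theorem~2.31 and Proposition~2.26 of Tao--Vu (the energy form of BSG phrased via Ruzsa distance, plus the conversion of small Ruzsa distance into small doubling). You are attempting to reprove the cited black box from scratch, which is a much larger task than the paper undertakes.

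Within that task there is a genuine gap at the key step. You propose to produce $A_1$ such that for at least $(1-o(1))|A_1|^2$ pairs $(a,a')$ the difference $a-a'$ has $\geq K^{-O(1)}|A|^3$ representations as $a_1+a_2-a_3-a_4$, and then conclude $|A_1-A_1|\leq K^{O(1)}|A|$ by double counting against the $|A|^4$ available quadruples. But the double count only controls the differences coming from the \emph{good} pairs; the exceptional pairs, even if they form an $o(1)$ (or $K^{-O(1)}$) fraction of $A_1\times A_1$, can still contribute up to that many \emph{new} elements of $A_1-A_1$, and a quantity of order $\varepsilon|A_1|^2$ dwarfs $K^{O(1)}|A|$ once $|A|$ is large. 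So the ``almost all pairs'' version of the graph lemma is strictly too weak to bound the full difference set. The standard fix --- and the actual content of the Balog--Szemer\'edi--Gowers graph lemma --- is to refine further so that \emph{every} pair $(a,a')\in A_1\times A_1$ is joined by $\geq K^{-O(1)}|A|^2$ paths of length \emph{three} in $G$ (common neighbors alone, i.e.\ paths of length two, cannot be guaranteed for all pairs). Each length-three path still expresses $a-a'$ as a signed combination of three popular sums and hence yields $K^{-O(1)}|A|^4$ representations as $\sum_{i=1}^{6}\pm a_i$, and only then does the counting argument bound $|A_1-A_1|$. Your closing caveat about a ``two-step cleansing'' identifies the wrong obstacle: the issue is not obtaining the almost-all-pairs statement but the fact that it does not suffice. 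The final Pl\"unnecke--Ruzsa step converting a difference bound into the sumset bound \eqref{eq:|A1+A1|<K^C|A|} is fine.
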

\begin{remark}
Theorem~\ref{thm:energy-bsg} can easily be deduced from
Proposition~2.26 and Theorem~2.31 of
\cite{Tao Vu} as follows: by
Theorem~2.31, $E(A,A) \geq |A|^{3}/K$ implies that there exists
subsets $A_{1} \subseteq A$, $A_{2} \subseteq A$ with $|A_{1}| > K^{-C'}
|A|, |A_{2}| > K^{-C'} |A|$ satisfying $d(A_{1},A_{2}) \leq C' \log K$
(where $d(A_{1},A_{2})$ denotes the Rusza distance between $A_{1},
A_{2}$, and $C'$ is an absolute constant.)  By Proposition~2.26,
$d(A_{1},A_{2}) \leq C' \log K$ implies that $|A_{1}+A_{1}| \leq
K^{C''} |A_1|$ for some $C''$ only depending on $C'$.  Taking $C =
\max(C',C'')$ the result follows.
\end{remark}

If $G$ is a (torsion
free) abelian group, a Generalized Arithmetic Progression (GAP)
of dimension $d$, is a subset $P \subseteq G$
of the form
\begin{equation}
\label{eq:P GAP = sum j xi}
P = \left\{ \xi_0 + \sum_{k=1}^{d} j_{k} \xi_{k} : 0 \leq j_{k} < J_{k}
\text{ for $k=1, \ldots, d$}  \right\}
\end{equation}
with $\xi_0,\ldots, \xi_d \in G$. A GAP $P$ is called {\em proper},
if $|P| = \prod_{k=1}^{d} J_{k}$ (i.e., all elements in the sum $\xi_0 + \sum_{k=1}^{d}
j_{k} \xi_{k}$ are distinct).
It is easy to see that a GAP has ``bounded doubling'', i.e., that
$|A+A|/|A|$ is ``small''.  A surprising converse is Freiman's
celebrated structure theorem --- an additive set with small doubling
is essentially a proper GAP:
\begin{theorem}[\cite{Tao Vu},
  Theorem~5.33]
\label{thm:simple-freiman}
Let $A$ be an additive set in a torsion free group $G$ such that
$|A+A| \leq K|A|$.  Then there exists a proper generalized arithmetic
progression $P$, of rank at most $K-1$, which contains $A$ such
that $$|P| \leq \exp\left(O\left(K^{O(1)}\right)\right) |A|.$$
\end{theorem}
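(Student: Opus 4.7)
The plan is to follow the classical Ruzsa--Chang route to Freiman's theorem, in four stages. First, from the hypothesis $|A+A| \leq K|A|$, the Pl\"unnecke--Ruzsa inequalities yield the iterated sumset bounds $|mA - nA| \leq K^{m+n}|A|$ for all nonnegative integers $m, n$. In particular the difference sets $2A - 2A$ and $8A - 8A$ have size polynomially bounded in terms of $K$ and $|A|$, a control that will be exploited throughout.

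Second, invoke Ruzsa's modeling lemma to transfer the problem to a cyclic group: there exists a subset $A' \subseteq A$ with $|A'| \geq |A|/K^{O(1)}$ and a Freiman $8$-isomorphism from $A'$ into $\Z/N\Z$ for some prime $N \leq K^{O(1)}|A|$. Working inside this model group, apply the Bogolyubov--Chang lemma: the set $2A' - 2A'$ (viewed in $\Z/N\Z$) contains a Bohr set $B(\Gamma, \rho)$ with $|\Gamma| \leq K^{O(1)}$ and $\rho \geq K^{-O(1)}$. This is the Fourier-analytic heart of the argument, relying on Chang's covering-lemma refinement of Bogolyubov to bound the dimension of the large spectrum polynomially in $K$.

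Third, apply the geometry of numbers (Minkowski's second theorem) to the lattice defined by the frequencies in $\Gamma$ to produce a proper generalized arithmetic progression $P_{0}$ inside the Bohr set, of rank at most $|\Gamma|$ and size at least $\exp(-O(K^{O(1)}))N$. Transferring $P_{0}$ back via the Freiman isomorphism yields a proper GAP $P_{1} \subseteq 2A - 2A$ inside $G$ of the same rank and relative size. Finally, Ruzsa's covering lemma shows that $A$ is contained in a bounded number of translates of $P_{1} - P_{1}$, with the number of translates bounded by $|A + P_{1}|/|P_{1}| \leq K^{O(1)}$; absorbing these translates into an enlargement of $P_{1}$ (increasing its rank by a bounded amount and its size by the same factor) produces the desired proper GAP $P$ containing $A$ with rank at most $K - 1$ and $|P| \leq \exp(O(K^{O(1)})) |A|$.

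The main obstacle lies in the Bogolyubov--Chang step: obtaining polynomial (rather than exponential) dimension for the Bohr set requires Chang's dissociativity argument applied to the large spectrum via the Pl\"unnecke--Ruzsa inequalities. Equally delicate is the final quantitative bookkeeping: each stage (modeling, Bohr-set capture, GAP construction, covering) enlarges the rank and size bounds by polynomial factors in $K$, and one must verify that the telescoped constants compress into the stated linear-in-$K$ rank bound and the size bound $\exp(O(K^{O(1)}))|A|$.
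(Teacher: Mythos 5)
This statement is quoted in the paper from Tao--Vu (Theorem~5.33) without proof, so there is no internal argument to compare against; what can be assessed is whether your sketch would actually deliver the theorem as stated. Your outline is the standard Ruzsa--Chang route (Pl\"unnecke--Ruzsa, Ruzsa modeling, Bogolyubov--Chang, Minkowski's second theorem, Ruzsa covering), and that machinery does produce a proper GAP of rank $K^{O(1)}$ and size $\exp\left(O\left(K^{O(1)}\right)\right)|A|$ containing $A$. (A preliminary step you omit, though it is minor: a finite subset of a general torsion-free abelian group generates a group isomorphic to $\Z^{d}$, and one must pass to $\Z$ by a Freiman isomorphism before the modeling lemma applies.)

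However, two parts of the stated conclusion do not follow from your sketch as written. First, the rank bound ``at most $K-1$'' cannot be reached by ``absorbing the covering translates into an enlargement of $P_{1}$, increasing its rank by a bounded amount'': the covering lemma contributes roughly $\log\left(|A+P_{1}|/|P_{1}|\right)$, i.e.\ $K^{O(1)}$ (or at best $O(K\log K)$), extra generators, so this route yields rank polynomial in $K$, not linear. The linear-in-$K$ rank requires an additional idea you never invoke, namely Freiman's dimension lemma (a set of Freiman dimension $d$ satisfies $|A+A| \geq (d+1)|A| - d(d+1)/2$, so $|A+A| \leq K|A|$ forces dimension at most $K-1$) together with a rank-reduction argument that projects the large-rank GAP down to the at most $(K-1)$-dimensional sublattice affinely spanned by $A$. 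Second, properness is destroyed precisely at your last step: once you adjoin the covering translates to $P_{1}-P_{1}$, the resulting progression is in general not proper, and restoring properness is a separate nontrivial theorem (every GAP of rank $d$ embeds in a proper GAP of rank at most $d$ with size inflated by $\exp\left(O\left(d^{3}\log d\right)\right)$, Tao--Vu Theorem~3.40); you assert the final $P$ is proper without this ingredient. Both points are genuine gaps relative to the precise statement, even though the overall architecture of the argument is the right one.
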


If $A \subset \C$ and $z \in \C$, let $r_{2}(z,A)$ denote the number
of representations of $z$ as a product of two elements from $A$.  The
following result by Chang shows that $r_{2}(z,A)$ is quite small
when $A$ is a GAP.
\begin{proposition}[\cite{chang03-factorization-erdos-szemeredi},
  Proposition~3]
\label{prop:chang}
Let $P \subseteq \C$ be a GAP of the form
\eqref{eq:P GAP = sum j xi}
where $\xi_0,\ldots, \xi_d \in \C$.
Then, for all $z \in \C$,
\begin{equation}
  \label{eq:10}
r_{2}(z,P)  < \exp\left( C_{d} \frac{\log J}{\log\log J}\right)
\end{equation}
where $J = \max_{1 \leq k < d} J_{k}$ and the constant $C_{d}$ only
depends on the dimension $d$ of $P$.
\end{proposition}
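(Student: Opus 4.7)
The bound has the exact shape of the classical divisor bound $\tau(n) = \exp(O(\log n / \log \log n))$, so my plan is to reduce counting factorizations $z = ab$ with $a,b \in P$ to counting integer (or number-field) divisors. This reduction is natural because, after parametrizing $P$ by the integer box $B = [0,J_1) \times \cdots \times [0,J_d)$ via $\vec j \mapsto \xi_0 + \sum_k j_k \xi_k$, the equation $ab = z$ becomes a single quadratic Diophantine condition in the $2d$ integer variables $(j_k, j'_k)$.

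The heart of the argument is the base case $d = 1$. Here $P = \{\xi_0 + j\xi_1 : 0 \leq j < J_1\}$ is a genuine arithmetic progression. Assuming $\xi_1 \neq 0$ (else the statement is trivial), complete the square by writing $a = \xi_1(j + c)$ with $c = \xi_0/\xi_1$; then $ab = z$ becomes $(j + c)(j' + c) = w$ with $w = z/\xi_1^2$. For rational $c = p/q$, clearing denominators yields $(qj + p)(qj' + p) = q^2 w$, an equation in integers of magnitude $O((qJ + |p|)^2) = \mathrm{poly}(J)$; the classical divisor bound on this integer equation supplies the required $\exp(O(\log J / \log \log J))$ bound on the number of solutions. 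For general $c \in \mathbb{C}$, the same identity forces $(j+c)(j'+c)$ to lie in the ring of integers of a number field $K = \mathbb{Q}(c, w)$, and one invokes the divisor bound in $K$: the number of factorizations of a given element in $\mathcal{O}_K$ into elements of norm at most $H$ is $\exp(O_{[K:\mathbb{Q}]}(\log H / \log \log H))$.

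For the inductive step $d-1 \to d$, I would project $P$ onto a sub-GAP $P'$ of dimension $d-1$ by collapsing one generator, say $\xi_d$. A factorization $z = ab$ in $P$ induces, on the fibers of the projection, a one-parameter family of factorizations controlled by the base case, while on the image in $P'$ the inductive hypothesis applies. Combining the two gives the bound for $P$, with the constant $C_d$ accumulating additively in $d$.

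The main obstacle is uniformity of $C_d$ in the generators $\xi_0, \ldots, \xi_d$. The number-field divisor bound depends on $[K:\mathbb{Q}]$ and on invariants (such as the number of prime ideals of small norm) that could a priori depend on the specific $\xi_k$ rather than on $d$ alone. Controlling this requires either a specialization argument --- reducing transcendental $\xi_k$ to integer values at a generic integer point so that all arithmetic is done in a field of degree bounded by $d$ --- or replacing the norm by an intrinsic height function on the finitely generated ring $\mathbb{Z}[\xi_0,\ldots,\xi_d,z]$. Making the induction close cleanly, so that the accumulated constant depends only on $d$ and not on any data of the GAP beyond its dimension, is the delicate part of the argument.
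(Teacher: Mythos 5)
You should first note that the paper itself gives no proof of this proposition: it is quoted directly from Chang's paper, so your sketch has to stand on its own, and as written it does not. The decisive gap appears already in your base case $d=1$. After clearing denominators you assert that $(qj+p)(qj'+p)=q^{2}w$ is ``an equation in integers of magnitude $O((qJ+|p|)^{2})=\mathrm{poly}(J)$''. That identification is unjustified: the generators $\xi_{0},\xi_{1}$ (hence $p,q$) and the target $z$ are arbitrary complex (or rational) numbers, and nothing bounds their heights in terms of $J$. Applying the classical divisor bound to $q^{2}w$ therefore yields $\exp\left(O\left(\log(qJ+|p|)/\log\log(qJ+|p|)\right)\right)$, which is not of the required shape $\exp\left(C\log J/\log\log J\right)$ as soon as $p$ or $q$ is superpolynomial in $J$; the same objection defeats the number-field variant, since the ``norm at most $H$'' there is likewise not controlled by $J$. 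To see that this is the heart of the matter and not a technicality, take $\xi_{1}=1$ and $\xi_{0}=M$ with $M$ enormous: the claim is that any $z\approx M^{2}$ has at most $J^{o(1)}$ factorizations $z=ab$ with $a,b\in[M,M+J)$, and the divisor bound on $z$ (of size governed by $M$, not $J$) says nothing useful. Uniformity in the generators and in $z$ is precisely the content of Chang's proposition; your reduction implicitly assumes it.

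The inductive step also does not parse. If you collapse the generator $\xi_{d}$ and project $P$ onto a $(d-1)$-dimensional GAP $P'$, a factorization $z=ab$ in $P$ does not induce a factorization of any fixed element inside $P'$: the product of the two projected elements varies with the discarded coordinates $j_{d},j'_{d}$, so the inductive hypothesis --- which bounds representations of a single fixed number --- cannot be invoked on the image; and fixing the image does not turn the fiber problem into a $d=1$ problem with a fixed product either. Thus the proposed ``base case on fibers, induction on the image'' count has no valid statement to sum over. Finally, you correctly flag the uniformity of $C_{d}$ (specialization of transcendental generators, control of the relevant field degree and heights) as the delicate point, but you leave it entirely open --- and that is exactly where the theorem lives. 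As it stands the proposal is a first reduction attempt, not a proof; the cited argument of Chang is designed precisely to obtain the bound uniformly in $\xi_{0},\ldots,\xi_{d}$ and $z$, and it is not a direct consequence of the divisor bound.
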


\begin{proof}[Proof of Theorem \ref{thm:arith corr S6}]
Assume that $|S_{n}| = o(\eigspcdim_{n}^{4})$ does not hold, i.e., that there
exists some $\delta>0$ such that
\begin{equation}
\label{eq:1}
|S_{6}(n)| > \delta \eigspcdim_{n}^{4}
\end{equation}
for $\eigspcdim_{n}$ arbitrarily large.  Using sum-product type
estimates, we will show that this leads to a contradiction.

To simplify the notation, let $S = S_{6}(n)$, and $N=\eigspcdim_{n}$. From this point
on in this proof we assume that
$\delta$ is fixed; we will write $F \lesssim G$ for some expressions $F$, $G$
(resp. $F \gtrsim G$), if there exists a constant $C$ (which may depend on $\delta$ only),
such that $F\le C \cdot G$ (resp. $F \ge C \cdot G$).

Define
\begin{equation}
\label{eq:An def}
A = A_{n} := (\Lambda_{n}+\Lambda_{n} ) \setminus \{0\};
\end{equation}
note that $A$ then consists of elements that have two (or exactly
one for elements of the form $2 \lambda$, $\lambda \in \Lambda_{n}$)
representations as sums of elements of $\Lambda_{n}$, and we
also note that $A$ is symmetric around the origin. Thus
\begin{equation}
\label{eq:A sim N^2}
|A| = N^{2}/2 + O(N)
\end{equation}
and (\ref{eq:1}) implies
\begin{equation}
\label{eq:y1+y2 in A >= delta N^4}
\left|\{ (y_{1}  , y_{2}) \in A \times A : y_{1}+y_{2} \in A \} \right|
\gtrsim N^{4}.
\end{equation}
(Note that the number of solutions to $\sum_{i=1}^{6} \lambda_i =0$
with the additional constraint that one of $\lambda_{1}+\lambda_{2},
\lambda_{3}+\lambda_{4}, \lambda_{5}+\lambda_{6}$ equals zero is
$O(N^{3})$; this follows immediately on noting that
$\lambda_{i}+\lambda_{j} = z$ has at most four solutions if $z\neq 0$.)

Letting  $\mathds{1}_{A}$ denote
the characteristic function of the set $A\subseteq\Z^{2}$, we have
\begin{equation}
\label{eq:y1+y2 in A = <1A*1A,1A>}
\left| \{ (y_{1}  , y_{2}) \in A \times A : y_{1}+y_{2} \in A \} \right| =
\langle \mathds{1}_{A} \star \mathds{1}_{A}, \mathds{1}_{A} \rangle,
\end{equation}
where we understand both the inner product and the convolution as defined on $L^{2}(\Z^{2})$.
Together with \eqref{eq:y1+y2 in A >= delta N^4} and the Cauchy-Schwarz
inequality the observation \eqref{eq:y1+y2 in A = <1A*1A,1A>} yields
\begin{equation*}
 N^{4} \lesssim \langle \mathds{1}_{A} \star \mathds{1}_{A}, \mathds{1}_{A} \rangle
\le \| \mathds{1}_{A} \star \mathds{1}_{A}  \|_{2} \cdot |A|^{1/2} \sim
\frac{1}{\sqrt{2}}\| \mathds{1}_{A} \star \mathds{1}_{A}  \|_{2} \cdot
N.
\end{equation*}
We may hence estimate the additive energy of $A$ as
%\footnote{In effect, we reduced $\RR_{6}$ to $\RR_{8}$.}
\begin{equation*}
  \begin{split}
E(A,A) =
|\{ (y_{1},y_{2},y_{3}, y_{4}) \in A^{4} : y_{1}+y_{2} =
y_{3}+y_{4}\}|
\\
= \| \mathds{1}_{A} \star \mathds{1}_{A} \|_{2}^{2}
\gtrsim N^{6} \gtrsim  |A|^{3}.
  \end{split}
\end{equation*}

We now apply Theorem \ref{thm:energy-bsg} on $A$ with $K=K(\delta)$
constant, to construct a large subset $A_{1}\subseteq A$
having the ``bounded doubling'' property \eqref{eq:|A1+A1|<K^C|A|},
and, in addition
\eqref{eq:|A1|>K^-C|A|}; together with \eqref{eq:A sim N^2} the latter implies
\begin{equation}
\label{eq:|A1|>>N^2}
|A_{1}|  \gtrsim N^{2}.
\end{equation}
Hence, by applying Theorem~\ref{thm:simple-freiman} with $G=\Z^{2}$
and $A_{1} \subset G$,
there exists a proper GAP
\begin{equation}
\label{eq:P GAP = sum j xi concrete}
P = \left\{ \xi_0 + \sum_{k=1}^{d} j_{k} \xi_{k} : 0 \leq j_{k} < J_{k}
\text{ for $k=1, \ldots, d$}  \right\}
\end{equation}
as in \eqref{eq:P GAP = sum j xi}, of bounded dimension (depending on
$\delta$ only),
\begin{equation}
\label{eq:d=d(delta)}
d(\delta) = d(K(\delta)),
\end{equation}
so that $A\subseteq P$ and
\begin{equation*}
|P| \le \exp \left( O\left(K^{O(1)}\right)\right) |A_{1}| \lesssim |A_{1}|.
\end{equation*}

We then have
\begin{equation*}
|A_{1}| = |P \cap A_{1}| \leq |P \cap A| \leq
\sum_{x \in \Lambda_{n}} | (P-x) \cap \Lambda_{n}|,
\end{equation*}
where for the latter inequality we used the definition \eqref{eq:An def} of $A$.
Hence, by \eqref{eq:|A1|>>N^2},
$$
 N^{2} \lesssim
\sum_{x \in \Lambda_{n}} | (P-x) \cap \Lambda_{n}|
$$
and therefore (the length of summation being $N$), $$| (P-x) \cap \Lambda_{n}| \gtrsim N$$ for some $x \in
\Lambda_{n}$.  Replacing $P$ by $P-x$ if necessary, we may assume that
\begin{equation}
\label{eq:P cap Lambdan >> N}
|P \cap \Lambda_{n} | \gtrsim N.
\end{equation}

Using Chang's Proposition \ref{prop:chang} the latter leads to a
contradiction as follows.
If $P = \left\{ \xi_0 + \sum_{k=1}^{d} j_{k} \xi_{k} : 0 \leq j_{k} < J_{k}
\text{ for $k=1, \ldots, d$}  \right\}$, then
$P \cup \overline{P}$ is contained in a GAP, of dimension $2d+1$, of
the form
$$
P' = \left\{\xi_0 + j_{0}(\overline{\xi_{0}}-\xi_0)
+\sum_{k=1}^{d} j_{k} \xi_{k} + \sum_{k=d+1}^{2d} j_{k}
\overline{\xi_{k-d}}      \right\}
$$
where $0 \leq j_{0} < 2$, $0 \leq j_{k} < J_{k}$ for $k=1,\ldots,d$, and
$0 \leq j_{k} < J_{k-d}$ for $k=d+1,\ldots,2d$.

Considering $P'$ as a subset of $\Z + i\Z$ it is clear (since for
every $z\in \Lambda_{n}$, $z \cdot \overline{z} = n$) that
\begin{equation}
\label{eq:r2(n,P') >> N}
r_{2}(n,P') \geq |P' \cap \Lambda_{n}  | \gtrsim N
\end{equation}
by \eqref{eq:P cap Lambdan >> N}.
On the other hand, Proposition \ref{prop:chang} applied on $P'$ implies that
\begin{equation*}
r_{2}(n,P') < \exp\left(C_{2d+1} \frac{\log J}{\log\log{J}}\right)
\end{equation*}
where $$J= \max_{1\leq k \leq d} J_{k} \leq |P| \lesssim N^{2},$$
with $J_{k}$ as in \eqref{eq:P GAP = sum j xi concrete}, and thus
\begin{equation*}
r_{2}(n,P') < \exp\left(C_{2d+1} \frac{\log N}{\log\log{N}}\right).
\end{equation*}
Combined with \eqref{eq:r2(n,P') >> N} the latter estimate implies
\begin{equation*}
N \lesssim \exp\left(C_{2d+1} \frac{\log N}{\log\log{N}}\right),
\end{equation*}
or, taking logarithm of both sides,
\begin{equation*}
\log N \le C \frac{\log N}{\log\log{N}}
\end{equation*}
for some $C=C(\delta)$ that may depend on $\delta$ only (by \eqref{eq:d=d(delta)}).
This is clearly impossible for $N$ arbitrarily large,
and the desired contradiction concludes the proof.
\end{proof}

%%%   \begin{remark}
%%%   {\bf CHECK THIS!!}  Using essentially the same proof with
%%%   a slightly different version of Freiman's Theorem,
%%%   one may produce a quantitative version of Theorem \ref{thm:arith corr S6} in the form
%%%   \begin{equation*}
%%%   |S_{6}(n)| = O\left( \exp\left(- (\log \log \eigspcdim_{n})^{C}\right) \eigspcdim_{n}^{4}    \right).
%%%   \end{equation*}
%%%   \end{remark}

\section{Probability measures on $\Scircle$ arising
 from $\Lambda_n$}
\label{sec:prob-meas-scircle}

Recalling that $S = \{ n\in\Z :\: n=a^2+b^2,\, a,b\in\Z \}$,
define
\begin{equation*}
S(x) := \{ n\in S:\: n\le x \},
\end{equation*}
and for a subset $S'\subseteq S$ similarly define
$S'(x) := \{ n\in S':\: n\le x \}.$
We say that a set $S'\subseteq S$ has {\em asymptotic density} $ s \in
[0,1]$ if
$ %\begin{equation*}
\lim\limits_{x\rightarrow\infty} \frac{S'(x)}{S(x)} = s.
$ %\end{equation*}
%\end{definition}
Further, we say that a subsequence $(n_{i})_{i\geq 1}$ of elements in
$S$ is {\em thin} if the subset  $\{n_{i}\}_{i\geq 1} \subset S$ has
asymptotic density zero.

It is known ~\cite{Landau}, that as $x\rightarrow\infty$,
$
S(x) \sim \frac{c x}{\sqrt{\log{x}}}
$  where $c>0$ is known as the Landau-Ramanujan constant;
in particular $\eigspcdim_{n}$ grows as $\sim c\sqrt{\log{x}}$ on
average for $n\le x$.  Moreover, a straightforward modification of an
Erd\"os-Kac type argument to the set $S$ shows that
$$
|\{ n \in S(x) : \log \eigspcdim_{n} \gg \log \log n \}|
= |S(x)| \cdot (1+o(1))
$$
as $x \to \infty$, and consequently there exists a density one subset
$S' \subset S$ such that $\eigspcdim_{n} \to \infty$ if $n \in S'$ and
$n \to \infty$.

Further, the lattice points $\Lambda_{n}$ are
equidistributed on $\Scircle$ along {\em generic} subsequences of energy
levels (see e.g. ~\cite[Proposition 6]{FKW} in the
following sense\footnote{Proposition 6 in ~\cite{FKW}
  implies that all the exponential sums are $o(1)$ for a density one
  sequence of energy levels. The equidistribution follows from the
  Weyl's criterion.}: there
exists a density $1$ subsequence $S''\subseteq S$ so that $\mu_{n}
\underset{n\in S''}{\Rightarrow} \nu$, where $\nu$ is the uniform
probability measure $d\nu(\theta) = \frac{1}{2\pi}d\theta$ on
$\Scircle \cong \R/2\pi\Z$.
(As usual, the notation $\upsilon_{i} \Rightarrow \upsilon$ stands for
weak convergence of probability measures on $\Scircle$, i.e., that
$\int fd\upsilon_{i}\rightarrow \int f d\upsilon$ for every continuous
bounded test function $f$.)
In particular, for a generic sequence of elements $n \in S$, $\eigspcdim_{n}
\to \infty$ and the points in  $\Lambda_n$ are
equidistributed in $\Scircle$.

In the other direction, Cilleruelo ~\cite{Cil} has shown that there
are {thin}  sequences $\left( \eigval_{n_{i}} \right)_{i \geq 1}$
%, of density $0$,
with $\eigspcdim_{n_{i}}\rightarrow\infty$, such that
$\mu_{n_{i}}$ converges to the atomic probability measure supported at
the $4$ symmetric points $\pm 1$, $\pm i$:
\begin{equation}
\label{eq:mu ni -> atomic Cil}
\mu_{n_{i}} \Rightarrow
\nu_{0}:=\frac{1}{4}\sum\limits_{k=0}^{3}\delta_{i^{k}}.
\end{equation}

% As for intermediate values attained by $c_{n}$, we will construct
% (thin) subsequences $(n_{i})_{i\geq1}$ of elements in $S$ such that
% $\mu_{n_i}  \Rightarrow \nu_a$, where $\nu_a$ is a uniform probability
% measure supported on a union of arcs given by $\{ z \in \Scircle :
% |\arg(4z)| \leq 4a \}$.
%
% \subsection{ Here the old stuff starts}
%

%\subsection{Existence of attainable intermediate measures}
%\label{apx:attainable measures}

\subsection{Some number theoretic prerequisites on Gaussian integers}

Before proceeding with the proof of Proposition \ref{prop:intermd meas
  exist} we begin with some number theoretic preliminaries on
$\mu_{n}$ (see e.g. ~\cite{Cil}).

To describe $\mu_{n}$ we recall some basic facts about
Gaussian integers.  Given a prime $p \equiv 1 \mod 4$, the equation $x^{2}+y^{2} =
p$ has exactly eight solutions in integers $x,y$, and there is a
unique solution satisfying $0 \leq y_{p} \leq x_{p}$.  We can hence
attach an angle $\theta_p \in [0,\pi/4]$ to each such $p$ by writing
$x_{p}+iy_{p} = \sqrt{p} e^{i \theta_{p}}$.  On the other hand, given
a prime $q \equiv 3 \mod 4$, the equation $x^2+y^2 = q$ has no
solutions; whereas $x^{2}+y^{2}=2$ has exactly four solutions.
Moreover, the following holds for the ring of Gaussian integers: the
units are given by $i^{k}$ for $k \in \{0,1,2,3\}$, the set of
Gaussian primes are, up to units, given by $1+i$, primes $q \in
\Z^{+}$ with $q \equiv 3 \mod 4$, and to each prime $p \equiv 1 \mod
4$ there corresponds two Gaussian primes, namely $x_{p}+iy_{p} =
\sqrt{p} e^{i \theta_{p}}$ and $x_{p}-iy_{p} = \sqrt{p} e^{-i\theta_{p}}$.

The elements of $\Lambda_{n}$ can then be parametrized as follows: let
$$
n = 2^{e_{2}} \cdot \prod_{p^{e_p}||n} p_i^{e_p}
\cdot \prod_{q_i^{2e_{q}}||n} q_i^{2e_q}
$$
where $p_{i}$ and $q_{i}$ are all the primes satisfying $p_{i} \equiv 1 \mod 4$
and $q \equiv 3 \mod 4$.  Each pair $(x,y)$ arises as follows: with $z =
x+iy$, we have
$$
z = x+iy = i^{k} \cdot (1+i)^{e_{2}} \cdot
\prod_{p^{e_p}||n} (\sqrt{p}^{e_{p}} e^{i (e_{p}-2l_{p})\theta_{p}})
\cdot \prod_{q_i^{2e_{q}}||n} q_i^{e_q}
$$
where $k \in \{0,1,2,3\}$, and $0 \leq l_{p} \leq e_{p}$ for each
$p|n$.

We can now describe $\mu_n$ as convolutions over prime powers:
define $$\mu_1 := \frac{1}{4}\sum_{k=0}^{3} \delta_{i^{k}}$$ ($\mu_{1} = \nu_{0}$
as in \eqref{eq:mu ni -> atomic Cil}),
$\tilde{\mu}_{2^{e_2}} :=
\delta_{((1+i)/\sqrt{2})^{e_{2}}}$, and
$$
\tilde{\mu}_{p^{e_p}} := \frac{1}{e_{p}+1}
\sum_{l_{p}=0}^{e_{p}} \delta_{ e^{i  (e_{p}-2l_{p}) \theta_p}},
$$
(the ``desymmetrized" version of $\mu_{n}$).
Then
$$
\mu_n  = \mu_1 \star (\star_{p|n} \tilde{\mu}_{p^{e_p}})
$$
where the convolution of two measures $\mu,\mu'$ on $\Scircle$ is given by
$$(\mu \star \mu')(z) = \int_{S^{1}} \mu(w) \mu'(z/w) dw.$$

\subsection{Proof of Proposition \ref{prop:intermd meas exist}}

\begin{proof}
That $\nu_{0}$ and $\nu_{\pi/4}$ arise as weak limits of
$(\mu_{n_{i}})_{i\geq 1}$ was already noted in the introduction of
Section~\ref{sec:prob-meas-scircle}.

To show that the same is true for $\nu_a$ for $a \in (0,\pi/4)$ we
argue as follows:
An easy consequence of Gaussian primes being equidistributed in
sectors (cf. \cite{hecke20-eine-neue-art-I,hecke20-eine-neue-art-II})
is that there exists an infinite
sequence of primes $p_{1} < p_{2} < \ldots$ such that $\theta_{p_j}
\to 0$, where each $p_{j} \equiv 1 \mod 4$ (also see~\cite{Cil}.)
To proceed we construct a sequence of integers $n_{j}$ such that
\begin{equation}
\label{eq:munk->upsa}
\tilde{\mu}_{n_{k}} \to \tilde{\nu}_{a};
\end{equation}
this will immediately  imply that
$%\begin{equation*}
\mu_{n_{k}} \to \nu_{a}
$ %\end{equation*}
since $\mu_{n_{k}}$ and $\nu_{a}$ are the symmetrized versions of
$\tilde{\mu}_{n_{k}}$ and $\tilde{\nu}_{a}$ respectively.

%To construct $n_{j}$ that satisfy \eqref{eq:munk->upsa} we recall that
%there exists an infinite sequence of
%primes $p_{1} < p_{2} < \ldots$ such that $\theta_{p_j} \to 0$.
Thus, let $e_{k} = [a/\theta_{p_{k}}]$, where $[a/\theta_{p_{k}}]$ denotes
the integer part of $a/\theta_{p_{k}}$, and define
$$
n_{k} = p_{k}^{e_{k}}.
$$
Then, for $f$ any continuous function on $S^{1}$,
\begin{equation}
\label{eq:int(f(t))dmunk}
\int_{S^{1}} f(\theta) d\tilde{\mu}_{n_k}(\theta) =
\frac{1}{e_{k}+1}
\sum_{l=0}^{e_{k}}  f(e^{i (e_{k}-2l) \theta_{p_{k}}}).
\end{equation}
The latter is the $2\theta_{p_{k}}$-spaced Riemann sum for the integral
$$
\frac{1}{2\alpha_{k}}\int\limits_{-\alpha_{k}}^{\alpha_{k}}f(e^{i\theta})d\theta
$$
with $$\alpha_{k} = \theta_{p_{k}}\cdot [a/\theta_{p_k}]  =  a- \theta_{p_{k}} \cdot \{ a/\theta_{p_{k}} \},$$
where $\{\cdot\}$ is the fractional part of a real number.

Note that since $\theta_{p_{k}}\rightarrow 0$ (so that the Riemann sum spacing vanishes),
$$\left |\theta_{p_{k}} \cdot \{ a/\theta_{p_{k}} \} \right|  \le   \theta_{p_{k}} \rightarrow 0, $$
and so $\alpha_{k}\rightarrow a$. Therefore, as $k\rightarrow\infty$,
we have
\begin{equation*}
\int\limits_{S^{1}} f(\theta) d\tilde{\mu}_{n_k}(\theta)
=
\frac{1}{e_{k}+1}
\sum_{l=0}^{e_{k}}  f(e^{i (e_{k}-2l) \theta_{p_{k}}})
\rightarrow
\frac{1}{2a}\int\limits_{-a}^{a}f(e^{i\theta})d\theta
\end{equation*}
for any continuous test function $f$.  Thus all $\nu_a$ are
attainable as limiting measures, and the proof of the first statement is concluded. %HERE should be careful with

To see that the Fourier coefficent $\widehat{\nu}_{a}(4)$, for $a\in
[0,\frac{\pi}{4}]$,  attains all values in $[0,1]$ it is
sufficient to notice that
$$
\widehat{ \nu}_{0}(4) = \widehat{ \mu_{1}}(4) = 1, \quad
\widehat{\nu_{\pi/4}}(4)  = 0,
$$
and clearly the function $a \mapsto \widehat{ \nu}_{a}(4)$ is
continuous. Therefore, by the intermediate value theorem, given any
value $b \in [0,1]$, there exists a number $a=a(b)$ so that
$\nu_{a}(4) = b$.  Since $\nu_{a}$ is attainable for all $a$, the
second statement of Proposition \ref{prop:intermd meas exist} follows.
\end{proof}

\section{Proof of Lemma \ref{lem:asymp r,D,H}}
\label{apx:proof asymp r,D,H}

For a probability measure $\mu$ on $\Scircle$ we define
\begin{equation*}
B_{4}(\mu) :=\int\limits_{z_{1},z_{2}\in\Scircle}
\langle z_{1}, z_{2} \rangle  ^{4} d\mu(z_{1})d\mu(z_{2}),
\end{equation*}
the $4$th moment of cosine of the angle between two random points on $\Scircle$
drawn independently according to $\mu$.
For instance, if $\mu=\mu_{n}$ are the atomic measures in \eqref{eq:mun def},
\begin{equation}
\label{eq:B4(n) def}
B_{4}(\mu_{n})=\frac{1}{\eigspcdim_{n}^2 n^{4}}\sum\limits_{\lambda_{1},\lambda_{2} \in \Lambda_{n} } \langle \lambda_{1},\, \lambda_{2} \rangle ^4 =
\frac{1}{\eigspcdim_{n}^2}\sum\limits_{\lambda_{1},\lambda_{2} \in \Lambda_{n} }
\cos(\theta(\lambda_{1},\lambda_{2}))^{4}
\end{equation}
is the $4$th moment of cosine of the angle $\theta(\lambda_{1},\lambda_{2})$
between two random uniformly and independently drawn $\Lambda_{n}$-points $\lambda_{1},\lambda_{2}$. While the
expression $B_{4}(n)$ comes up naturally from some of the expressions
evaluated in Lemma \ref{lem:asymp r,D,H}, it is simply related to $\widehat{\mu}_{n}(4)$
as in the following lemma.

\begin{lemma}
\label{lem:B4=3/4+1/8hat(mu)(4)2}
For any probability measure $\mu$ on $\Scircle$, invariant w.r.t. $x\mapsto ix$ and $x\mapsto \bar{x}$,
we have
\begin{equation*}
B_{4}(\mu) = \frac{3}{8}+\frac{1}{8}\widehat{\mu}(4)^{2}.
\end{equation*}
\end{lemma}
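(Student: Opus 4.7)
The plan is a direct computation via the power-reduction identity for $\cos^4$, followed by identifying the resulting moments as Fourier coefficients of $\mu$.

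First, writing $z_j = e^{i\theta_j}$ we have $\langle z_1, z_2 \rangle = \cos(\theta_1-\theta_2)$, so
$$B_4(\mu) = \iint_{\Scircle \times \Scircle} \cos(\theta_1-\theta_2)^4 \, d\mu(z_1)d\mu(z_2).$$
The power-reduction identity gives $\cos^4\phi = \tfrac{3}{8}+\tfrac{1}{2}\cos(2\phi)+\tfrac{1}{8}\cos(4\phi)$, so it suffices to evaluate $I_k(\mu) := \iint \cos(k(\theta_1-\theta_2))\,d\mu(z_1)d\mu(z_2)$ for $k=2,4$. Writing $\cos(k(\theta_1-\theta_2)) = \tfrac{1}{2}(z_1^k\overline{z_2^k}+\overline{z_1^k}z_2^k)$ and using the definition $\widehat{\mu}(k)=\int z^{-k}d\mu(z)$, a one-line calculation identifies
$$I_k(\mu) = |\widehat{\mu}(k)|^2.$$

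Second, I will exploit the symmetries of $\mu$. The invariance under $x \mapsto ix$ (rotation by $\pi/2$) means that for any $k$ not divisible by $4$, $\widehat{\mu}(k) = i^{-k}\widehat{\mu}(k)$, forcing $\widehat{\mu}(k)=0$; in particular $\widehat{\mu}(2)=0$, so $I_2(\mu)=0$. Meanwhile invariance under $x \mapsto \overline{x}$ yields $\widehat{\mu}(4) = \overline{\widehat{\mu}(4)}$, so $\widehat{\mu}(4) \in \R$ and $|\widehat{\mu}(4)|^2 = \widehat{\mu}(4)^2$.

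Combining these,
$$B_4(\mu) = \frac{3}{8} + \frac{1}{2}\cdot 0 + \frac{1}{8}\widehat{\mu}(4)^2 = \frac{3}{8} + \frac{1}{8}\widehat{\mu}(4)^2,$$
which is the claimed identity. There is no real obstacle here: the only thing to watch is bookkeeping of the Fourier convention and remembering to use both symmetry hypotheses — the rotational one to kill the $\widehat{\mu}(2)$ term, and the conjugation one to remove the absolute value on $\widehat{\mu}(4)$.
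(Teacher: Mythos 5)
Your proof is correct and follows essentially the same route as the paper: both expand $\langle z_1,z_2\rangle^4$ via the identity $\cos^4\theta = \tfrac38+\tfrac12\cos(2\theta)+\tfrac18\cos(4\theta)$, identify the resulting double integrals as $|\widehat{\mu}(2)|^2$ and $|\widehat{\mu}(4)|^2$, and use the rotational symmetry to kill the degree-$2$ term and the conjugation symmetry to conclude $\widehat{\mu}(4)\in\R$. No gaps.
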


\begin{proof}
Use for $z_{1},z_{2} \in \Scircle$
\begin{equation*}
\langle z_{1},z_{2} \rangle = \frac{z_{1}\bar{z_{2}}+\bar{z_{1}}z_{2}}{2}
\end{equation*}
together with the binomial formula for $\langle z_{1},z_{2} \rangle^{4}$ or, alternatively,
the standard identity
$$\cos(\theta)^{4} = \frac{3}{8}+\frac{1}{8}\cos(4\theta)+ \frac{1}{2}\cos(2\theta)$$
to rewrite $B_{4}(\mu)$ as
\begin{equation*}
B_{4}(\mu) = \frac{3}{8}+\int\limits_{z_{1},z_{2}\in\Scircle}
 \left( \frac{1}{8}\Re\left(z_{1}^4\bar{z_{2}}^4\right) + \frac{1}{2}\Re\left(z_{1}^2\bar{z_{2}}^{2}\right)  \right)d\mu(z_{1})d\mu(z_{2}).
\end{equation*}
The statement of the present lemma follows upon noting that
$ %\begin{equation*}
\int\limits_{\mathcal{S}} z^{4}d\mu(z) = \int\limits_{\mathcal{S}} \bar{z}^{4}d\mu(z) = \widehat{\mu}(4) \in \R
$ %\end{equation*}
and
$ %\begin{equation*}
\int\limits_{\mathcal{S}} z^{2}d\mu(z) = \int\limits_{\mathcal{S}}
\bar{z}^{2}d\mu(z)
$ %\end{equation*}
vanish by the symmetry assumptions.
\end{proof}

\begin{proof}[Proof of Lemma \ref{lem:asymp r,D,H}]

In order to evaluate the integrals we will use \eqref{eq:r def}, \eqref{eq:D as sum} and
\eqref{eq:H summation}, and the orthogonality relations of the exponentials
\begin{equation}
\label{eq:orthog rel exp}
\int\limits_{\T} e\left( \langle \lambda,\, x  \rangle  \right)dx
= \begin{cases}
1 &\lambda = 0, \\ 0 &\text{otherwise.}
\end{cases}
\end{equation}
Most of the computations are similar in nature, and we will only show a few examples in detail,
omitting the rest.

The statement of part \ref{it:int r^2,r^4} of the present lemma concerning the second moment of $r$
is evident in light of \eqref{eq:r def} and \eqref{eq:orthog rel exp}. For the $4$th moment,
we have
\begin{equation}
\label{eq:int r^4=1/N^4 |S4|}
\int\limits_{\T} r(x)^{4} dx = \frac{1}{\eigspcdim_{n}^{4}} | S_{4}(n)|,
\end{equation}
where
\begin{equation*}
S_{4}(n) = \left\{(\lambda_{1},\ldots \lambda_{4})\in \Lambda_{n} : \: \sum\limits_{i=1}^{4}\lambda_{i} = 0 \right\}
\end{equation*}
is the length-$4$ correlation set of frequencies (cf. \eqref{eq:S6 def}).
Note that
since two circles may have at most $2$ intersections (i.e. circles of radius
$\sqrt{n}$ centered at $0$ and $\lambda_{1}+\lambda_{2}$),
$(\lambda_{1},\ldots, \lambda_{4})\in S_{4}(n)$
implies that either of the following holds:
\begin{equation}
\label{eq:fine structure of S4}
\begin{split}
(\lambda_{1}=-\lambda_{2} \text{ and } \lambda_{3}=-\lambda_{4}) \text{ or }\\
(\lambda_{1}=-\lambda_{3} \text{ and } \lambda_{2}=-\lambda_{4}) \text{ or } \\
(\lambda_{1}=-\lambda_{4} \text{ and } \lambda_{2}=-\lambda_{3}).
\end{split}
\end{equation}
Conversely, every tuple of either of the forms above is lying inside $S_{4}(4)$.
In particular,
\begin{equation*}
| S_{4}(n)| = 3\eigspcdim_{n}^{2}\left(1+O\left( \frac{1}{\eigspcdim_{n}}   \right)   \right),
\end{equation*}
the error term being an artifact of the existence of degenerate tuples
of the form
$$(\pm\lambda,\pm\lambda,\pm\lambda,\pm\lambda)\in S_{4}(4)$$
(with precisely two plus and two minus signs).
Part \ref{it:int r^2,r^4} of the present lemma then follows upon substituting the latter into
\eqref{eq:int r^4=1/N^4 |S4|}. We will use the fine structure \eqref{eq:fine structure of S4}
of $S_{4}(n)$ in the course of the proof of most of the other statements of the present lemma.

Now we turn to part \ref{it:int DD^t, (DD^t)^2} of the present lemma.
While the first statement is clear from \eqref{eq:D as sum} and \eqref{eq:orthog rel exp},
to show the other statement, we invoke the fine structure \eqref{eq:fine structure of S4} of $S_{4}(n)$.
We have
\begin{equation*}
\begin{split}
&\int\limits_{\T} (D(x)D(x)^{t})^2 dx =
\frac{(2\pi)^4}{\eigspcdim_{n}^{4}}\sum\limits_{(\lambda_{1},\ldots,\lambda_{4})\in S_{4}(n)} \lambda_{1}\lambda_{2}^{t}\lambda_{3}\lambda_{4}^{t}
\\&= \frac{(2\pi)^4}{\eigspcdim_{n}^{4}}\left[\eigspcdim_{n}^2n^{2} + \sum\limits_{\lambda_{1},\lambda_{2}\in \Lambda_{n}} \langle \lambda_{1}, \lambda_{2} \rangle ^{2} +
\sum\limits_{\lambda_{1},\lambda_{2}\in \Lambda_{n}} \lambda_{1}\lambda_{2}^{t}\lambda_{2}\lambda_{1} +
O\left( \eigspcdim n^2  \right) \right].
\end{split}
\end{equation*}
The result of the present computation then follows upon making the simple observations
\begin{equation*}
\sum\limits_{\lambda\in \Lambda_{n}} \langle \lambda , \xi \rangle ^{2} = \frac{1}{2} \eigspcdim_{n} n
\| \xi \|^{2}
\end{equation*}
for every $\xi\in\R^{2}$ (see ~\cite{ORW}, Lemma 5.2) and
\begin{equation*}
\sum\limits_{\lambda\in\Lambda_{n}} \lambda^{t}\lambda = \frac{1}{2} \eigspcdim_{n} n \cdot I_{2}.
\end{equation*}

To show part \ref{it:int r^2DD^t} we note
\begin{equation*}
\int\limits_{\T} r(x)^2D(x)D(x)^{t}dx = -\frac{(2\pi)^{2}}{\eigspcdim^{4}}
\sum\limits_{(\lambda_{1},\ldots,\lambda_{4})\in S_{4}(n)} \lambda_{3}\lambda_{4}^{t},
\end{equation*}
and only tuples with $\lambda_{3}=-\lambda_{4}$ contribute to the latter summation,
i.e. those of the first type in \eqref{eq:fine structure of S4}.
The computation for part \ref{it:int tr(H^2), r^2tr(H^2)} is very similar
to what we encountered before, using \eqref{eq:H summation} with \eqref{eq:orthog rel exp}
for the first statement, and exploiting the fine structure \eqref{eq:fine structure of S4}
of $S_{4}(n)$ for the second one
\begin{equation*}
\int\limits_{\T} \tr(H(x)^2) dx = \frac{(4\pi^{2})^{2}}{\eigspcdim^{2}}
\sum\limits_{(\lambda_{1},\ldots,\lambda_{4})\in S_{4}(n)} \lambda_{3}^{t}\lambda_{3}\lambda_{4}^{t}
\lambda_{4}.
\end{equation*}

To compute the integrals in part \ref{it:int tr(H^4), tr(H^2)^2} we exploit Lemma
\ref{lem:B4=3/4+1/8hat(mu)(4)2}. Similarly to the previous computations, we have by
\eqref{eq:H summation} and \eqref{eq:orthog rel exp}
\begin{equation*}
\begin{split}
&\int\limits_{\T} \tr(H(x)^4) dx = \frac{(4\pi^{2})^{4}}{\eigspcdim_{n}^{4}}
\sum\limits_{(\lambda_{1},\ldots,\lambda_{4})\in S_{4}(n)}
\tr(\lambda_{1}^{t}\lambda_{1}\lambda_{2}^{t}\lambda_{2}\lambda_{3}^{t}\lambda_{3}\lambda_{4}^{t}\lambda_{4})
\\&= \frac{(4\pi^{2})^{4}}{\eigspcdim_{n}^{4}} \bigg[
\sum\limits_{\lambda_{1},\lambda_{2}\in\Lambda_{n}} \tr\left(\lambda_{1}^{t}\lambda_{1}\lambda_{1}^{t}\lambda_{1}\lambda_{2}^{t}\lambda_{2}\lambda_{2}^{t}\lambda_{2}\right)
+ \sum\limits_{\lambda_{1},\lambda_{2}\in\Lambda_{n}} \tr\left(\lambda_{1}^{t}\lambda_{1}\lambda_{2}^{t}\lambda_{2}\lambda_{1}^{t}\lambda_{1}\lambda_{2}^{t}\lambda_{2}\right)  + \\& \sum\limits_{\lambda_{1},\lambda_{2}\in\Lambda_{n}} \tr\left(\lambda_{1}^{t}\lambda_{1}\lambda_{2}^{t}\lambda_{2}\lambda_{2}^{t}\lambda_{2}\lambda_{1}^{t}\lambda_{1}\right)   \bigg] + O\left(  \frac{\eigval_{n}^{4}}{\eigspcdim_{n}^{3}} \right) \\&=
\frac{(4\pi^{2})^{4}}{\eigspcdim_{n}^{4}} \bigg[
n^2\sum\limits_{\lambda_{1},\lambda_{2}\in\Lambda_{n}} \langle \lambda_{1},\lambda_{2}\rangle ^2
+ \sum\limits_{\lambda_{1},\lambda_{2}\in\Lambda_{n}} \langle \lambda_{1},\lambda_{2}\rangle ^4
+n^2\sum\limits_{\lambda_{1},\lambda_{2}\in\Lambda_{n}} \langle \lambda_{1},\lambda_{2}\rangle ^2
\bigg] + O\left(  \frac{\eigval_{n}^{4}}{\eigspcdim_{n}^{3}} \right)
\\&= \frac{\eigval_{n}^{4}}{\eigspcdim_{n}^{2}} \left( 1+B_{4}(n)  \right)+ O\left(  \frac{\eigval_{n}^{4}}{\eigspcdim_{n}^{3}} \right) ,
\end{split}
\end{equation*}
where we used sums as above and the definition \eqref{eq:B4(n) def} of $B_{4}(n)$.
Using Lemma \ref{lem:B4=3/4+1/8hat(mu)(4)2}
we may then rewrite the latter expression in terms of $\widehat{\mu}_{n}(4)$, as in
the statement of the present lemma.

A similar computation shows that the second integral in part \ref{it:int tr(H^4), tr(H^2)^2}
is given by
\begin{equation*}
\int\limits_{\T} \tr(H(x)^2)^2 dx = \frac{\eigval_{n}^{4}}{\eigspcdim_{n}^{2}} \left( 1+2B_{4}(n)  \right)+ O\left(  \frac{\eigval_{n}^{4}}{\eigspcdim_{n}^{3}} \right)
\end{equation*}
and using Lemma \ref{lem:B4=3/4+1/8hat(mu)(4)2} again yields the result given.
Evaluating the integrals for parts \ref{it:int DD^t tr(H^2)}-\ref{it:int DH^2D^t} of
the present lemma is straightforward and very similar to the above computations,
and we omit it here.

We now prove part \ref{it:(DD^t)^3 << E^3R6} of the present lemma.
We have by symmetry
\begin{equation}
\label{eq:int ||D||^3 << int der^6}
\int\limits_{\T}(D(x)D(x)^t)^{3}dx = \int\limits_{\T}\| D(x) \|^6 dx \ll
\int\limits_{\T} \left( \frac{\partial r}{\partial x_{1}}  \right)^6 dx,
\end{equation}
whence
\begin{equation}
\label{eq:int der^6<<E^3int r^3}
\begin{split}
&\int\limits_{\T} \left( \frac{\partial r}{\partial x_{1}}  \right)^6 dx =
\frac{(2\pi)^6}{\eigspcdim_{n}^6}\int\limits_{\T} \left( \sum\limits_{\lambda=(\lambda^{1},\lambda^{2})\in\Lambda} \lambda^{1}
\sin(2\pi  \langle\lambda, x \rangle)  \right)^6 dx  \\&=
\frac{(2\pi)^6}{\eigspcdim_{n}^6}\sum\limits_{\substack{\lambda_{1},\ldots,\lambda_{6}\in\Lambda
\\ \sum\limits_{i=1}^{6}\lambda_{i} = 0}} \lambda_{1}^{1}\cdot \ldots\cdot \lambda_{6}^{1}
 \ll \frac{E^{3}}{\eigspcdim_{n}^6} \cdot |S_{6}(n)|
= E^{3}\int\limits_{\T}r(x)^6dx,
\end{split}
\end{equation}
by \eqref{eq:int r^6=1/N^6 S6(n)}, where we used the uniform bound
$$|\lambda_{i}^{1}| \le \sqrt{n} \ll \sqrt{E}. $$ The present statement of Lemma
\ref{lem:asymp r,D,H} then follows upon substituting
\eqref{eq:int der^6<<E^3int r^3} into \eqref{eq:int ||D||^3 << int der^6}.
The proofs for the last parts \ref{it:r^4DD^t << ER6} and \ref{it:tr (H^6)^3 << E^6R6}
of the present lemma are very similar and we omit them here.
\end{proof}

%  \section{Leftovers}
%
%
%
%  \comment{I think section 2 should be ``as clean as possible'', so I
%    moved this out of the way.  Feel free to add it elsewhere (or back in...)}
%  \begin{remark}
%  \label{rem:arithmetic essence}
%  Starting from the standard trigonometric identity
%  $$\cos(\theta)^{4} = \frac{3}{8}+\frac{1}{8}\cos(4\theta)+ \frac{1}{2}\cos(2\theta),$$ and using
%  the symmetries of $\Lambda_{n}$, it is easy to show that $\widehat{\mu}_{n}(4)$
%  is naturally related to the $4$th moment of the cosine of angle between two
%  $\Lambda_{n}$-points drawn uniformly and independently (see Lemma \ref{lem:B4=3/4+1/8hat(mu)(4)2}
%  and the notation $B_{4}$ immediately before).
%  We'll see that the latter expression comes out naturally from the asymptotic computation
%  of the variance (see the proof of Lemma \ref{lem:asymp r,D,H}); this, together with $\RR_{5}$ in the error term, is the reason for the appearance of the word ``arithmetic" in the name we gave to formula \eqref{eq:var as hat(mu)(4)}.
%  \end{remark}

%\bibliographystyle{abbrv}
%\bibliography{mybib,mypapers}

\end{document}